\newtheorem{theorem}{Theorem}
\newtheorem{lemma}{Lemma}
\newtheorem{proposition}{Proposition}
\theoremstyle{definition}
\newtheorem{definition}{Definition}
\theoremstyle{remark}
\newtheorem{remark}{Remark}
\newtheorem*{rep@theorem}{\rep@title}
\newcommand{\newreptheorem}[2]
{\newenvironment{rep#1}[1]
	{\def\rep@title{#2 \ref{##1}} \begin{rep@theorem}}%
		{\end{rep@theorem}}}
\newcommand{\figref}[1]{Figure~\ref{fig:#1}}
\newcommand{\secref}[1]{Section~\ref{sec:#1}}
\newcommand{\appref}[1]{Appendix~\ref{app:#1}}
\newcommand{\defref}[1]{Definition~\ref{def:#1}}
\newcommand{\lemref}[1]{Lemma~\ref{lem:#1}}
\newcommand{\propref}[1]{Proposition~\ref{prop:#1}}
\newcommand{\thmref}[1]{Theorem~\ref{thm:#1}}
\newcommand{\tabref}[1]{Table~\ref{tab:#1}}
\newcommand{\eqnref}[1]{\eqnref{eq:#1}}
\newcommand{\norm}[1]{\left\lVert{#1}\right\rVert}
\newcommand{\PP}[1]{\textnormal{Pr}\!\left\{{#1}\right\}} % Probability
\newcommand{\EE}[1]{\mathbb{E}\left[{#1}\right]} % Expectation
\newcommand{\EEst}[2]{\mathbb{E}\left[{#1}\ \middle| \ {#2}\right]} % Conditional expectation
\def\H0{\mathcal{H}^0}
\newcommand{\ident}{\mathbf{1}}
\newcommand{\ignore}[1]{}
\newcommand{\ones}{\mathbf{1}}
\newcommand{\gap}{\Delta^{\textnormal{gap}}}
\newcommand{\thedate}{\today}
\newcommand{\theauthor}{}
\newcommand{\thetitle}{From local to global gene co-expression estimation using single-cell RNA-seq data}
\date{\thedate}
\author{\theauthor}
\title{\thetitle}
\def\F{\mathcal{F}}
\def\H{\mathcal{H}}
\newcommand{\dotfrac}[2]{
	\mathchoice
	{\ooalign{$\genfrac{}{}{0pt}{0}{#1}{#2}$\cr\leavevmode\cleaders\hb@xt@ .22em{\hss $\displaystyle\cdot$\hss}\hfill\kern\z@\cr}}
	{\ooalign{$\genfrac{}{}{0pt}{1}{#1}{#2}$\cr\leavevmode\cleaders\hb@xt@ .22em{\hss $\textstyle\cdot$\hss}\hfill\kern\z@\cr}}
	{\ooalign{$\genfrac{}{}{0pt}{2}{#1}{#2}$\cr\leavevmode\cleaders\hb@xt@ .22em{\hss $\scriptstyle\cdot$\hss}\hfill\kern\z@\cr}}
	{\ooalign{$\genfrac{}{}{0pt}{3}{#1}{#2}$\cr\leavevmode\cleaders\hb@xt@ .22em{\hss $\scriptscriptstyle\cdot$\hss}\hfill\kern\z@\cr}}
}
\tikzstyle{none} = [rectangle, rounded corners, minimum width=0.5cm, minimum height=0.5cm,text centered, draw=black, fill=blue!20]
\tikzstyle{noneoff} = [rectangle, rounded corners, minimum width=0.5cm, minimum height=0.5cm,text centered, draw=black, fill=blue!5]
\tikzstyle{ada} = [rectangle, rounded corners, minimum width=0.5cm, minimum height=0.5cm, text centered, draw=black, fill=orange!20]
\tikzstyle{adaoff} = [rectangle, rounded corners, minimum width=0.5cm, minimum height=0.5cm, text centered, draw=black, fill=orange!5]
\tikzstyle{adadis} = [rectangle, rounded corners, minimum width=0.5cm, minimum height=0.5cm, text centered, draw=black, fill={rgb:red,10;green,20;yellow,54}]
\tikzstyle{adadisoff} = [rectangle, rounded corners, minimum width=0.5cm, minimum height=0.5cm, text centered, draw=black, fill=green!5]
\tikzstyle{arrow} = [thick,->,>=stealth, text width=3cm]
\tikzstyle{dotarrow} = [dashed, text width=3cm]
\tikzstyle{title} = [rectangle, rounded corners, minimum width=1cm, minimum height=1cm, text centered, draw=white, fill=none]
\begin{document}
	% \nipsfinalcopy is no longer used
	\author{
		Jinjin Tian, Jing Lei, Kathryn Roeder\\
		Department of Statistics and Data Science\\
		Carnegie Mellon University\\
		\texttt{\{jinjint,jinglei,roeder\}@andrew.cmu.edu}
	}
	\maketitle

\begin{abstract}
In genomics studies, the investigation of the gene relationship often brings important biological insights. Currently, the large heterogeneous datasets impose new challenges for statisticians because gene relationships are often local. They change from one sample point to another, may only exist in a subset of the sample, and can be non-linear or even non-monotone. Most previous dependence measures do not specifically target local dependence relationships, and the ones that do are computationally costly. In this paper, we explore a state-of-the-art network estimation technique that characterizes gene relationship at the single cell level, under the name of  \emph{cell-specific gene networks}. We first show that averaging the \emph{cell-specific gene relationship} over a population gives a novel univariate dependence measure that can detect any non-linear, non-monotone relationship. Together with a consistent nonparametric estimator, we establish its robustness on both the population and empirical levels. Simulations and real data analysis show that this measure outperforms existing independence measures like Pearson, Kendall's $\tau$, $\tau^\star$, distance correlation, HSIC, Hoeffding's D, HHG, and MIC, on various tasks. 
\end{abstract}

%%%%%%%%%%%%%%%%%%%%%%%%%%%%%%%%%%%%%%%%%%%%%%
%% Please use \tableofcontents for articles %%
%% with 50 pages and more                   %%
%%%%%%%%%%%%%%%%%%%%%%%%%%%%%%%%%%%%%%%%%%%%%%
%\tableofcontents

\section{Introduction}
Experimental biologists and clinicians seek a deeper understanding of biological processes and their link with disease phenotypes by characterizing cell behavior. Gene expression offers a fruitful avenue for insights into cellular traits and changes in cellular state.  Advances in technology that enable the measurement of RNA levels for individual cells via Single-cell RNA sequencing (scRNA-seq) significantly increase the potential to advance our understanding of the biology of disease by capturing the heterogeneity of expression at the cellular level \citep{Haque:2017}.  Gene differential expression analysis, which contrasts the marginal expression levels of genes between groups of cells, is the most commonly used mode of analysis to interrogate cellular heterogeneity. 
By contrast, the relational patterns of gene expression have received far less attention. The most intuitive relational effect is gene co-expression, a synchronization between gene expressions, which can vary dramatically among cells. Converging evidence has revealed the importance of co-expression among genes.  When looking at a collection of highly heterogeneous cells, such as cells from multiple cell types, significant gene co-expression may indicate rich cell-level structure. Alternatively, when looking at a batch of highly homogeneous cells, gene co-expression could imply gene cooperation through gene co-regulation \citep{raj2006stochastic,Emmert-Streib:2014}. %On the other hand,
Biochemistry offers a complementary motivation for the advantages of studying co-expression in addition to marginal expression levels of genes. The biological system of a cell is generally described by a non-linear dynamical system in which gene expression is variable \citep{raj2006stochastic}. Therefore, the observed gene expression level varies by time and condition, even within the same cell, while the cooperation between genes is more stable over time and condition. For this reason, it can be argued that co-expression may more reliably characterize the biological system or state of the cell \citep{dai2019cell}.  scRNA-seq,
allows us to investigate gene co-expression at different resolutions, to understand not only how genes interact with each other within different cells, but also how the interactions relate to cell heterogeneity.

The recent work by \cite{dai2019cell} attempts an ambitious task: characterizing the gene co-expression at a single cell level (termed  ``cell-specific network'' CSN). Specifically, for a pair of genes and a target cell, \citet{dai2019cell} construct a 2-way $2\times2$ contingency table test by binning all the cells based on whether they are in the marginal neighborhoods of the target cell and assigning the test results as a binary indicator of gene association in the target cell. Viewed over all gene pairs, the result is a cell-specific gene network.  %Emphasising less on the interpretation of detected associations, they use this concept more as a data transformation method: the aggregated detected associations for a gene in a cell is treated as a new measure of single-cell gene activeness level.
Forgoing interpretation of the detected associations, they utilize the CSN to obtain a data transformation.  Specifically, they replace the transcript counts in the gene-by-cell matrix with the degree sequence of each cell-specific network. %, considering this as a new measure of single-cell gene activeness level.
Although this data transformation shows encouraging success in various downstream tasks, such as cell clustering, it remains unclear what the detected ``cell-specific'' gene association network really represents. %, or what is the statistical mechanism behind this data transformation. 
The implementation details and interpretation of the results are presented at a heuristic level, making   
%Most interpretation in the paper remains on a highly heuristic level as well as the parameter choices. All these aspects make 
it difficult for others to appreciate and generalize this line of work. 

%A follow-up work, avgCSN \citep{wang2021constructing},  takes a first step in decipher this approach \citep{dai2019cell}. 
In a follow-up paper, \cite{wang2021constructing}  take the first steps to capitalize on the CSN approach by redirecting the concept to obtain an estimator of co-expression.   Specifically,
they propose averaging the ``cell specific" gene association indicators over cells in a class to recover a global measure of gene association (avgCSN). The resulting measure performs remarkably well in certain simulations and detailed empirical investigations of brain cell data. Compared to Pearson's correlation, the avgCSN gene co-expression appears less noisy and provides more accurate edge estimation in simulations. It is also more powerful in a test to uncover differential gene networks between diseased and control brain cells. Finally, it provides biologically meaningful gene networks in developing cells. 
%global gene association does have more interpretability and better performance compared to Pearson: it outputs less noisy gene co-expression and performs much better in terms of accurate edge estimation in simulation, as well as better uncovering of meaningful biological differential gene networks differences between diseased and healthy cases in real data analysis. 

The empirical success of avgCSN likely lies in the nature of gene expression data: often noisy, sparse and heterogeneous, meaning not all cells exhibit co-expression at all times due to cellular state and conditions. For this reason, a successful method must be robust and sensitive to local patterns of dependencies. %{\color{red} As avgCSN is essentially an average of a series of local contingency table tests, the testing nature limits error occurrence, while the non-negative summands ensure that local patterns are detected. To be more specific, non-negative summands ensure that detected local patterns are not canceled out; on the contrary, measures like Pearson can have both negative and positive summands, and therefore final value can be small even if the dependence structure is clear, as negatives can cancel out the positives.} 
{\color{black} Being an average of a series of binary local contingency table tests, the error in each entry of avgCSN is limited, meanwhile the non-negative summands ensure that local patterns are not cancelled out. By contrast, measures like Pearson's correlation can have both negative and positive summands, and therefore the final value can be small even if the dependence structure is clear for a subset of the cells.} To make the method more stable, \cite{wang2021constructing} proposed some heuristic and practical techniques to compute avgCSN, for which we would like to have more principled insights. Examples are the choice of window size in defining neighborhoods in the local contingency table test, the choice of thresholding in constructing an edge, and the range of cells to aggregate over. Many natural questions emerge: how does avgCSN relate to other gene co-expression measures and the full range of general univariate dependence measures, and why does it perform well in practice?  Through theoretical analysis and extensive experimental evaluations, we address these questions, revealing that avgCSN is an empirical estimator of a new dependency measure, which enjoys various advantages over the existing measures. 

For comparison, we briefly review the related work in gene co-expression measures and general univariate dependence.
Since the work by \citet{eisen1998cluster}, Pearson's correlation has been the most popular gene co-expression measure for its simple interpretation and fast computation. However, Pearson's correlation fails to detect non-linear relationships and is sensitive to outliers. Another class of co-expression methods is based on mutual information (MI) \citep{bell1962mutual,steuer2002mutual,daub2004estimating}. The computation of MI involves discretizing the data and tuning parameters, and the dependence measure does not have an interpretable scale.  \citet{reshef2011detecting} proposed the maximal information coefficient (MIC) as an extension of MI, but MIC was shown to be over-sensitive in practice. More comparisons of different co-expression measures and the constructed co-expression networks can be found in \cite{song2012comparison,allen2012comparing}.

In the broader statistical literature, the problem of finding gene co-expression is closely related to that of detecting univariate dependence between two random variables. Specifically, for a pair of univariate random variables $X, Y$, how to measure the dependence between them has been a long-standing problem. The problem is often described as finding a function $\delta(X,Y)$, which measures the discrepancy between the joint distribution $F_{XY}$ and product of marginal distribution $F_{X}F_{Y}$. Numerous solutions to this problem have been provided: include the Renyi correlation \citep{renyi1959measures} measuring the correlation between two variables after suitable transformations; various regression-based techniques; Hoeffding’s D \citep{hoeffding1948non}, distance correlation (dCor) \citep{szekely2007measuring}, kernel-based measure like HSIC \citep{gretton2005measuring} and rank based measure like Kendall's $\tau$ and the refinement later, $\tau^\star$ \citep{bergsma2014consistent}. Most of these methods have not yet been widely adopted in genetics applications. 

 Aside from avgCSN, the methods mentioned so far do not specifically target dependence relationships that are local and often assume the data are random samples from a common distribution (in contrast with a mixture distribution) in the theoretical analysis. However, real gene interactions may change as the intrinsic cellular state varies and may only exist under specific cellular conditions. Furthermore, with data integration now being a routine approach to combat the curse of dimensionality, samples from different experimental conditions or tissue types are likely to possess different gene relationships and thus create more complex situations for detecting gene interactions.  In this setting, much like avgCSN, an ideal measure accumulates subtle local dependencies, possibly only observed in a subset of the cells. A co-expression measure that aims to detect local patterns, developed by \cite{wang2014gene}, counts the proportion of matching patterns of local expression ranks as the measure of gene co-expression. Specifically, they aggregate the gene interactions across all subsamples of size $k$. However, despite its promising motivation, it has low power to detect non-monotone relationships.  MIC \citep{reshef2011detecting} and HHG \citet{heller2013consistent} are also measures that attempt to account for local patterns of dependencies.

In this paper, we first give a detailed review of the related methods in \secref{aLDGintr}. Then in \secref{pop}, we show that avgCSN is indeed an empirical estimate of a valid dependence measure, which we define as averaged Local Density Gap (aLDG). In \secref{robpop} and \secref{emp}, we formally establish its statistical properties, including estimation consistency and robustness. We also investigate data-adaptive hyperparameter selection to justify and refine the heuristic choices in application in \secref{chooset}. Finally, we provide a systematic comparison of aLDG and its competitors via both simulation and real data examples in \secref{aLDGcompare}.

\section{A brief review of dependence and association measures}\label{sec:aLDGintr}
Before starting on the description of the various dependence measures, let us remark that \citet{renyi1959measures} proposed that a measure of dependence between two stochastic variables $X$ and $Y$, $\delta(X,Y)$, should ideally have the following properties:
\begin{enumerate}
    \item[(i)] $\delta(X,Y)$ is defined for any $X,Y$ neither of which is constant with probability $1$.
    
    \item[(ii)] $\delta(X,Y)$=$\delta(Y,X)$.
    
    \item[(iii)] $0\leq \delta(X,Y) \leq 1$.
    
    \item[(iv)] $\delta(X,Y)=0$ if and only if $X$ and $Y$ are independent.
    
    \item[(v)] $\delta(X,Y)=1$ if either $X=g(Y)$ or $Y=f(X)$, where $f$ anf $g$ are measurable functions.
    
    \item[(vi)] If the Borel-measurable functions $f$ and $g$ map the real axis in a one-to-one way to itself, then $\delta(f(X),g(Y))=\delta(X,Y)$.
\end{enumerate}
Particularly, a measure satisfying (iv) is called a strong dependence measure. 

Apart from the above properties, there are two more properties that are particularly useful in single-cell data analysis. Single-cell data often contain a significant amount of noise, among which outliers account for a non-negligible fraction. Therefore \emph{robustness} is a desirable property in a dependence measure. Specifically, keeping with previous literature \citep{dhar2016study}, by robustness we mean that the value of the measure does not change much when a small contamination point mass, far away from the main population, is added. A formal description and corresponding evaluation metric will be described later. Another often overlooked property is \emph{locality}, which is a relatively novel concept and has not been properly defined to the best of our knowledge. Nevertheless, this concept has been catching attention over the recent decade \citep{reshef2011detecting,heller2013consistent,heller2016consistent,wang2014gene}, especially in work motivated by genetic data analysis. \emph{Locality} targets a special kind of dependence relationship that is generally restricted to a particular neighborhood in the sample space. A natural example is dependence that occurs in some, but not necessarily all of the components in a finite mixture.  Another is dependence within a moving time window in a time series. Generally speaking, the interactions change as the hidden condition varies, or only exist under a specific hidden condition. A dependence measure that is \emph{local} should be able to accumulate dependence in the local regions.  

No measure has all of the properties mentioned above, as far as we know. Our new measure possesses all but properties (v) and (vi). In the following, we review a selected list of univariate dependence measures in more details.

%\subsection{General univariate dependence measure}
\subsection{Moment based measures}
The first class of methods is based on various moment calculations. The main advantage is fast computation and minimum tuning, while the main drawback is non-robustness to outliers from their moment-based nature.

\paragraph*{Pearson's correlation} The simplest measure is the classical Pearson’s correlation:
\begin{equation}
    \text{Pearson's}\ \rho(X,Y):= \frac{\text{Cov}(X,Y)}{\sqrt{\text{Var}(X)\text{Var}(Y)}}.
\end{equation}
Plugin the sample estimation of covariance and variance, consistency and asymptotic normality can be proven using law of large numbers and the central limit theorem, respectively. Pearson's $\rho$ has been, and probably still is, the most extensively employed measure in statistics, machine learning, and real-world applications, due to its simplicity. However, it is known to detect only linear relationships. Also, as is the case for regression, it is well known that the product-moment estimator is sensitive to outliers: even just a single outlier may have substantial impact on the measure. 

\paragraph*{Maximal correlation} 
The maximal correlation (MC) is based on Pearson's $\rho$. It is constructed to avoid the problem that Pearson's $\rho$ can easily be zero even if there is strong dependence. \citet{gebelein1941statistische} first propose MC as
\begin{equation}
    \text{MC}(X,Y) := \sup_{f,g} \rho(f(X),g(X)).
\end{equation}
Here the supremum is taken over all Borel-measurable functions $f,g$ with finite and positive variance for $f(X)$ and $g(Y)$. The measure MC can detect non-linear relationships, and in fact, it is a strong dependence measure. However, often MC cannot be evaluated explicitly except in special cases, because there does not always exist functions $f_0$ and $g_0$ such that $\text{MC} = \rho(f_0(X), g_0(Y))$. Also, it has been found to be overly ``sensitive'', i.e. it gives high value for distributions arbitrarily ``close'' to independence in practice.

\paragraph*{Distance correlation} A recent surge of interests has been placed on using distance metrics to achieve consistent independence testing against all dependencies. A notable example is the distance correlation (dCor) proposed by \citet{szekely2007measuring}:  
\begin{align}
    \text{dCor} (X,Y)& := \frac{V(X,Y)}{\sqrt{V(X,X)V(Y,Y)}}, \\
    & \quad \text{where } V(X,Y)=\mathbb{E}{|X-X'||Y -Y'|}
+ \mathbb{E}{|X - X'|}\mathbb{E}{|Y - Y'|}\\
& \quad \quad \quad \quad \quad \quad \quad \quad \quad 
- 2\mathbb{E}_{X,Y}\Big[{\mathbb{E}_{X'}|X-X'| \mathbb{E}_{Y'}|Y - Y'|}\Big], \nonumber
\end{align}
with $(X',Y')$ an i.i.d copy of $(X,Y)$. The distance correlation enjoys universal consistency against any joint distribution of finite second moments; however, in practice, it does not work well for non-monotone relationship \citep{shen2020distance}. Also, it is not robust from its moment based nature, as proven by \citet{dhar2016study}.

\paragraph*{HSIC} Recall the definition and formula for the maximal correlation, about which we mentioned it is difficult to compute since it requires the supremum of the correlation $\rho(f(X),g(Y))$ taken over Borel-measurable $f$ and $g$. In the framework of reproducing kernel Hilbert spaces (RKHS), it is possible to pose this problem and compute an analogue of MC quite easily. A state-of-the-art method in this direction is the so-called Hilbert-Schmidt Independence Criterion (HSIC) \citep{gretton2005measuring}. Denote the support of $X$ and $Y$ as $\mathcal{X}$ and $\mathcal{Y}$ respectively, HSIC considers $f,g$ to be in RKHS $\mathcal{F}$ and $\mathcal{G}$ of functionals on sets $\mathcal{X}$ and $\mathcal{Y}$ respectively. Then HSIC is defined to be the Hilbert-Schmidt (HS) norm of a Hilbert-Schmidt operator. We refer the reader to \cite{gretton2005measuring} for detailed description.  What might be of interest is that, in many cases, HSIC is equivalent to dCor.
% What might be of interst is a striking similarity of HSIC with dCor. Let $k(\cdot,\cdot), l(\cdot,\cdot)$ be the kernel function\footnote{From Riesz representation theorem, for every $x \in \mathcal{X}$, there exists $k_x \in \mathcal{F}$, such that $f(x) = \langle f, k_x \rangle$. Apply this to $f=k_x$ and another point $x' \in \mathcal{X}$, we have $k_x(x') = \langle k_x, k_x'\rangle := k(x,x')$. The function $(x,x') \mapsto k(x,x')$ from $\mathcal{X} \times \mathcal{X}$ to $\mathbb{R}$ is the kernel of RKHS $\mathcal{F}$. Similarly we define the kernel $l(y,y')$ for RKHS $\mathcal{G}$.} on $\mathcal{F}$ and $\mathcal{G}$. Then we can write HSIC as
% \begin{align}
%     HSIC(X,Y) & = \mathbb{E}\left[k(X,X')l(Y,Y')\right]
% + \mathbb{E}\left[k(X, X')\right]\mathbb{E}\left[l(Y,Y') \right]\nonumber\\
% & \quad \quad \quad \quad \quad \quad \quad \quad \quad 
% - 2\mathbb{E}_{X,Y}\left[{\mathbb{E}_{X'}k(X,X') \mathbb{E}_{Y'}l(Y, Y')}\right],
% \end{align}
% with normalization, it is just distance correlation with Euclidean distance replaced by kernel function. One has that if the kernels $k$ and $l$ are universal (universal kernel is a mild continuity requirement on the kernel) on compact domains $\mathcal{X}$ and $\mathcal{Y}$, then HSIC is a strong dependence measure.

\subsection{Rank based measure}
Another line of work based on ordinal statistics is developed in parallel to the moment-based methods. A random variable $X$ is called ordinal if its possible values have an ordering, but no distance is assigned to pairs of outcomes. Ordinal data methods are often applied to data in order to achieve robustness. 

\paragraph*{Spearman's $\rho_S$, Kendall's $\tau$ and $\tau^\star$} The two most popular measures of dependence for ordinal random variables $X$ and $Y$ are Kendall’s $\tau$ and Spearman’s $\rho_S$. Both Kendall's $\tau$ and Spearman's $\rho_S$ are proportional to sign versions of the ordinary covariance, which can be seen from the following expressions for the covariance:
\begin{align*}
    \text{Cov}(X,Y) &= \frac12\EE{(X - X')(Y - Y')} \propto \text{Kendall} \\
       & =\EE{(X' -X'')(Y' -Y''')} \propto \text{Spearman},
\end{align*}
where $(X',Y'), (X'',Y''), (X''', Y''')$ are i.i.d replications of $(X,Y)$. Note that Kendall's $\tau$ is simpler than Spearman's $\rho_S$ in the sense that it can be defined using only two rather than three independent replications of $(X, Y)$, so often Kendall's $\tau$ is preferred. A concern from certain applications is that Kendall's $\tau$ and Spearman's $\rho_S$ are not \emph{strong} dependence measures, so tests based on them are inconsistent for the alternative of a general dependence. In fact, it is often observed that they have difficulty detecting nonmonotone relationship. Later, an extension $\tau^\star$ \citep{bergsma2014consistent} mitigates such deficiency by modifying Kendall's $\tau$ to a strong measure. 

\paragraph*{Hoeffding's D and BKR}
Related to the ordinal statistics-based methods, another class of methods start from the cumulative distribution function (CDF), some of which are equivalent to ordinal forms due to the relationship between CDF and ranks. The oldest example is the Hoeffing's D proposed by \citet{hoeffding1948non}:
\[
\text{Hoeffing's D} := \mathbb{E}_{X,Y} \Big[(F_{X,Y} - F_{X}F_{Y})^2\Big],
\]
where $F_X$, $F_Y$, $F_{X,Y}$ are the CDF of $X$, $Y$, $(X,Y)$ respectively. Still, Hoeffing's D is not a strong measure, while its modified version BKR \citep{blum1961distribution}:
\[
\text{BKR} := \mathbb{E}_{X}\mathbb{E}_{Y} \Big[(F_{X,Y} - F_{X}F_{Y})^2\Big]
\]
is. It turns out Hoeffding's D belongs to a more general family of coefficients, which can be formulated as
\[
\text{C}_{gh} := \int g(F_{X,Y} - F_{X}F_{Y}) d h(F_{XY})
\]
for some $g$ and $h$. We will abbreviate Hoeffding's D as HoeffD in the figures in the remainder of paper.

\subsection{Dependence measures aware of local patterns}

Most of the methods mentioned so far do not specifically target dependence relationships that can be local in nature. In the following, we describe a few measures that were designed to capture complex relationships, whether local or not. 

\paragraph*{Maximal Information Coefficient} The idea behind the Maximal Information Coefficient (MIC,\cite{reshef2011detecting} statistic consists in computing the mutual information locally over a grid in the data set and then take as statistic the maximum value of these local information measures over a suitable choice of grid. However, several examples were given in \citet{simon2014comment} and \citet{gorfine2012comment} where MIC is clearly inferior to dCor.

\paragraph*{HHG}
\citet{heller2013consistent} pointed out another way to account for local patterns: that is, looking at dependence locally and then aggregating the dependence over the local regions. The local regions is simply defined as bins via partitioning the sample space.
Additionally, HHG takes a multi-scale approach: multiple sample space partitions are conducted, and results are aggregated over all of them. This results in a provably consistent permutation test. However, the cost of implementation is significantly longer computation time than its competitors: it takes $O(n^3)$ computation time while its competitors normally take at most $O(n^2)$.

\paragraph*{Matching ranks} Another method that developed specifically for accounting local pattern 
is proposed by \citep{wang2014gene}. Given $n$ pair of observations of $(X,Y)$, $\{(x_i,y_i)\}_{i=1}^n$, they propose to count the number of size $k$ subsequences $(x_{i_1}, x_{i_2}, \dots x_{i_k})$ and $(y_{i_1}, y_{i_2}, \dots y_{i_k})$ such that their rank is matched. We refer to this measure as MR (Matching Ranks).  Specifically, we write the scaled version of MR such that it is in range [0,1]:
\begin{align*}
    \text{MR} :=   \frac{1}{2{n\choose k}}\sum_{1\leq i_1<i_2\dots<i_k \leq n} & \Big(\ident\{ rank(x_{i_1}, x_{i_2}, \dots x_{i_k}) = rank(y_{i_1}, y_{i_2}, \dots y_{i_k})\} \\
    & + \ident\{ rank(x_{i_1}, x_{i_2}, \dots x_{i_k}) = rank(-y_{i_1}, -y_{i_2}, \dots -y_{i_k})\}\Big),
\end{align*}
where $rank(a_1,\dots,a_k) = (r(a_1),\dots,r(a_k))$ where $r(a_i)$ is the rank of element $a_i$ within the sequences $(a_1,\dots,a_k)$, and the equality inside the indicator function applies element-wisely. Though claimed to be able to detect complex relationship, this measure is inferior to others in some non-monotone dependence case like quadratic relationship. 

\section{Our method: averaged Local Density Gap}\label{sec:aLDGdef}

%In the remainder of the paper, we consider only a pair of random variables $X,Y$ whose joint and marginal densities exist and have the same support, and denote $f_{XY}, f_{X}, f_{Y}$ as their joint and marginal densities. Also, let $\widehat{f}_{XY}, \widehat{f}_{X}, \widehat{f}_{Y}$ be the estimated densities given observations of $(X,Y)$, and $\widehat{p}_{X,Y}(x,y)$ be the proportion of samples points in a square of side length $h$ centered at $(x, y)$, and $\widehat{p}_{X}$ and $\widehat{p}_{Y}$ be defined similarly for the marginal distribution. Denote the sample size as $n$, and the $i$-th observation of $(X,Y)$ as $(X_i, Y_j)$.

First, we elaborate on the origin of our work, which was inspired by gene co-expression analysis using single-cell data. In the context of gene co-expression analysis, the pair of random variables $X,Y$ represents the expression level of a pair of genes, and the goal is to find the relationship between them. Pearson's correlation is one commonly used metric for this task. In light of the many shortcomings of this global measure of dependence, \citet{dai2019cell} proposed to characterize the gene relationships for every cell. Their method takes the following approach: for the gene pair $(X,Y)$, and a target cell $j$, partition the $n$ samples based on whether $|X_{\cdot} - X_j| < h_x$ and $|Y_{\cdot} - Y_j| < h_y$, where $h_x$ and $h_y$ are predefined window sizes. This partition can be summarized as a  $2 \times 2$ contingency table (\tabref{contingency}). Then evidence against independence in this $2\times 2$ table can be quantified by a general contingency table test statistic. \citet{dai2019cell} uses 
\begin{equation}
    S_{X,Y}^{(j)} := \frac{ \sqrt{n} \left(n_{x,y}^{(j)}n  - n_{x,\cdot}^{(j)} n_{\cdot,y}^{(j)}\right)}{\sqrt{n_{x,\cdot}^{(j)} n_{y}^{(j)} (n-n_{x,\cdot}^{(j)}) (n-n_{\cdot,y}^{(j)})}},
\end{equation}
and conducts a one-sided $\alpha$ level test based on its asymptotic normality, that is
\begin{equation}\label{contingency_test}
    I_{XY}^{(j)}:= \mathbb{I}\{S_{X,Y}^{(j)} > \Phi^{-1}
(1-\alpha)\}.
\end{equation}

\begin{table}[H]
\centering
    \begin{tabular}{c|c|c|c}
                   &  $|Y_{\cdot} - Y_j| \leq h_y$ &  $|Y_{\cdot} - Y_j| > h_y$ &   \\  \hline
     $|X_{\cdot} - X_j| \leq h_x$   &      $n_{x,y}^{(j)} $     &   & $n_{x,\cdot}^{(j)}$   \\  \hline
       $|X_{\cdot} - X_j| > h_x$   &            &  &    \\  \hline
      &  $n_{\cdot,y}^{(j)}$  &   & $n$
    \end{tabular}
    \caption{The $2\times 2$ contingency table based on distance from $j$-th sample.}
    \label{tab:contingency}
\end{table}
%{\color{red} \citet{dai2019cell} use $I_{XY}&{(j)}$ to indicate whether or not gene pairs $X$ and $Y$ are dependent in cell $j$, and refer to the detected dependence as \emph{local dependence}. Though interesting as a novel concept, it does not have sensible probabilistic meaning: in a simple mixture of independent Gaussians, the \emph{local dependence} \citet{dai2019cell} referred to should be zero, but $I_{XY}$ they used gives one for most of the cells. We formally claim that, $I_{XY}$ infers a way of violating \emph{local independence}, where we define $X$ and $Y$ being \emph{local independent} at position $(x,y)$ as 
{\color{black} \citet{dai2019cell} claim that  $I_{XY}{(j)}$ indicates whether or not gene pairs $X$ and $Y$ are dependent in cell $j$, and refer to the detected dependence as \emph{local dependence}. Though interesting as a novel concept, it lacks rigor and interpretability. Alternatively we propose to define $X$ and $Y$ as being \emph{locally independent} at position $(x,y)$ as 
\begin{equation}
    f_{XY}(x,y) = f_{X}(x)f_{Y}(y),
\end{equation}
then $I_{XY}$ provides a way of assessing \emph{local independence}.
Specifically, as a one-sided test, $I_{XY}(j)$ assesses whether or not $f_{XY}(x,y) > f_{X}(x)f_{Y}(y)$, at position $(x,y)$ marked by cell $j$.  To assess global independence, aggregation, as proposed by \citet{wang2021constructing}, is needed. Their empirical measure can be formally written as:
\begin{equation}\label{avgcsn}
    \text{avgCSN} := \frac{1}{n} \sum_{i=1}^n I_{XY} ^{(j)}.
\end{equation}
}
% One simple counter intuitive case is that, when $X,Y$ follows a two component Gaussian mixture distribution, e.g. their joint density 
% \[f_{X,Y} \propto 0.5 N_2([0,0]^\top,\mathbf{I}) + 0.5 N_2([1,1]^\top,\mathbf{I}),\]
% where $N_2$ represents the bivariate normal density. Then $X\not\perp Y$, but inside each component $X\perp Y$. From the conceptual level, one would expect $I_{XY}^{(j)}=0$ for most $j$, as it is a measure of \emph{local} dependence. However simulation shows 
% $I_{XY}^{(j)}=1$ instead for majority of $j$, which contradicts with its meaning. 
% It turns out that, $I_{XY}^{(j)}$ is not a valid test for local dependence, but it could be a valid contributor to global dependence. Following this intuition, \citet{wang2021constructing} proposed averaging the cell specific gene-gene relationship indicator across cells to represent a overall dependence level:
% A heuristic justification of avgCSN as a sensible dependence measure goes as follows: if $X$ and $Y$ are dependent and have a continuous joint density, then there exists at least one point $(x_0, y_0)$ in the sample space of $(X, Y)$, with radii $h_x$ and $h_y$ around $x_0$ and $y_0$, respectively, such that the joint distribution of $X$ and $Y$ is different than the product of the marginal distributions in the cartesian product of balls around $(x_0,y_0)$. Because it is impossible to choose the oracle position, we aggregate over all the possible locations, i.e. all the given sample points, to discover any points of dependence. 
% A more rigorous justification requires further notation. 
{\color{black}Some simple approximations gives us a population correspondence of avgCSN.} Assume the variables $X,Y$ have joint density $f_{XY}$, and marginal densities, $f_{X}$ and $f_{Y}$, that have common support.  Let $\widehat{f}_{XY}, \widehat{f}_{X}, \widehat{f}_{Y}$ be the estimated densities given observations of $(X,Y)$. 
Under the assumption that the bandwidth $h_x, h_y\to 0$ and $ \sqrt{h_x h_y n}\to\infty$, with some simple algebra (see \appref{derive} for detailed derivation), we see that
\begin{align}\label{deriveavgcsn}
    \text{avgCSN} &\approx \frac{1}{n} \sum_{i=1}^n \ones\left\{ \frac{\widehat{f}_{X,Y}(x_{i}, y_{i}) - \widehat{f}_{X}(x_{i}) \widehat{f}_{Y}(y_{i}) }{\sqrt{ \widehat{f}_{X}(x_{i})\widehat{f}_{Y}(y_{i})}} \geq t_n \right\},\quad \text{where } t_n = \frac{\Phi^{-1}(1-\alpha)}{\sqrt{n h_x h_y}},
\end{align}
and $\alpha\in[0,1]$ is some hyperparameter related to the test level of the local contingency test (usually $\alpha$ is set to 0.05 or 0.01). Because $t_n \downarrow 0$ as $n$ goes to infinity, we naturally think of the following population dependence measure:
\begin{equation*}
  \text{Pr}_{X,Y}\left\{ \frac{f_{X,Y}(X,Y) - f_X(X) f_{Y}(Y)}{\sqrt{f_{X}(X)f_{Y}(Y)}} > 0\right\}.
\end{equation*}

In the remainder of this section, we formally define a generalized version of this measure in \secref{pop}, along with its properties on the population level. Then we discuss consistent and robust estimation in \secref{emp} and provide guidance on hyper-parameter selection in \secref{chooset}. Finally, we comment on the relationship between our measure and some of the previous work in \secref{relation}.

% \begin{table}[H]
% {\scriptsize
% \begin{tabular}{c|c|c|c|c|c}
% \hline
%  Method  &  Expression (population level) & Computation & Robust & Strong & No-linearity \\ [10pt] 
%  \hline 
% Pearson & $\EE{XY}-\EE{X}\EE{Y}$ & $O(n)$ & No & No & No \\ [10pt] 
% Spearman's $\rho_S$ & $ \EE{\delta\Big((X_1-X_2)(Y_1-Y_2)\Big)}$  & $O(n\log{n})$ & $\checkmark$ & No & $\checkmark$ (monotone) \\ [10pt] 
% $\tau^{*}$  & $\EE{\delta\Big(a(X_1,X_2,X_3,X_4)\Big)\delta\Big(a(Y_1,Y_2,Y_3,Y_4)\Big)}$ & $O(n^2\log{n})$ & $\checkmark$ & $\checkmark$ & $\checkmark$\\ [10pt] 
% dCor
% & $ \EE{a(X_1,X_2,X_3,X_4)a(Y_1,Y_2,Y_3,Y_4)}^{\frac{1}{2}}$  & $O(n\log{n})$ & No & $\checkmark$ & $\checkmark$ (monotone)\\ [10pt] 
% $\textnormal{HSIC}$  & $ ||\EE{K(X) L(Y)} - \EE{K(X)}\EE{L(Y) }||_{HS}$ & $O(n^2)$ & $\checkmark$ & $\checkmark$ & $\checkmark$\\ [10pt] 
% $\textnormal{Hoeffdings'D}$ & $ \EE{F_{XY}-F_{X}F_{Y}}$ & $O(n^2)$ & $\checkmark$ & $\checkmark$ & $\checkmark$\\ [10pt] 
% \textbf{aLDG (new)} & $\EE{\delta(f_{XY} - f_X f_Y)}$ &  $O(n\log{n})$ & $\checkmark$ & $\checkmark$ & $\checkmark$\\
% \hline
% \end{tabular}
% }
% \caption{\label{tab:summary} Summary of popular dependency measure of a pair of random variables X and Y\footnote{For the expression we omit the constant factors and normalization. And the functions $\delta := \ones\{z > 0\}$ and $a(z_1, z_2, z_3, z_4) := \{|z_1-z_2|+|z_3-z_4|-|z_1-z_3|-|z_2-z_4|\}$, each $\{\cdot_{i}\}$ are i.i.d copies; and $K,L$ are kernel transformation.}.
% }
% \end{table}

\subsection{Definition and basic properties}\label{sec:pop}
\begin{definition}(averaged Local Density Gap)
Consider a pair of random variables $X,Y$ whose joint and marginal densities both exist, and denote $f_{XY}, f_{X}, f_{Y}$ as their joint and marginal densities. The averaged Local Density Gap (aLDG) measure is then defined as 
\begin{equation}
    \text{aLDG}_t := \text{Pr}_{X,Y}\left\{ T(X,Y) > t\right\}, \quad \text{where } T(X,Y):= \frac{f_{X,Y}(X,Y) - f_X(X) f_{Y}(Y)}{\sqrt{f_{X}(X)f_{Y}(Y)}}
\end{equation}
and $t\geq 0$ is a tunable hyper-parameter.
\end{definition}

\noindent
From the definition, one can immediately realize the following lemma.
\begin{lemma}\label{lem:simpleprop}
For a pair of random variables $X,Y$ whose joint and marginal densities both exist, we have
\begin{enumerate}
    \item $X \perp Y  \Longleftrightarrow \text{aLDG}_0 =0 $;
    
    \item  if $t>0$, then  $X \perp Y \Longrightarrow \text{aLDG}_t =0$;

    \item $\text{aLDG}_t \text{ is non-increasing with regard  } t$ for all $t \geq 0$;
    
    \item $\text{aLDG}_t \in [0,1]$;
    
    \item $\text{aLDG}_t(X,Y) = \text{aLDG}_t(Y,X)$;
\end{enumerate}
\end{lemma}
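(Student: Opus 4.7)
The plan is to verify each of the five items in turn, leaning on the definition of $T(X,Y)$ and elementary properties of densities and probabilities. I would handle the easy items first and save the nontrivial direction of item 1 for last.

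First I would dispatch items 3, 4, and 5. For item 3, note that $\{T(X,Y) > t\} \subseteq \{T(X,Y) > s\}$ whenever $s \leq t$, so the probability is monotone in $t$. Item 4 is immediate because $\text{aLDG}_t$ is a probability. For item 5, observe that $T(X,Y)$ is a symmetric function of its two arguments at the level of the joint density: $f_{XY}(x,y) = f_{YX}(y,x)$ and the denominator $\sqrt{f_X(x) f_Y(y)}$ is symmetric as well, so the event $\{T(X,Y) > t\}$ agrees with $\{T(Y,X) > t\}$ and the two probabilities are equal.

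Next I would handle item 2, which is essentially a corollary of the forward direction of item 1. If $X \perp Y$, then $f_{XY}(x,y) = f_X(x) f_Y(y)$ almost everywhere with respect to the joint distribution, so $T(X,Y) = 0$ a.s.\ and therefore $\text{Pr}\{T(X,Y) > t\} = 0$ for any $t \geq 0$, in particular for $t > 0$. The same argument gives the ``$\Rightarrow$'' direction of item 1.

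The main obstacle is the ``$\Leftarrow$'' direction of item 1, namely that $\text{aLDG}_0 = 0$ forces independence. My plan is to argue as follows. The assumption gives $\text{Pr}\{T(X,Y) > 0\} = 0$, which means that on the support of $f_{XY}$ the inequality $f_{XY}(X,Y) \leq f_X(X) f_Y(Y)$ holds almost surely. Integrating the nonnegative function $f_X(x) f_Y(y) - f_{XY}(x,y)$ over $\mathbb{R}^2$ (where the inequality in fact holds pointwise a.e., since outside the support of $f_{XY}$ we have $f_{XY} = 0 \leq f_X f_Y$) gives
\begin{equation*}
\int \int \bigl(f_X(x) f_Y(y) - f_{XY}(x,y)\bigr)\, dx\, dy = 1 - 1 = 0.
\end{equation*}
A nonnegative integrable function with zero integral must vanish almost everywhere, so $f_{XY} = f_X f_Y$ a.e., i.e.\ $X \perp Y$. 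The one subtlety I would be careful about is that $\text{Pr}\{T(X,Y) > 0\} = 0$ is a statement about probability under the joint law, so strictly it controls $f_{XY} \leq f_X f_Y$ only on the support of $f_{XY}$; the complement is handled automatically because $f_{XY}$ vanishes there. With that observation the argument closes cleanly.
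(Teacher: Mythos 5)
Your proof is correct. The paper offers no written proof of this lemma---it is asserted as ``immediate from the definition''---and your argument supplies exactly the details one would expect: items 2--5 are elementary, and your integration argument for the reverse implication in item 1 (the nonnegative function $f_Xf_Y - f_{XY}$ has total integral $1-1=0$, hence vanishes a.e.) is the standard and correct way to close the only nontrivial step. The subtlety you flag---that $\Pr\{T(X,Y)>0\}=0$ only controls the inequality on the set where $f_{XY}>0$, with the complement handled trivially---is precisely the point worth being careful about, and you handle it correctly.
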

As a concrete example of the $\text{aLDG}$ measure, the left plot of \figref{aLDGpop} displays $\text{aLDG}$, given different $t$ for a bivariate Gaussian with different choices of correlation. We can see that (1) $\text{aLDG}_t$ is non-increasing with regard $t$ as our \lemref{simpleprop} suggests; (2) $\text{aLDG}_t$ equals zero at independence for all $t\geq 0$, while $\text{aLDG}_0$ equals zero if and only if there is no dependence, as our \lemref{simpleprop} suggests; (3) $\text{aLDG}_t$ increases with the dependency level, indicating that it is a sensible dependence measure.

Note that, from \lemref{simpleprop}, $\text{aLDG}_0$ is a \emph{strong}\footnote{Recall that a measure of dependence between a pair of random variable $X,Y$ is \emph{strong} if it equals zero if and only if $X$ and $Y$ are independent.} measure of dependence. While being strong is a desirable feature of a dependence measure, for $\text{aLDG}$ type of measure, we find that it comes with the sacrifice of robustness under independence (\propref{indeprob}). On the other hand, setting $t>0$ could result in insensitivity under weak dependence, but with a provable guarantee of robustness (\thmref{aLDGrobpop}). In summary, the hyper-parameter $t$ serves as a trade-off between robustness and sensitivity. In \secref{chooset} we will discuss the practical choice of $t$ in more detail. For now, we treat it as a predefined non-negative constant.

\subsection{Robustness analysis}\label{sec:robpop}
In the following, we present a formal robustness analysis. An important tool to measure the robustness of a statistical measure is the influence function (IF). It measures the influence of an infinitesimal amount of contamination at a given value on the statistical measure. The Gross Error Sensitivity (GES) summarizes IF in a single index by measuring the maximal influence an observation could have. 
\begin{definition}[Influence function (IF) and Gross Error Sensitivity (GES)] Assume that the bivariate random variable $(X,Y)$ follows a distribution $F$, the influence function of a statistical functional $R$ at $F$ is defined as
\begin{align}\label{if}
    \text{IF}\big((x,y), R, F\big) := \lim_{\epsilon \to 0} \frac{R\big((1-\epsilon) F + \epsilon \delta_{(x,y)}\big) - R(F)}{\epsilon}
\end{align}
where $\delta_{(x,y)}$ is a Dirac measure putting all its mass at $(x,y)$. The Gross Error Sensitivity (GES) summarizes IF in a single index by measuring the maximal influence over all possible contamination locations, which is defined as
\begin{equation}
    \text{GES}(R, F) := \sup_{(x,y)} \mid \text{IF}\big((x,y), R, F\big)\mid.
\end{equation}
An estimator is called $B$-robust if its GES is bounded. 
\end{definition}

Among the related work we have mentioned, only the robustness of $\tau$, $\tau^\star$, and $\text{dCor}$ have been theoretically investigated to the best of our knowledge. \citet{dhar2016study} proved that $\text{dCor}$ is not robust while $\tau$ and $\tau^\star$ are. Their evaluation criteria is a bit different from ours. We investigate the limit of the ratio when the contamination mass goes to zero. They investigate the ratio limit when the contamination position goes far away, given fixed contamination mass. We argue that our analysis aligns better with the main statistical literature. In the following, we show that $\text{aLDG}_t$ with $t>0$ is $B$-robust, under some reasonable regularity conditions.

\begin{theorem}\label{thm:aLDGrobpop}
Consider $t>0$, and a bivariate distribution $F$ of variable $(X,Y)$ whose joint and marginal densities exist as $f_{XY}$, $f_{X}$, $f_{Y}$, and satisfy
\begin{equation}
    f_{\text{max}}:=||\sqrt{f_{X}f_{Y}}||_{\infty} <\infty; \quad \quad  |\text{aLDG}_{t - \epsilon} - \text{aLDG}_{t}| \leq L\epsilon,\ \forall \ \epsilon >0;
\end{equation}
then we have
\begin{equation}
    \text{GES}(\text{aLDG}_t, F) \leq L f_{\text{max}} +1 < \infty.
\end{equation}
\end{theorem}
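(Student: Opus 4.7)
The plan is to compute the influence function of $\text{aLDG}_t$ directly at an arbitrary contamination point $(x_0,y_0)$ and then take the supremum. Write $F_\epsilon := (1-\epsilon) F + \epsilon\,\delta_{(x_0,y_0)}$. Off the single point $(x_0,y_0)$, the joint and marginal densities of $F_\epsilon$ are simply $(1-\epsilon)$ times those of $F$. A direct simplification of $T^\epsilon$ then cancels two of the three factors of $(1-\epsilon)$ and yields the identity
\begin{equation*}
T^\epsilon(x,y) \;=\; T(x,y) + \epsilon\sqrt{f_X(x) f_Y(y)}
\end{equation*}
for every $(x,y)\neq(x_0,y_0)$. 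Because $\{(x_0,y_0)\}$ has $F$-measure zero, this identity holds $F$-almost surely.

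The next step is the decomposition
\begin{equation*}
\text{aLDG}_t(F_\epsilon) \;=\; (1-\epsilon)\,\text{Pr}_F\!\left\{T^\epsilon(X,Y)>t\right\} + \epsilon\,\ident\!\left\{T^\epsilon(x_0,y_0)>t\right\},
\end{equation*}
so that after subtracting $\text{aLDG}_t(F)$ and dividing by $\epsilon$, the difference quotient splits into two pieces. The first piece, $\tfrac{1-\epsilon}{\epsilon}[\text{Pr}_F\{T^\epsilon>t\} - \text{aLDG}_t(F)]$, is controlled by rewriting $\{T^\epsilon>t\}$ as $\{T > t - \epsilon\sqrt{f_X f_Y}\}$; the assumption $\sqrt{f_X f_Y}\leq f_{\max}$ then sandwiches $\text{Pr}_F\{T^\epsilon>t\}$ between $\text{aLDG}_t$ and $\text{aLDG}_{t-\epsilon f_{\max}}$, and the Lipschitz hypothesis (applied with $\epsilon' = \epsilon f_{\max}$) gives
\begin{equation*}
\bigl|\text{Pr}_F\{T^\epsilon>t\} - \text{aLDG}_t(F)\bigr| \;\leq\; L\epsilon f_{\max}.
\end{equation*}
Hence the first piece is bounded in magnitude by $(1-\epsilon) L f_{\max}$, which tends to $L f_{\max}$ as $\epsilon\downarrow 0$. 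The second piece, $\ident\{T^\epsilon(x_0,y_0)>t\} - \text{aLDG}_t(F)$, is trivially at most $1$ in absolute value since both summands lie in $[0,1]$. Passing to the limit and then to the supremum over $(x_0,y_0)$ gives $\text{GES}(\text{aLDG}_t,F)\leq L f_{\max} + 1$.

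The main obstacle is the bracket-bounding step: translating the additive perturbation of $T$ into a threshold shift of size $O(\epsilon f_{\max})$ and matching it against the Lipschitz condition stated in terms of $t$. Once this link is made, the rest is bookkeeping. A minor technicality is interpreting $T^\epsilon$ at the contamination point $(x_0,y_0)$ itself, where the mixture density formally contains point masses; this does not affect the proof because the corresponding indicator is bounded by $1$ regardless of how its argument is defined, and the $F$-measure-zero event $\{(X,Y)=(x_0,y_0)\}$ can be ignored when computing $\text{Pr}_F\{T^\epsilon>t\}$.
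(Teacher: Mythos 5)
Your proposal is correct and follows essentially the same route as the paper's proof: both compute the contaminated densities as $(1-\epsilon)$ times the originals off the point mass, observe that this shifts $T$ upward by exactly $\epsilon\sqrt{f_X f_Y}$, convert that into a threshold shift of at most $\epsilon f_{\max}$, and invoke the Lipschitz hypothesis. The one place where your write-up is actually more careful than the paper's is that your exact decomposition $\text{aLDG}_t(F_\epsilon) = (1-\epsilon)\,\text{Pr}_F\{T^\epsilon>t\} + \epsilon\,\ident\{T^\epsilon(x_0,y_0)>t\}$, combined with the two-sided sandwich $\text{aLDG}_t \leq \text{Pr}_F\{T^\epsilon>t\} \leq \text{aLDG}_{t-\epsilon f_{\max}}$, controls the absolute value of the influence function; the paper only chains upper bounds on $\text{aLDG}_t^{(x',y')}$ and hence only bounds the IF from above, whereas $\text{GES}$ is defined as $\sup |\text{IF}|$.
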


The proof of \thmref{aLDGrobpop} is in \appref{aLDGrobpop}.
The first assumption about the boundness of density is common in density based statistical analysis. The second assumption about the $\text{aLDG}_{t}$ smoothness may look less familiar, however after a transformation, it is no more than a CDF-smoothness assumption: recall that $T(X,Y) := \frac{f_{XY}(X)-f_{X}(X)f_{Y}(Y)}{\sqrt{f_{X}(X)f_{Y}(Y)}}$, then 
\begin{align}
     |\text{aLDG}_{t-\epsilon} - \text{aLDG}_{t}|< L\epsilon \Longleftrightarrow \mathbb{P}\{|T(X,Y)-t|\leq \epsilon\} \leq L\epsilon,
\end{align}
that is, the CDF of random variable $T(X,Y)$ is L-lipschitz around $t$ for $t>0$. In \figref{smooth} we show the empirical density of $T(X,Y)$ for bivariate Gaussian of different correlation, which is generally bounded by some constant $L$ at positive values.
\begin{figure}[H]
    \centering
    \includegraphics[width=1\linewidth]{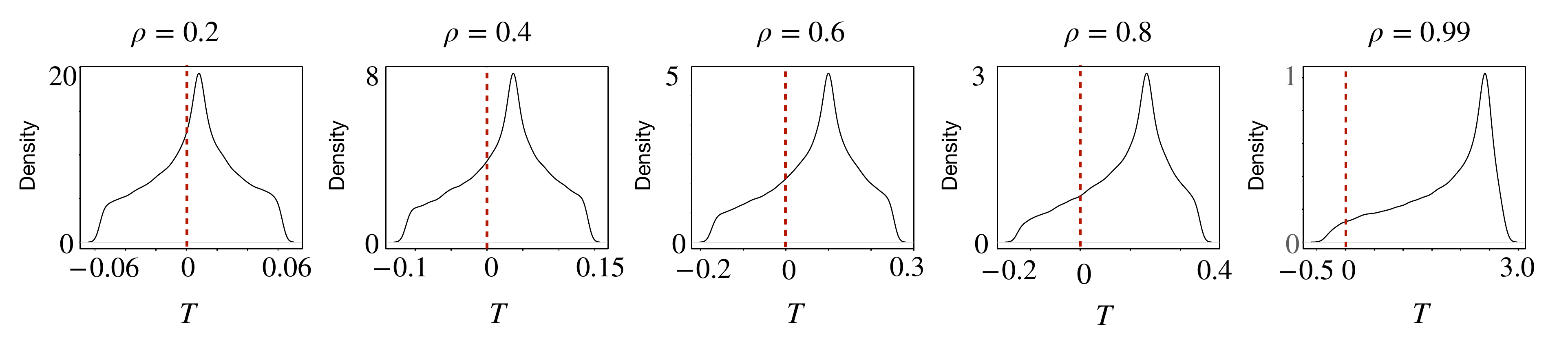}
    \caption{The empirical density of statistics $T$. The underlying bivariate distribution is Gaussian, and the value of $T$ is calculated using the true Gaussian density. We can see that, as the correlation increases, the density of $T$ near zero (annotated by the red dashed line) is smaller.}
    \label{fig:smooth}
\end{figure}
In the following, we show that $\text{aLDG}_0$ is not robust under independence. 
\begin{proposition}\label{prop:indeprob}
For any distribution $F$ over a pair of independent random variables $(X,Y)$ whose joint and marginal density exists and are smooth almost everywhere, we have
\begin{equation}
    \text{GES}(\text{aLDG}_0, F) = \infty
\end{equation}
if and only if X is independent of Y.
\end{proposition}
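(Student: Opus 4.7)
The plan is to exploit the knife-edge nature of independence with respect to the $\text{aLDG}_0$ functional. Under the hypothesized independence, $T(X,Y) \equiv 0$ almost surely since $f_{XY} = f_X f_Y$ makes the numerator of $T$ vanish, so $\text{aLDG}_0(F) = \text{Pr}\{T > 0\} = 0$. The strategy is to show that for \emph{any} contamination point $(x_0, y_0)$, the contaminated distribution $F_\epsilon = (1-\epsilon) F + \epsilon\, \delta_{(x_0, y_0)}$ satisfies $\text{aLDG}_0(F_\epsilon) \geq 1 - 2\epsilon$, which forces the influence-function limit to be $+\infty$ and hence $\text{GES} = \infty$.

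First I would decompose $F_\epsilon$ into its absolutely continuous and atomic parts. Since $F$ is absolutely continuous with smooth marginal densities, the marginals of $F_\epsilon$ consist of continuous densities $(1-\epsilon)f_X$ and $(1-\epsilon)f_Y$ away from $x_0$ and $y_0$ together with point masses of size $\epsilon$ at $x_0$ and $y_0$ respectively; and on the ``bulk'' set $A = \{x \neq x_0\} \cap \{y \neq y_0\}$ the joint density equals $(1-\epsilon) f_X(x) f_Y(y)$ by the product structure of $F$. A direct calculation on $A$ then gives
\begin{equation*}
T^\epsilon(x,y) \;=\; \frac{(1-\epsilon) f_X(x) f_Y(y) - (1-\epsilon)^2 f_X(x) f_Y(y)}{\sqrt{(1-\epsilon)^2 f_X(x) f_Y(y)}} \;=\; \epsilon\,\sqrt{f_X(x) f_Y(y)} \;>\; 0
\end{equation*}
wherever $f_X f_Y > 0$, that is, almost everywhere on the support of $F$ restricted to $A$.

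Next I would lower-bound the $F_\epsilon$-probability of $A$. Because $F$ is continuous, $\text{Pr}_F\{X = x_0\} = \text{Pr}_F\{Y = y_0\} = 0$, so under $F_\epsilon$ each of the events $\{X = x_0\}$ and $\{Y = y_0\}$ has probability exactly $\epsilon$, and a union bound gives $\text{Pr}_{F_\epsilon}(A) \geq 1 - 2\epsilon$. Combining with the previous step yields $\text{aLDG}_0(F_\epsilon) \geq 1 - 2\epsilon$, whence
\begin{equation*}
\text{IF}\bigl((x_0,y_0), \text{aLDG}_0, F\bigr) \;=\; \lim_{\epsilon \to 0^+}\frac{\text{aLDG}_0(F_\epsilon) - 0}{\epsilon} \;\geq\; \lim_{\epsilon \to 0^+} \frac{1 - 2\epsilon}{\epsilon} \;=\; +\infty.
\end{equation*}
Since this bound holds for every $(x_0, y_0)$, taking the supremum delivers $\text{GES}(\text{aLDG}_0, F) = \infty$.

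The main obstacle is handling $T^\epsilon$ carefully once $F_\epsilon$ is no longer absolutely continuous: the marginals and the joint simultaneously pick up singular components, so one must argue that $T^\epsilon$ is unambiguously defined on the bulk set $A$ (where only the continuous parts contribute) and that $A$ captures almost all of the $F_\epsilon$-mass. The key insight is that a pointwise contamination produces a \emph{global} rather than local effect on $T^\epsilon$: the asymmetry $(1-\epsilon) - (1-\epsilon)^2 = \epsilon(1-\epsilon)$ between the joint continuous density and the product-of-marginals continuous density is positive everywhere on $A$, not merely near $(x_0, y_0)$, producing a jump of order $1$ rather than of order $\epsilon$ in $\text{aLDG}_0$. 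This spreading effect relies critically on $F$ being a product measure; the reverse direction of the stated ``if and only if'' does not require a separate argument, since independence is assumed in the proposition's hypothesis.
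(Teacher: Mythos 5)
Your argument for the direction ``independence $\Rightarrow$ $\text{GES}=\infty$'' is essentially the paper's own proof: both of you observe that contaminating a product measure at any single point $(x_0,y_0)$ inflates the density gap to $(1-\epsilon)\epsilon f_X f_Y>0$ \emph{uniformly} on the bulk set where the atom plays no role, so that $T^\epsilon=\epsilon\sqrt{f_Xf_Y}>0$ almost everywhere, $\text{aLDG}_0(F_\epsilon)\geq 1-O(\epsilon)$, and the influence-function quotient blows up like $1/\epsilon$. The paper gets the lower bound $1-\epsilon$ where you get $1-2\epsilon$ from a union bound; this is immaterial. Your handling of the singular part (restricting to the bulk set $A$) is at the same level of rigor as the paper's informal ``density equals $\infty$ at the atom'' convention, and your closing remark correctly identifies the mechanism: the perturbation acts globally through the marginals, not locally.

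The one point of divergence is your dismissal of the ``only if'' direction. You are right that the proposition's wording puts independence in the hypothesis, which makes that direction logically vacuous as literally stated. But the paper's proof spends most of its length establishing exactly that converse: for dependent $(X,Y)$ with a.e.\ smooth densities, it shows $\text{GES}(\text{aLDG}_0,F)\leq 1-\text{aLDG}_0(F)+1/\|\nabla c_F(u_0,v_0)\|_2<\infty$ by rewriting $\Pr_F\{1-\epsilon<f_{XY}/(f_Xf_Y)\leq 1\}$ in terms of the copula density $c_F$ and controlling $\lim_{\epsilon\to0}\epsilon^{-1}\big(\text{Vol}(1)-\text{Vol}(1-\epsilon)\big)$ via a co-area--type identity for sublevel-set volumes. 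That finiteness claim is the substantive content of the biconditional (it is what distinguishes $\text{aLDG}_0$ from $\text{aLDG}_t$, $t>0$, in the robustness discussion), so if you intend to prove the proposition as the authors use it, you would need to supply this half as well; as written, your proof covers only the forward implication.
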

The proof of \propref{indeprob} is in \appref{indeprob}. 
The right plot in \figref{aLDGpop} provides some empirical evidence of the non-robustness of $\textnormal{aLDG}_0$ under independence. Specifically, we plot the population value of the ratio inside limitation \eqref{if}, under bivariate Gaussian with small enough contamination proportion $\epsilon$, to approximately show that the IF value of $\textnormal{aLDG}_t$ at independence indeed goes to infinity as $t$ goes to zero. 

\begin{figure}[H]
    \centering
    \includegraphics[width=\linewidth]{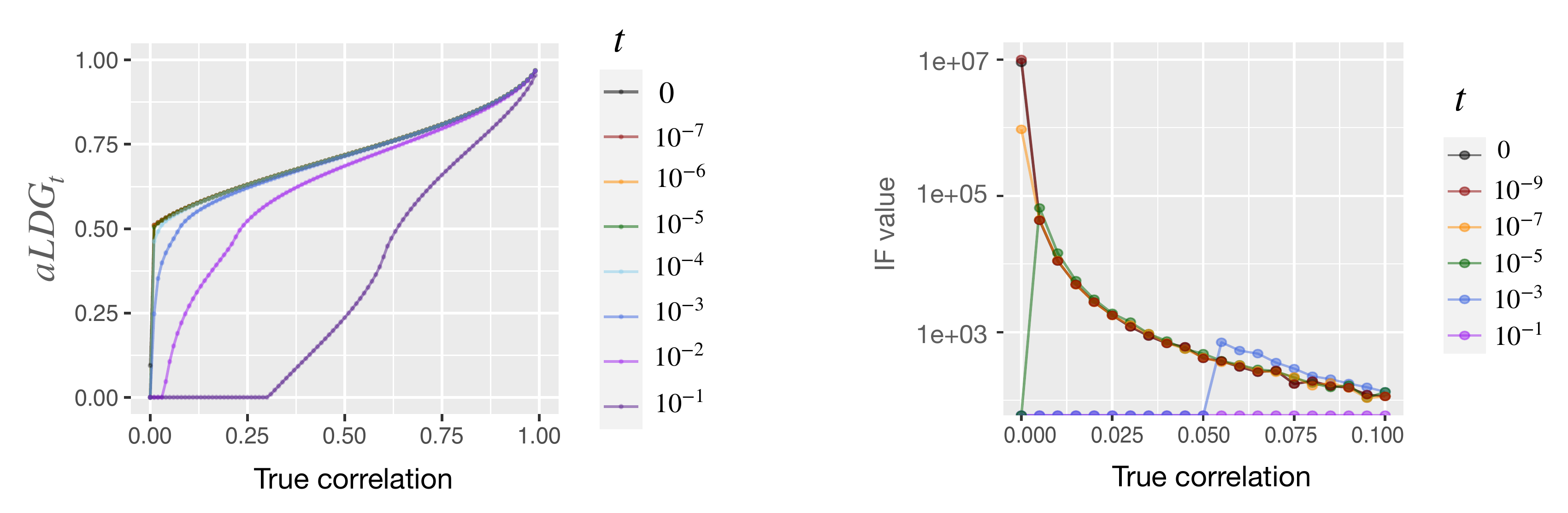}
    \caption{\textbf{(Left)} The true $\textnormal{aLDG}_t$ value for bivariate Gaussian with different levels of correlation under different choices of $t$. \textbf{(Right)} The influence function value approximated by setting the contamination proportion very small ($\epsilon = 10^{-6}$).}
    \label{fig:aLDGpop}
\end{figure}

% \begin{definition}
% For any $t>0$, a function $g: \mathcal{X}\to \mathbb{R}$ is said to be \textbf{smooth at level $t$ with respect to measure $P$}, if there exists constants $L>0$, $\epsilon_0>0$ such that for all $0< \epsilon < \epsilon_0$, 
% \begin{equation}
%     P \{ \bm{x}\in \mathcal{X}:  |g(x)-t|\leq \epsilon\} \leq L \epsilon.
% \end{equation}
% \end{definition}

\subsection{Consistent and robust estimation}\label{sec:emp}
In this section we investigate estimation of $\text{aLDG}_t$ given finite samples. One natural way to estimate $\text{aLDG}_t$ is using the following plug-in estimator: recall that $\widehat{f}_{XY}, \widehat{f}_{X}, \widehat{f}_{Y}$ are the estimated joint and marginal densities, then given $n$ observations $\{(x_1,y_1),\dots,(x_n,y_n)\}$ of $(X,Y)$, $\text{aLDG}_t$ can be estimated by 
\begin{align}
\label{eq:aLDGemp}
    \widehat{\text{aLDG}}_t & := \frac{1}{n}\sum_{i=1}^n \ones\left\{ \widehat{T}(x_i,y_i) \geq t \right\},\quad \text{where } \widehat{T}(x_i,y_i):=\frac{\widehat{f}_{X,Y}(x_{i}, y_{i}) - \widehat{f}_{X}(x_{i}) \widehat{f}_{Y}(y_{i}) }{\sqrt{ \widehat{f}_{X}(x_{i})\widehat{f}_{Y}(y_{i})}}
\end{align}
In the following, we establish the non-asymptotic high probability bound of the estimation error using the above simple plug-in estimator $\widehat{\text{aLDG}}_t$. The error rate is determined by the density estimation error for variable $X, Y$, as well as the probability estimation error for $T(X,Y)$. 
\begin{theorem}\label{thm:aLDGconsist}
Consider $t>0$, and a bivariate distribution $F$ of variable $(X,Y)$ whose joint and marginal densities exist as $f_{XY}$, $f_{X}$, $f_{Y}$, and satisfy
\begin{align*}
&\inf_{x,y}f_{XY}(x,y),\  \inf_{x}f_X(x) \inf_{y}f_Y(y) \geq c_{\min},\\
& \sup_{x,y}f_{XY}(x,y),\  \sup_{x}f_X(x) \sup_{y}f_Y(y) \leq c_{\max},
\end{align*}
and for some $\eta_n$ with $\lim_{n\to\infty}\eta_n \to 0$, with probability at least $1-\frac{1}{n}$ 
\begin{equation}
  ||\widehat{f}_{XY}-f_{XY}||_{\infty}, ||\widehat{f}_{X}-f_{X}||_{\infty}, ||\widehat{f}_{Y}-f_{Y}||_{\infty} \leq \eta_n; 
\end{equation} 
and for some constant $0<L<\infty$, 
\begin{equation}
    |\text{aLDG}_{t-\epsilon}-\text{aLDG}_t| \leq L\epsilon\quad  \text{for all} \ \epsilon>0.
\end{equation}
Then we have, with probability at least $1-\frac{2}{n}$, we have
\begin{equation}
    \left|\widehat{\text{aLDG}}_t - \text{aLDG}_t\right| \leq  LC\eta_n + \sqrt{\frac{2\log{n}}{n}},
\end{equation}
where $C$ depends only on $c_{\min}, c_{\max}$.
\end{theorem}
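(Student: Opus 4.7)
The plan is to introduce an ``oracle'' empirical version of $\text{aLDG}_t$ based on the true statistic $T$ and bridge between it and the plug-in estimator via a uniform control on $\widehat T - T$. Define
\[
\widetilde{\text{aLDG}}_t := \frac{1}{n}\sum_{i=1}^n \ones\{T(x_i,y_i) > t\},
\]
and decompose by the triangle inequality $|\widehat{\text{aLDG}}_t - \text{aLDG}_t| \leq |\widehat{\text{aLDG}}_t - \widetilde{\text{aLDG}}_t| + |\widetilde{\text{aLDG}}_t - \text{aLDG}_t|$. The second term is a Monte Carlo error: since the summands are i.i.d.\ Bernoulli$(\text{aLDG}_t)$, Hoeffding's inequality gives $|\widetilde{\text{aLDG}}_t - \text{aLDG}_t| \leq \sqrt{\log n / (2n)}$ with probability at least $1 - 2/n$, which is well within the target bound.

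For the first term, I would first establish a deterministic statement: on the event $\mathcal{E}_n$ that all three density sup-norm errors are at most $\eta_n$ (which holds with probability at least $1-1/n$ by a union bound applied to the three high-probability estimation guarantees), derive $\|\widehat T - T\|_\infty \leq C\eta_n$ with $C$ depending only on $c_{\min}, c_{\max}$. This comes from writing
\[
\widehat T - T = \frac{(\widehat f_{XY}-f_{XY}) - (\widehat f_X \widehat f_Y - f_X f_Y)}{\sqrt{\widehat f_X \widehat f_Y}} + (f_{XY} - f_X f_Y)\left(\frac{1}{\sqrt{\widehat f_X \widehat f_Y}} - \frac{1}{\sqrt{f_X f_Y}}\right),
\]
then using the upper and lower bounds $c_{\min}, c_{\max}$ on the marginal density products, together with the elementary inequality $|\sqrt{a}-\sqrt{b}| \leq |a-b|/(2\sqrt{\min(a,b)})$ to control the denominator perturbation. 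Each piece contributes an $O(\eta_n)$ term with constants expressible in terms of $c_{\min}, c_{\max}$ only (possibly requiring $\eta_n$ small enough so that $\widehat f_X \widehat f_Y$ is bounded below by $c_{\min}/2$, which holds for $n$ large).

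Given $\|\widehat T - T\|_\infty \leq C\eta_n$ on $\mathcal{E}_n$, the indicators $\ones\{\widehat T(x_i,y_i) > t\}$ and $\ones\{T(x_i,y_i) > t\}$ can disagree only when $T(x_i,y_i)$ lies in the slab $[t - C\eta_n, t + C\eta_n]$. Thus
\[
|\widehat{\text{aLDG}}_t - \widetilde{\text{aLDG}}_t| \leq \frac{1}{n}\sum_{i=1}^n \ones\{|T(x_i,y_i) - t| \leq C\eta_n\}.
\]
The expectation of this empirical average is $\Pr(|T(X,Y)-t| \leq C\eta_n) \leq 2LC\eta_n$ by the Lipschitz assumption on $\text{aLDG}$ around $t$ applied on each side of $t$, and a second Hoeffding step controls the empirical quantity by its mean up to $\sqrt{\log n/(2n)}$ with probability at least $1-1/n$. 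Combining the two concentration events by a union bound gives the stated $LC\eta_n + \sqrt{2\log n/n}$ rate, after absorbing numerical constants into $C$.

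The main obstacle is the uniform bound on $\widehat T - T$: the statistic $T$ has a density ratio in its denominator, so one must verify that the lower bound $c_{\min}$ transfers to $\widehat f_X \widehat f_Y$ for large enough $n$ before the ratio control is valid. A secondary subtlety is the interpretation of the Lipschitz assumption: as stated it is one-sided ($\epsilon > 0$ applied to $t - \epsilon$), but since $\text{aLDG}_t$ is monotone nonincreasing in $t$, the same $L$ controls both $\text{aLDG}_{t-\epsilon} - \text{aLDG}_t$ and $\text{aLDG}_t - \text{aLDG}_{t+\epsilon}$ under the natural reading that the Lipschitz constant is local to a neighborhood of $t$; this is what justifies the factor $2L C \eta_n$ bound on the slab probability.
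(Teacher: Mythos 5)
Your proposal is correct and follows essentially the same route as the paper's proof: both arguments hinge on the uniform bound $\sup_{x,y}|\widehat T - T|\le C\eta_n$ obtained from the density-error and $c_{\min},c_{\max}$ assumptions, two applications of Hoeffding's inequality to empirical measures of deterministic sets, and the Lipschitz condition on $t\mapsto \text{aLDG}_t$ to absorb the resulting $C\eta_n$ shift. The only cosmetic difference is that you route through the oracle estimator $\widetilde{\text{aLDG}}_t$ and count points in the slab $\{|T-t|\le C\eta_n\}$, whereas the paper sandwiches $\widehat{\text{aLDG}}_t$ between $\widehat P(S_{t+C\eta_n})$ and $\widehat P(S_{t-C\eta_n})$; since that slab is exactly $S_{t-C\eta_n}\setminus S_{t+C\eta_n}$ the two bookkeepings coincide, and the one-sided-versus-two-sided reading of the Lipschitz hypothesis that you flag is a (minor) issue the paper's own proof shares.
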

\thmref{aLDGconsist} is flexible in the sense that one can plug-in any kind of density estimator and its error rate to obtain the error rate of the corresponding $\widehat{\text{aLDG}}$ estimator. The proof of \thmref{aLDGconsist} is in \appref{pfconsist}. Though \thmref{aLDGconsist} was for fixed $t$, we also provide similar result that holds true uniformly over all possible $t$ in \appref{uniform}.

As for a concrete example, we provide explicit results for a special class of bivariate density and a simple density estimator. Specifically, we consider the true marginal density $f_X$, $f_Y$ that are L-Lipschitz, and the joint density $f_{XY}$ that are simply the product of $f_X$, $f_Y$; we also consider the following density estimator\footnote{The density estimator used here is not chosen to be minimax optimal. We instead design it to align the best with the practical methods \citet{dai2019cell} and \citet{wang2021constructing}, such that we can better justify and refine their heuristic choices of hyperparameter by theory.}:
\begin{align}\label{densest}
    &\widehat{f}_{X}(\cdot) = \frac{1}{n} \sum_{j=1}^n K_{h_n}(\cdot, x_j) , \quad \widehat{f}_{Y}(\cdot) = \frac{1}{n} \sum_{j=1}^n K_{h_n}(\cdot, y_j), \nonumber\\
    & \quad \widehat{f}_{XY}(\cdot, \cdot) = \frac{1}{n} \sum_{j=1}^n K_{h_n}(\cdot, x_j)K_{h_n}(\cdot, y_j),
\end{align}
where $K_{h_n}(\cdot,u):=\ones\{|\cdot-u|\leq h_n\}/(2 h_n)$ is one-dimensional boxcar kernel smoothing function with bandwidth $h_n$. From \propref{densest} in \appref{densest}, the uniform estimation error rate $\eta_n$ in this setting is $O(n^{-1/6}\sqrt{\log{n}})$, given the asymptotic near-optimal bandwidth $h = O(n^{-1/6})$. Therefore, applying \thmref{aLDGconsist} gives us estimation error rate of $O(n^{-1/6}\sqrt{\log{n}})$ for $\textnormal{aLDG}_t$. 

% \figref{consist} demonstrates empirical evidence of the proven consistency of $\widehat{\text{aLDG}}_t$ using the above simple boxcar kernel product density estimator. Particularly, we simulate $X,Y$ as correlated Gaussian variables, and compute $|\widehat{\text{aLDG}}_t - \text{aLDG}_t|$ given different number of samples. 

We also include robustness analysis of $\widehat{\text{aLDG}}_t$ in \appref{emprob}. Specifically, we consider an empirical contamination model that is commonly encountered in single-cell data analysis: a small proportion of the sample points are replaced by ``outliers'' far away from the rest samples. We show that $\widehat{\text{aLDG}}_t$ with and without outliers are close as long as the outlier proportion is small. This suggests that the estimator of $\text{aLDG}_t$ preserves its robust nature.

\subsection{Selection of hyper-parameter $t$}\label{sec:chooset}
In this section, we propose two methods for selecting $t$, each of which has merit. We also provide guidance on which one is preferable in different practice settings. %We leave the choices to readers when it comes to using.

\paragraph*{Uniform error method} From the results in the previous section, we learn that $\text{aLDG}_0$ is not robust under independence. To prevent $\widehat{\text{aLDG}}_t$ from approaching $\text{aLDG}_0$ under independence, it is sufficient to make sure that the estimation error of $T$ under independence is uniformly dominated by $t$ with high-probability. To compute the uniform estimation error of $T$ under independence, we first manually construct the independence case via random shuffle. Given $n$ samples $\{(x_i,y_i)\}_{i=1}^n$ of $(X,Y)$, denote the corresponding empirical joint distribution as $\widehat{F}_{XY}$, and marginal joint distribution as $\widehat{F}_{X}$ and $\widehat{F}_{Y}$. Applying the random shuffle function $\pi$ on indices of one dimension (i.e. $Y$), we have 
\begin{equation}
    \{(x_i,y_{\pi(i)})\}_{i=1}^n \sim \widehat{F}_{X}\widehat{F}_{Y},
\end{equation}
that is the shuffled samples  $\{(x_i,y_{\pi(i)})\}$ now come from a different joint distribution where $(X,Y)$ are independent. 

We can then use the shuffled samples to compute the uniform estimation error of $T$ under independence. Note that $T$ under independence is exactly zero, therefore its uniform estimation error is just the uniform upper bound of its estimation. To stabilize the estimation of such upper bound, we use the median of estimated upper bound from $\max\{\lfloor 1000/n \rfloor, 5\}$ different random shuffles as the final estimation. We call this $t$ selection method the \emph{uniform error} method.

% Recall that we estimate $T$ as 
% \begin{equation}
%     \widehat{T}(x,y):=\ones\left\{ \frac{\widehat{f}_{X,Y}(x, y) - \widehat{f}_{X}(x) \widehat{f}_{Y}(y) }{\sqrt{ \widehat{f}_{X}(x)\widehat{f}_{Y}(y)}} \geq t \right\},
% \end{equation}
% where $\widehat{f}_{X,Y}$, $\widehat{f}_{X}$ and $\widehat{f}_{Y}$ are estimator of the joint and marginal density of $(X,Y)$ given $n$ samples of $(X,Y)$, therefore uniform estimation error of $T$ is just $\sup_{(x,y)}\widehat{T}(x,y)$. We further approximate this supreme by $\max_{i}\widehat{T}_i$, where
% \begin{equation}
%     \widehat{T}_i:=\ones\left\{ \frac{\widehat{f}_{X,Y}(x_{i}, y_{z}) - \widehat{f}_{X}(x_{i}) \widehat{f}_{Y}(y_{i}) }{\sqrt{ \widehat{f}_{X}(x_{i})\widehat{f}_{Y}(y_{i)})}} \geq t \right\}.
% \end{equation}

\paragraph*{Asymptotic norm method} When using $\textnormal{aLDG}_t$ in large-scale data analysis, choosing $t$ using the above data-dependent choice may be undesirable because it requires additional computations. In extensive simulations we observe that a simple alternative also performs fine in terms of maintaining consistency, power and robustness:
\begin{equation}\label{choosetnorm}
    t = \Phi^{-1}\left(1-\frac{1}{n}\right)\Big/\left(\sqrt{\sigma_{X}\sigma_Y} n^{1/3}\right).
\end{equation} 
 This choice is motivated by the following heuristic. Recall our derivation of aLDG statistics from avgCSN around \eqref{deriveavgcsn}: as the sample size $n$ goes to infinity, and $h_x, h_y \to 0$, $h_xh_yn\to\infty$, the empirical estimation of $\text{aLDG}_t$ using the boxcar kernel cioncide with avgCSN. Therefore, $t_n$ in \eqref{deriveavgcsn} could serve as a natural choice for $t$, but one need to be extra careful about $\alpha$, which is the test level of local contingency test \eqref{contingency_test} in definition towards avgCSN. We specically modify $\alpha$ to decrease with $n$ instead of a fixed value like $0.05$ since we desire consistency: i.e. $\text{aLDG}_t$ under independence should goes to zero as $n$ goes to infinity. Finally, plugging in our choice of bandwidth $h_x = \sigma_X n^{-1/6}$, $h_y = \sigma_Y n^{-1/6}$ together with the new $\alpha_n$ in place of $\alpha$ into $t_n$ \eqref{deriveavgcsn}, we get \eqref{choosetnorm}.  We call this $t$ selection method the \emph{asymptotic norm} method. 
 
Empirically we find that the \emph{asymptotic norm} method is often too conservative given the small sample size (which is expected since it is based on the asymptotic normality of a contingency table test statistic). In practice, we recommend people use \emph{uniform error} over \emph{asymptotic norm} when the sample size is not too big (e.g., no bigger than 200). When the sample size is big enough (e.g., bigger than 200), and the computation budget is limited, we recommend the \emph{asymptotic norm} method. In the rest of the paper, we use the \emph{uniform error} method when the sample size is no bigger than 200 and the \emph{asymptotic norm} method when the sample size is bigger than 200. We admit that there could be other promising ways of selecting $t$, for example, a geometry way we provided in \appref{chooset}. Here we only present the methods that we found working the best after a careful evaluation (see \appref{chooset}).

\subsection{Relationships to HHG}\label{sec:relation}
The method that is most similar to aLDG is HHG (\cite{heller2013consistent}).
%: it is consistent by using the rank information and the Pearson’s chi-square test but has better finite-sample testing powers over dCor in a collection of common non-linear dependencies. 
Like aLDG, HHG \citep{heller2013consistent} is based on aggregation of multiple contrasts between the local joint and marginal distributions 
\begin{align*}
   HHG := \sum_{i\neq j} M(i,j),\quad M(i,j) := (n-2) \frac{\Big(p_{XY}(B_{XY}^{i,j})-p_{X}(B_{X}^{i,j})p_{Y}(B_Y^{ij})\Big)^2}{p_{X}(B_{X}^{i,j})\Big(1-p_{X}(B_{X}^{i,j})\Big)p_{Y}(B_Y^{ij})\Big(1-p_{Y}(B_Y^{ij})\Big)},
\end{align*}
with $B_{X}^{i,j} = \{x: |x-x_i|\leq |x_i - x_j|\}$, $B_{Y}^{i,j} = \{y: |y-y_i|\leq |y_i - y_j|\}$ and $B_{XY}^{i,j} = B_{X}^{i,j}  \otimes B_{Y}^{i,j} $, $p_{XY}, p_X, p_Y$ are joint probability function for $(X,Y)$ and marginal probability function for $X$ and $Y$ respectively. While the two measures appear quite similar, they differ in two critical aspects.

\paragraph*{The efficiency of single scale bandwidth}  One notable difference between HHG and aLDG is that the former relies on a multi-scale choice of bandwidth for each sample point.  Specifically, it  utilizes multiple ($O(n)$) bandwidths for each data point. This results in a provably consistent permutation test; however, the cost of implementation is significantly longer computation time than its competitors. aLDG takes a single-scale approach, which considerably improves the computation efficiency.  Moreover, the aLDG formulation provides a direct analogy to a density functional, which allows us to exploit existing work in density estimation to determine an appropriate bandwidth. This single-scale approach, though may not optimal, achieves comparable power to HHG, as shown in the upcoming simulation studies.

\paragraph*{The merit of thresholding} Another difference is that empirically aLDG aggregates over thresholded summands, see \eqref{eq:aLDGemp}. It turns out thresholding brings implicit robustness to noise. By contrast, consider the non-thresholded version of aLDG:
\begin{equation}
    \text{aLDG}_{non}:= \EE{T(X,Y)}.
\end{equation}
Even with slight departures from independence, $\text{aLDG}_{non}$ can go to infinity.  For example, consider the following joint and marginal distribution that admits a kernel product density mixture:
\begin{align*}
    & f_{XY}(x,y) = \alpha k_{0,r}(x)k_{0,r}(y) + (1-\alpha)k_{0,1}(x)k_{0,1}(y),\\
    & f_{X}(x) = \alpha k_{0,r}(x) + (1-\alpha)k_{0,1}(x), \quad f_{Y}(y) = \alpha k_{0,r}(y) + (1-\alpha)k_{0,1}(y)
\end{align*}
where $\alpha \in (0,1)$, $0<r\ll1$ and  $k_{\mu,r}(\cdot):=\frac{1}{r}k(\frac{\cdot-\mu}{r})$, with $k$ as the density of 1-dim uniform distribution supported on $[-1,1]$. 

Note that as $\alpha\to0$ and $r\to 0$, the model is essentially an independence case contaminated with a small point mass.  Additionally with $\alpha/r \to \infty$, we can show that (see \appref{thred} for details)
\begin{equation}\label{nonthred}
   \EE{T(X,Y)} \approx \frac{\alpha}{r} \to \infty,
\end{equation}
that is the non-thresholded version of $\text{aLDG}$ is very large under such simple case of small departure from independence, therefore is problematic.  With thresholding, however, $\text{aLDG}$ is guaranteed to be approximately $\alpha$, which goes to zero for small perturbations, as one would desire.

%%%%%%%%%%%%%%%%%%%%%%%%%%%%%%%%%%%%%%%%%%%%%%%%%%%%%%%%%%%%%%%

\section{Empirical evaluation}\label{sec:aLDGcompare}

%%%%%%%%%%%%%%%%%%%%%%%%%%%%%%%%%%%%%%%%%%%%%%%%%%%%%%%%%%%%%%%
\subsection{Single-cell data application}\label{sec:real}
In this section, we evaluate aLDG among the other measures using scRNA-seq data from two studies.

\paragraph*{Chu dataset} This dataset \citep{chu2016single} contains 1018 cells
of human embryonic stem cell-derived lineage-specific progenitors. The seven cell types, including H1 embryonic stem cells (H1), H9 embryonic stem cells (H9), human foreskin fibroblasts (HFF), neuronal progenitor cells (NPC), definitive endoderm cells (DEC), endothelial cells (EC), and trophoblast-like cells (TB), were identified by fluorescence-activated cell sorting (FACS) with their respective markers. On
average, 9600 genes are measured per cell. In the following, we show some special gene pairs that exhibit strong, weak, or no relational patterns and the corresponding dependence values produced by different measures. We find that only aLDG gives a high value for strong relational patterns no matter how complex the pattern composition is; maintains near-zero values for known independent cases; and avoids a spurious relationship skewed by technical noise and sparsity (Figure \ref{fig:realbi}).

\begin{figure}[H]
    \centering
    \includegraphics[width=0.8
    \linewidth]{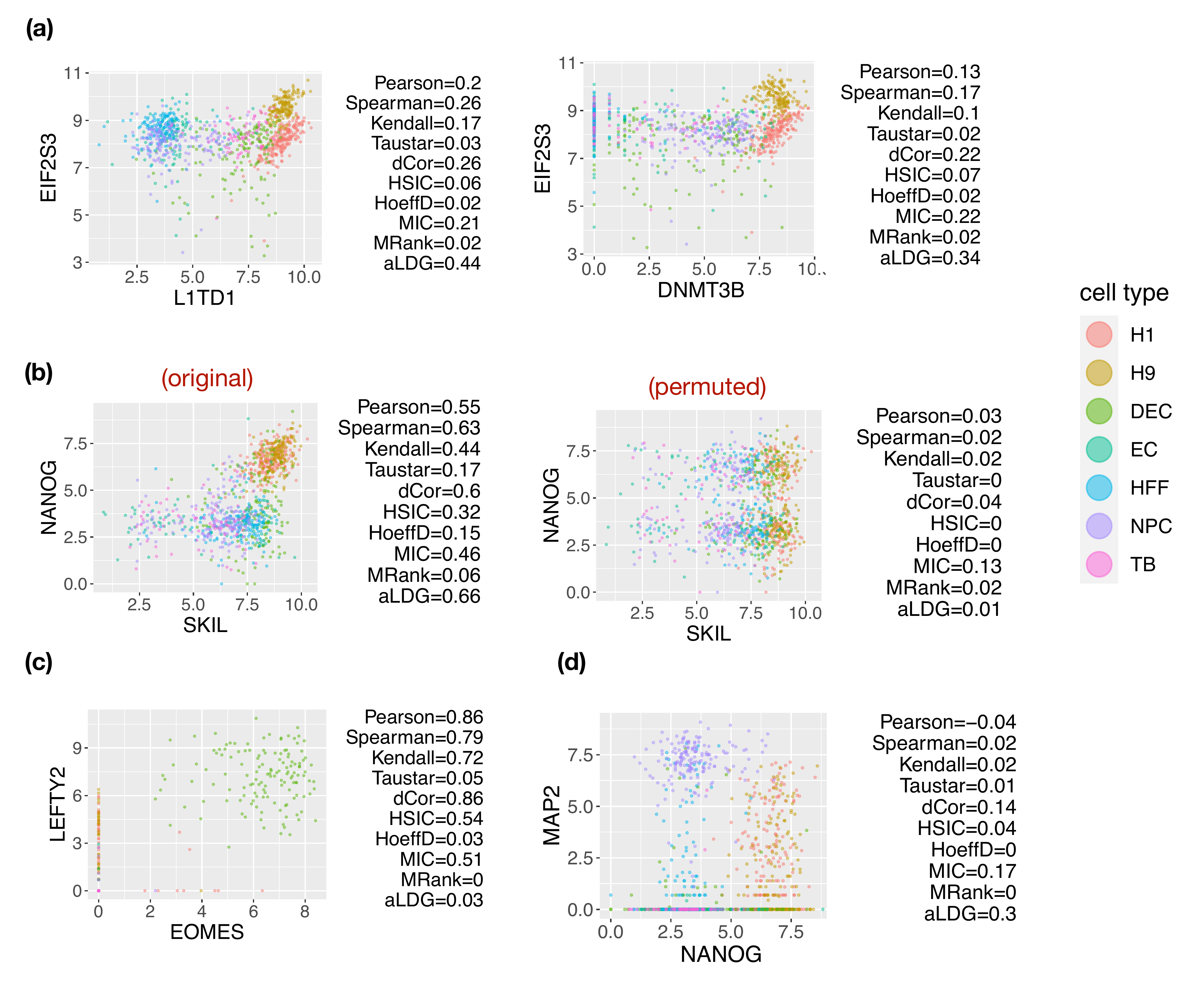}
    \caption{Example of gene pair scatter plots from the Chu dataset, which has 1018 cells from 7 cell types. Gene expression is recorded as counts per million (CPM) and $\log_2$ transformed. In each plot, we show the scatter plot of $\log_2(\text{CPM}+1)$ for a pair of genes and provide the corresponding estimated dependence values using different methods to the right of the plots.  \textbf{(a)} aLDG gives a much higher value than the others in these scenarios which appear to illustrate a strong mixture dependence pattern, even when the signal is predominantly in one cell type. \textbf{(b)} aLDG produces a high value for the obvious three mixture relationship in the first subplot. By contrast, in the second subplot, the cell identity are randomly shuffled for each gene pair, resulting in a constructed case of independence. Most measures, including aLDG, give near-zero values in this setting. The exception is MIC, which gives a misleadingly high value.  \textbf{(c)} This example illustrate performance when there is a high level of sparsity: MIC and the moment-based methods like Pearson, dCor, and HSIC provide estimates that are greatly overestimated, while aLDG, TauStar, and Hoeffding's D are not influenced by this phenomenon.   \textbf{(d)} This gene pair combines the challenge of sparsity with considerable noise: aLDG is still able to capture the less noisy, local cluster pattern in the upper left corner. }
    \label{fig:realbi}
\end{figure}

\paragraph*{Autism Spectrum Disorder (ASD) Brain dataset}
\citet{velmeshev2019single} includes scRNA-seq data
from an ASD study that collected 105 thousand nuclei from cortical samples taken from 22 ASD and 19 control samples. Samples were matched for age, sex, RNA integrity number, and postmortem interval. In the following, we compare control and ASD groups by testing for differences in their gene co-expression matrices using the sparse-Leading-Eigenvalue-Driven (sLED) test \citep{zhu2017testing}. sLED takes the gene co-expression matrices for both control and ASD groups as input, and outputs a $p$-value indicating the significance of their difference. This method is particularly designed to detect differential signals attributable to a small fraction of the genes. To emphasize the contrast with differentially expressed genes, \cite{wang2021constructing} call these differential network genes.  

Here we compare the power of the test for various co-expression measures. We use cells classified as L2/3 excitatory neurons (414 cells from ASD samples and 358 from control samples) and a set of 50 genes chosen randomly among the top 500 genes deferentially expressed between ASD and control samples. In addition, we manually add noise by randomly swapping 10\% of the control and ASD labels in the original data to see which measures detect the signal in the presence of greater noise. We omit HHG for this task as it requires too much computation time. Boxplots of $p$-values from sLED test across 10 independent trials (different random swapping each trial) are shown for all the remaining measures (\figref{vel23power}). Among the remaining measures, we find that HSIC, $\tau^\star$, Hoeffding's D, MIC, and aLDG perform well compared to Pearson, Spearman, Kendall, MRank and dCor. A visualization of the corresponding control versus ASD co-expression differences is displayed in \figref{vel23}, showing that the winners produce difference matrices with a few dominating entries, which is favored by the sLED test, while the others produce relatively flat and noisy patterns.

\begin{figure}[H]
    \centering
    \includegraphics[width=0.6\linewidth]{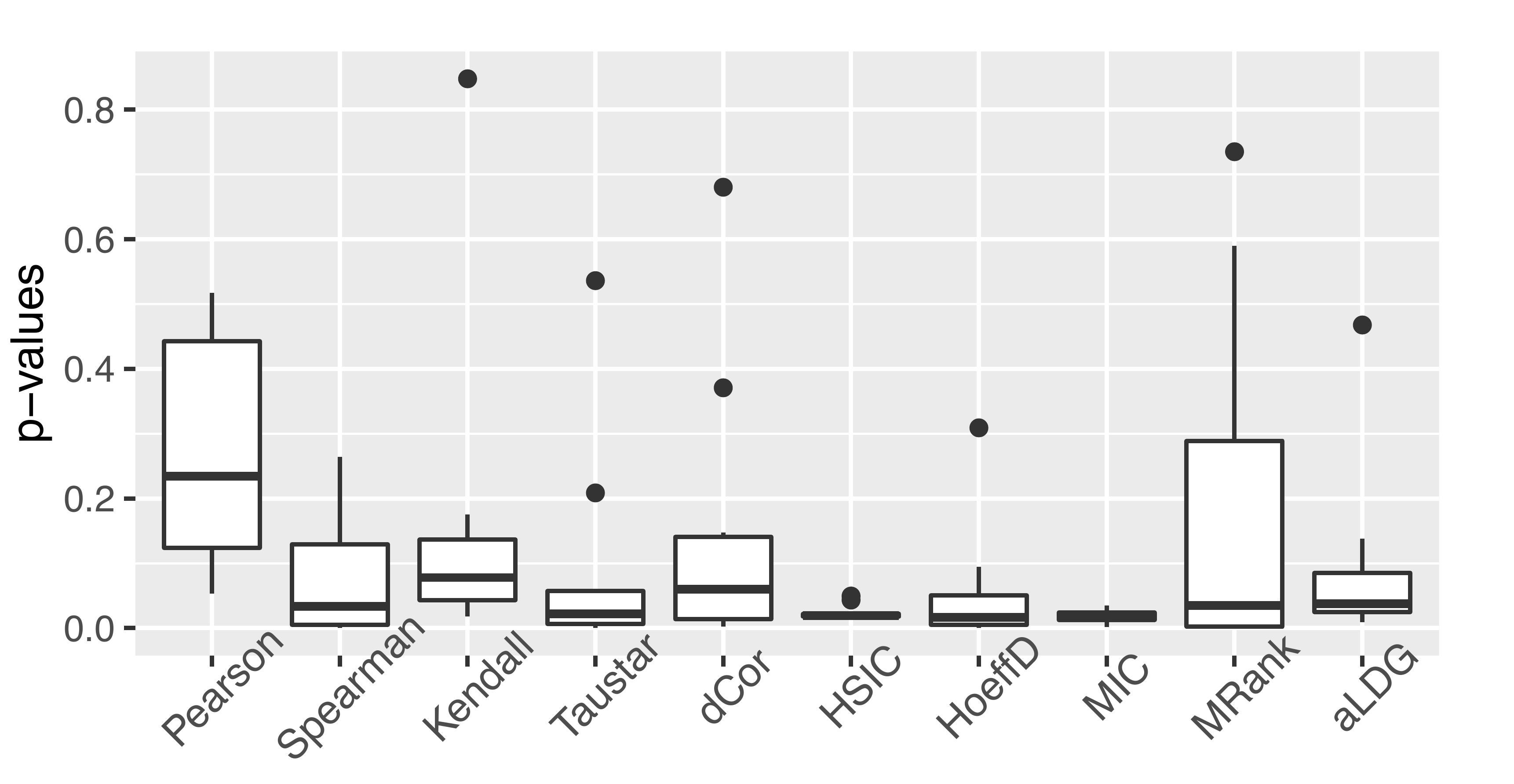}
    \caption{The estimated $p$-values obtained using sLED permutation tests for different dependency measures. We manually added noise by randomly swapping 10\% of the control and ASD labels in the original data to see which measures detect the signal in the presence of greater noise. Boxplots show the results from 10 independent repetitions. }
    \label{fig:vel23power}
\end{figure}

\begin{figure}[H]
    \centering
    \includegraphics[width=\linewidth]{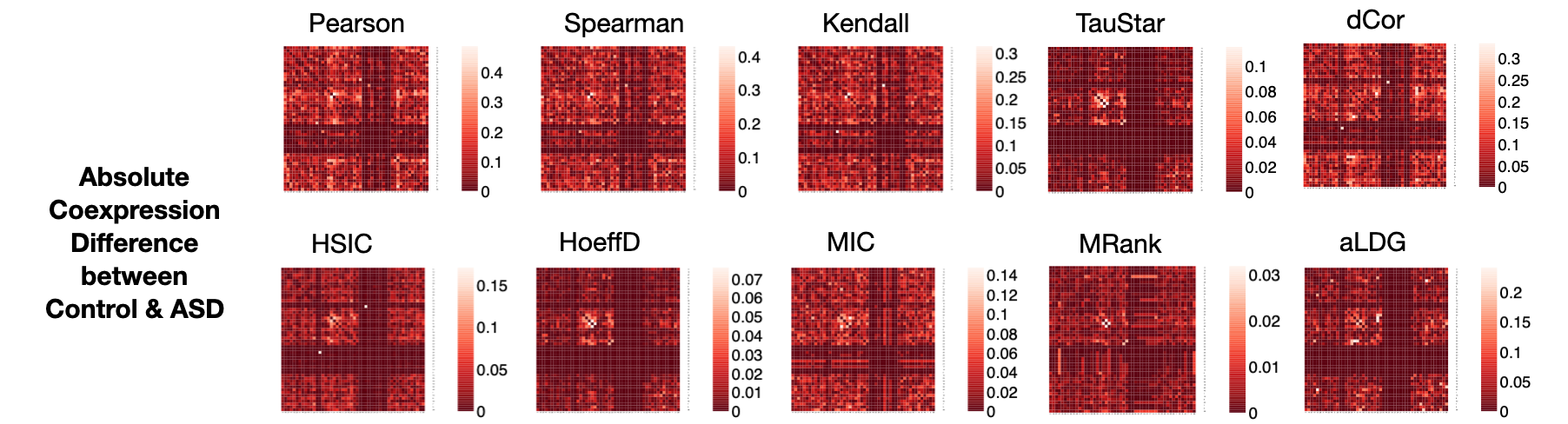}
    \caption{Estimated co-expression differences matrices (i.e. the absolute differences of the dependency matrices for control samples and ASD samples) obtained for different dependency measures.}
    \label{fig:vel23}
\end{figure}

\subsection{Simulation results}\label{sec:simu}
 In this section, we consider simulations that resembling  single-cell data to gain insights underlying the behavior of aLDG relative to the other methods. Specifically, we investigate scenarios where the bivariate relationship is (1) finite mixture; (2) linear or nonlinear; (3) monotone or non-monotone. See \figref{data} for all the synthetic data distributions we considered. We evaluate each dependence measure from the following perspective: (1) ability to capture complex relationship; (2) ability to accumulate subtle local dependence; (3) interpretation of strength of dependence in common sense; (4) power as an independence test; and (5) computation time. In the following, we focus on one perspective in each subsection, showing selective examples that inform our conclusions, relegating other examples to supplementary materials.

% {\color{blue} Summarize the data simulation process and show corresponding scatter plot? Motivate those choices? (Maybe put in appendix though)\\ KR:I think these details go to appendix.
% \begin{itemize}
%     \item Mixture relationship
%     \item Independent relationship
%     \item Linear relationship
%     \item non-linear relationship
%     \item non-monotone relationship
% \end{itemize}
% } 

\paragraph*%{aLDG detects non-linear, non-monotone relationships}
{Detecting nonlinear, non-monotone relationships} By construction, aLDG is expected to detect any non-negligible deviation from independence. Though many existing measures, such as HSIC, Hoeffding's D, dCor, $\tau^\star$, claim to be sensitive to nonlinear, non-monotone relationships, some approaches are known to perform poorly under certain circumstances.  By contrast, aLDG outperforms most of its competitors in the following standard evaluation experiment.  \figref{nonlinear} illustrates three points: (1) at independence, except for dCor, HHG, and MIC, most measures produce negligible values, as desired; (2) for linear and monotone relationship, all measures produce high values as expected; and (3) for nonlinear non-monotone relationships only aLDG, dCor, HHG and MIC produce high values consistently. In conclusion, only aLDG can effectively detect various types of dependency relationships while maintaining near-zero value at independence. dCor, HHG, and MIC are known to be sensitive to small, artificial deviations from independence, and these simulations reveal that they are indeed too sensitive as they often produce high values at independence.  A big portion of scRNA-seq data are collected over time; therefore, nonlinear, non-monotone and specifically oscillatory relationships are expected to happen. Therefore it is desirable to have a measure that is sensitive to dependence while remaining near zero of true independence, even under small perturbations.

\begin{figure}[H]
    \centering
    \includegraphics[width=\linewidth]{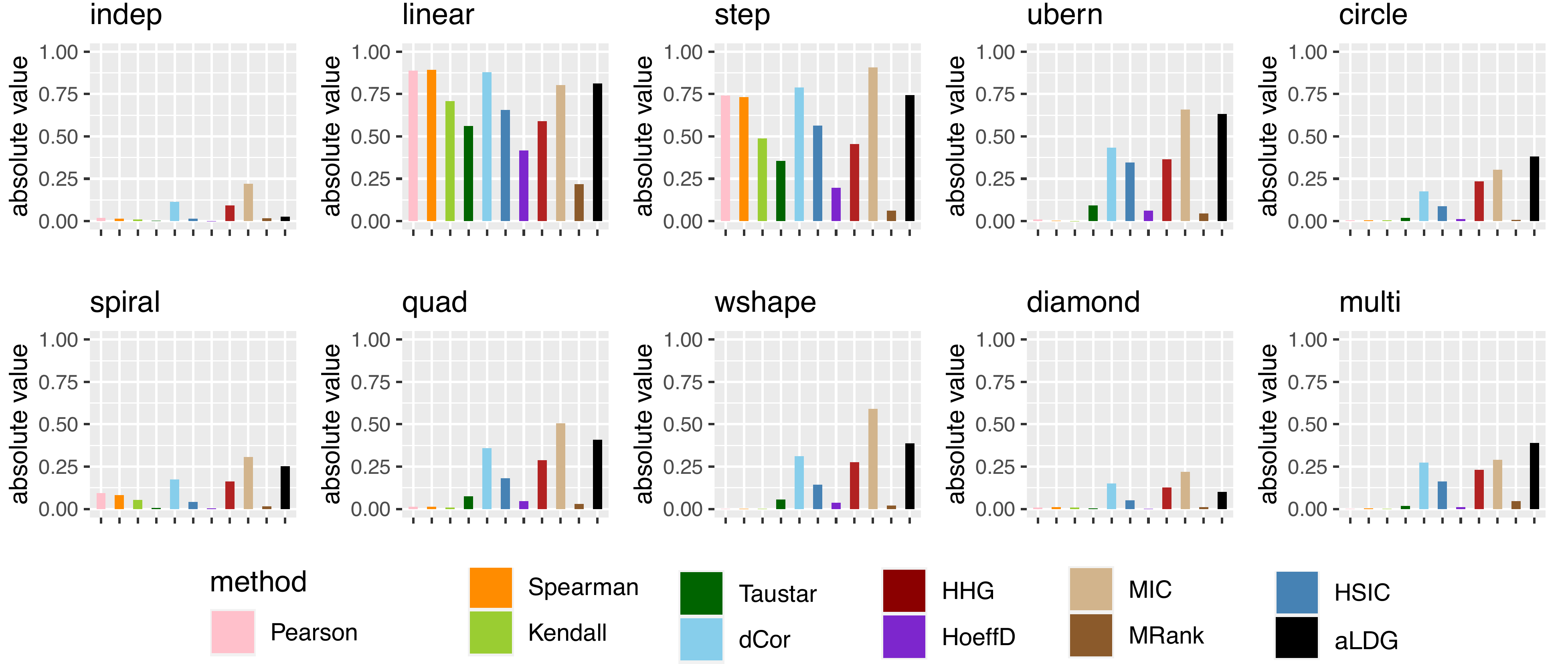}
    \caption{Empirical dependency estimates obtained for different data distributions for a variety of relationships between a pair of variables. For the visualization of different data distributions, see \figref{data}. Here we show the corresponding dependence level given by different measures using 200 samples (averaged over 50 trials).}
    \label{fig:nonlinear}
\end{figure}

\paragraph*%{aLDG accumulates subtle local dependencies} 
{Accumulating subtle local dependencies} aLDG detects the subset of the sample space that shows a pattern of dependence. In \figref{mix}, we simulated data as a bivariate Gaussian mixture consisting of three components with a varying proportion of highly dependent components and estimated the corresponding dependence level.  We find that aLDG, together with other dependence measures designed to capture local dependence (HHG and MIC)  increase with the proportion of highly correlated components, indicates that these global dependence measures can also detect subtle local dependence structure. Similar results are obtained for Negative Binomial mixtures \figref{nbmix}. As the finite mixture relationship is a common choice of model for scRNA-seq data, this suggests that measures able to accumulate dependencies across individual components could considerably benefit scRNA-seq data analysis. 

\begin{figure}[H]
    \centering
    \includegraphics[width=\linewidth]{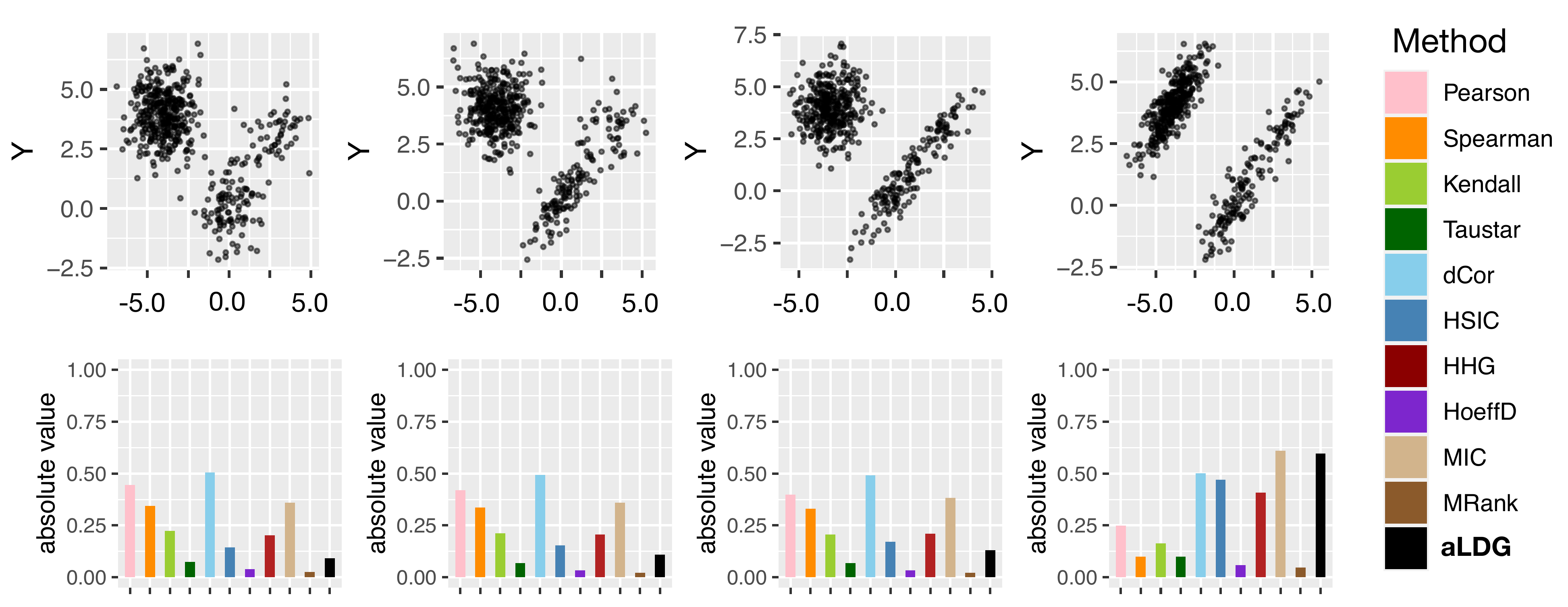}
    \caption{Empirical aLDG value for Gaussian mixtures. In each plot we show the dependence level given by different measures for 200 samples (averaged over 50 trials). The data are generated as a three-component Gaussian mixture. From left to right, there are 0, 1, 2 and 3 out of 3 components with correlation of 0.8, while the remaining components have correlation 0, i.e., the dependence level increases from left to right. For the visualization of these different data distributions, see \figref{data}.
    % {\color{red} I think the independence panel you show in Fig 8 is needed to understand Fig 6.  The issue is that dCor, HHG and MIC give pretty big measures, even under independence. When you compare them without any scaling, it is confusing that dCor looks so strong.  Alternatively, I suggest that you move the section going with Fig 8 to the first display of simulations.  The appeal is that those experiments are familiar and show dLDG works well in the expected setting.  Then you can refer back to the independence panel in the upcoming sections.  }
    }
    \label{fig:mix}
\end{figure}

\paragraph*%{aLDG interprets degree of dependencies} 
{Degree of dependencies} While it is hard to define the relative dependence level in general, we argue that when one random variable is a function of the other,  $Y=h(X)$, then the pair should be regarded as having the perfect dependence (and be assigned of dependence level $1$). Moreover, the dependence level should decrease as independent noise is added. That is, for $Y_\epsilon = h(X) + \epsilon$, where $\epsilon \perp X$, one should expect the dependence measure $\delta$ to satisfy  $\delta(Y_{\epsilon},X) < \delta(Y,X)$.  We checked this monotonicity property by simulating data with several bivariate relationships and varying levels of noise (\figref{mono}).   Specifically, we simulate the noise $\epsilon$ to be standard normal, and $Y = h(X) + c\epsilon$ where $c\in[0,1]$ indicates the noise level. We find that aLDG, HSIC, MIC, dCor, and HHG all show a clear decreasing pattern as the noise level increases; however, aLDG shows the most consistent monotonic drop from perfect dependence as the noise level increased.

\begin{figure}[H]
    \includegraphics[width=1\linewidth]{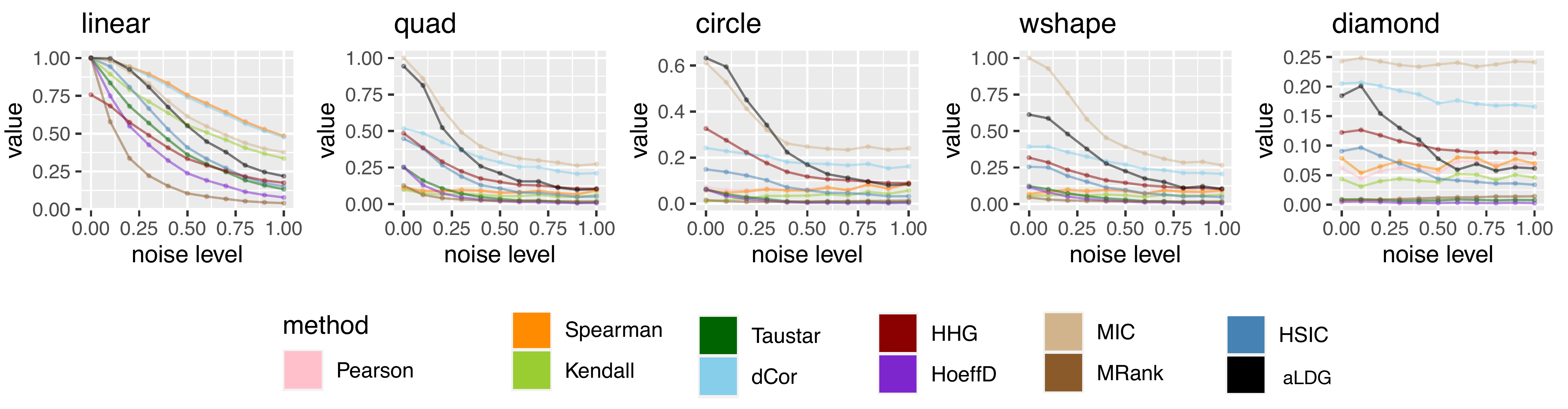}
    \caption{Empirical dependence measure versus noise levels for different bivariate relationships. For the visualization of different data distributions, see \figref{data}. The results are shown for 100 samples (averaged over 50 trials). We claim that the higher the noise level is, the lower the estimated degree of dependence should be. Compared with other measures, aLDG decreases significantly as the noise level increases, and hence correctly infers the relative degree of dependence. }
    \label{fig:mono}
\end{figure}

\paragraph*%{aLDG is powerful as an independence test} 
{Power as an independence test} Dependence measures are natural candidates for tests of independence. In this context, most existing dependence measures rely on bootstrapping or permutation to determine significance; hence we adopt this practice for all the dependence measures under comparison. \figref{non-linearpower} shows the empirical power under test level 0.05 for various types of data distribution and sample size, where we do 200 repetitions of permutations to estimate the null distribution. We observe the following outcomes: (1) almost all tests have controlled type-I error under independence; (2) Pearson's $\rho$, Spearman's $\rho_S$ and Kendall's $\tau$ are powerless for testing nonlinear and non-monotone relationships; (3) aLDG, HHG, and HSIC are consistently among the top three most powerful approaches for testing both linear and nonlinear, monotone and non-monotone relationships. Similar observations can be made for tests based on Gaussian mixtures \figref{gaussmixpower} and Negative Binomial mixtures \figref{nbmixpower}.

\begin{figure}[H]
    \centering
    \includegraphics[width=\linewidth]{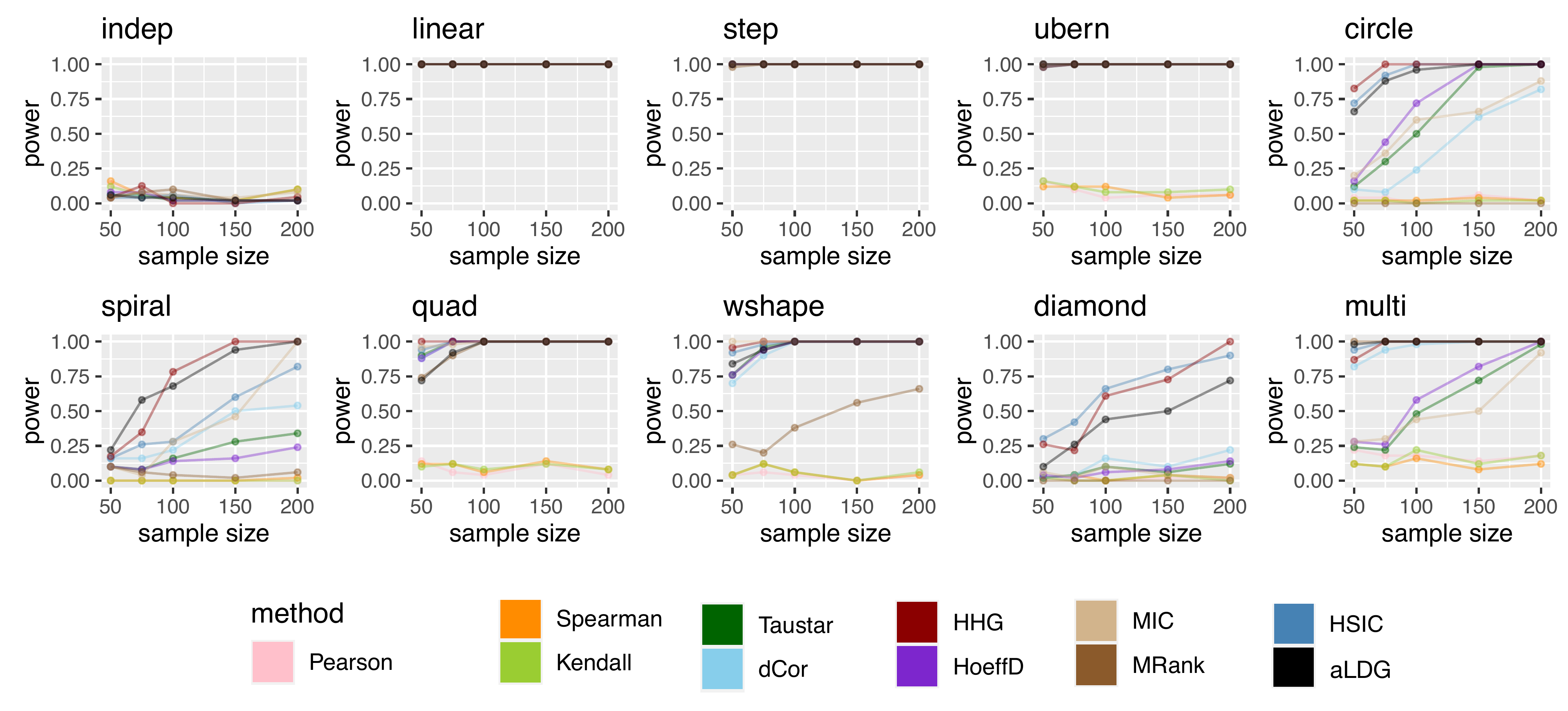}
    \caption{The empirical power of permutation test at level 0.05, based on different dependency measures under different data distributions and sample sizes. For the visualization of different data distributions, see \figref{data}. The power is estimated using 50 independent trials.}
    \label{fig:non-linearpower}
\end{figure}

\paragraph*%{aLDG computes fast} 
{Computational comparisons}
%We also compare the computation time of aLDG with all its competitors. 
Theoretically speaking, aLDG requires $O(n^2)$ in time of computation (where $n$ is the number of samples), which is comparable to reported requirements for most dependence measures that can detect complex relationships. This empirically confirmed in a comparison of the computation time of aLDG with all its competitors. In \figref{time} we plot the time of computation versus sample size $n$ for different dependence measures\footnote{The time include some constant wrapper function loading time, therefore, might be longer than a direct function call; however, the relative scale is still correct.}. In previous evaluations, we saw that HHG as a method motivated from capturing local dependence structure, was indeed a strong competitor to aLDG: it has high power as an independence test across almost all the data distribution we considered; however, it requires $O(n^3)$ time of computation, and \figref{time} shows this large discrepancy from all the other methods, which normally takes $O(n^2)$ time. 
% {\color{blue}Would table be better?\\ NO, the figure is great!}

\section{Conclusion and Discussion}

In this paper,  we formalize the idea of 
averaging the \emph{cell-specific gene association} \citep{dai2019cell,wang2021constructing} under a general statistical framework. We show that this approach produces a novel univariate dependence measure, called aLDG, that can detect nonlinear, non-monotone relationships between a pair of variables. We then develop the corresponding theoretical properties of this estimator, including robustness and consistency.  We also provide several hyper-parameter choices that are more justifiable and effective. Extensive simulations, motivated by expected scRNA-seq gene co-expression relationships and real data applications, show that this measure outperforms existing independence measures in various aspects: (1) it accumulates subtle local dependence over sub-populations; (2) it successfully interprets the relative strength of a monotonic function of dependence in the presence of noise better than many other measures that arose from independence test; (3) it is sensitive to complex relationships while robustly maintaining near-zero value at true independence, while several other measures are often overly sensitive to slight perturbations from independence and noise; (4) it computes comparatively rapidly compared to other dependence measures designed to capture complex relationships.  Other measures perform well in some settings but fail in others that are highly relevant to the single-cell setting. For instance, MIC performed well as part of the sLED test for differences in co-expression matrices, but this measure tends to produce a high estimate of dependence even when the variables are independent, or nearly so (Figure \ref{fig:nonlinear} and Figure \ref{fig:mono}). The moment-based methods like Pearson, dCor, and HSIC perform poorly when the expression values are sparse, producing false indications of correlation (Figure \ref{fig:realbi}), and yet sparsity is the norm in most single cell data. Our method is implemented in the R package aLDG\footnote{\url{https://github.com/JINJINT/aLDG}}, where we also include all the other methods that we have compared with.

The aLDG method does have some practical challenges: as a measure based on density estimation, the hyperparameter choices such as bandwidth can affect the performance of the measure. Though we provide some asymptotically optimal choices of those hyperparameters, in practice, they can fail due to the small sample size. For any given setting, the hyperparameters can be adjusted based on realistic simulations of the actual data and a solid understanding of the scRNA-seq data distribution. Similarly, due to the reliance on density estimation, it is hard to extend this measure to a multivariate setting. The sample size required for accurate estimation grows exponentially with the dimension. In practice, this limitation has little practical importance because gene co-expression studies focus on bivariate relationships.

%%%%%%%%%%%%%%%%%%%%%%%%%%%%%%%%%%%%%%%%%%%%%%
%% Support information, if any,             %%
%% should be provided in the                %%
%% Acknowledgements section.                %%
%%%%%%%%%%%%%%%%%%%%%%%%%%%%%%%%%%%%%%%%%%%%%%
\paragraph{Acknowledgments}
The authors would like to thank Xuran Wang for helpful comments.

%%%%%%%%%%%%%%%%%%%%%%%%%%%%%%%%%%%%%%%%%%%%%%
%% Funding information, if any,             %%
%% should be provided in the                %%
%% funding section.                         %%
%%%%%%%%%%%%%%%%%%%%%%%%%%%%%%%%%%%%%%%%%%%%%%
\paragraph{Funding}
This project is funded by National Institute of Mental Health (NIMH) grant R01MH123184 and NSF DMS-2015492.

\bibliographystyle{unsrtnat}
\bibliography{ref}
	
\newpage
\renewcommand\thefigure{S.\arabic{figure}}    
\setcounter{figure}{0} 

\renewcommand{\thesection}{\Alph{section}.\arabic{section}}
\setcounter{section}{0}
   
\begin{appendices}
\section{From avgCSN to aLDG}\label{app:derive}
Recall that we consider only a pair of random variables $X,Y$ whose joint and marginal densities exist and have the same support, and denote $f_{XY}, f_{X}, f_{Y}$ as their joint and marginal densities. Also, let $\widehat{f}_{XY}, \widehat{f}_{X}, \widehat{f}_{Y}$ be the estimated densities given observations of $(X,Y)$, and $\widehat{p}_{X,Y}(x,y)$ be the proportion of samples points in a square of side length $h$ centering at $(x, y)$, and $\widehat{p}_{X}$ and $\widehat{p}_{Y}$ be defined similarly for the marginal distribution.

First we point out that a reformulation of avgCSN statistics reveals its link to the population dependence measure we are going to introduce. Under our notation, the original avgCSN \citet{wang2021constructing} can be written as
\begin{align*}
    \text{avgCSN} & := \frac{1}{n} \sum_{i=1}^n \ones\left\{ \frac{\widehat{p}_{X,Y}(x_{i}, y_{i})  - \widehat{p}_{X}(x_{i}) \widehat{p}_{Y}(y_{i}) }{\sqrt{ \widehat{p}_{X}(x_{i})(1-\widehat{p}_{X}(x_{i}))\widehat{p}_{Y}(y_{i})(1-\widehat{p}_{Y}(y_{i}))}} \geq \frac{\Phi^{-1}(1-\alpha)}{\sqrt{n}}\right\},
\end{align*}
where $\Phi^{-1}$ is the quantile function of standard normal. When using a particular choice $\widehat{f}_{XY} = \widehat{p}_{X,Y}/h^2, \widehat{f}_{X} = \widehat{p}_{X}/h, \widehat{f}_{Y} = \widehat{p}_{Y}/h$, we have
\begin{align*}
    \text{avgCSN} = \frac{1}{n} \sum_{i=1}^n \ones\left\{ \frac{\widehat{f}_{X,Y}(x_{i}, y_{i}) h^2 - \widehat{f}_{X}(x_{i})h \widehat{f}_{Y}(y_{i})h }{\sqrt{ \widehat{f}_{X}(x_{i})h(1-\widehat{f}_{X}(x_{i})h)\widehat{f}_{Y}(y_{i})h(1-\widehat{f}_{Y}(y_{i})h)}} \geq \frac{\Phi^{-1}(1-\alpha)}{\sqrt{n}}\right\}. 
\end{align*}
Assuming the bandwidth $h\to 0$ and $ h\sqrt{n}\to\infty$, the expression can be approximated by the following
\begin{align*}
    \text{avgCSN} &\approx \frac{1}{n} \sum_{i=1}^n \ones\left\{ \frac{\widehat{f}_{X,Y}(x_{i}, y_{i}) - \widehat{f}_{X}(x_{i}) \widehat{f}_{Y}(y_{i}) }{\sqrt{ \widehat{f}_{X}(x_{i})\widehat{f}_{Y}(y_{i})}} \geq t_n \right\},\quad \text{where } t_n = \frac{\Phi^{-1}(1-\alpha)}{h\sqrt{n}}.
\end{align*}

\section{Proof for \thmref{aLDGrobpop}}\label{app:aLDGrobpop}
 \begin{proof}
 Denote the joint and marginal density of $F$ as $f_{X,Y}, f_{X}, f_{Y}$. Consider a fixed contamination position $(x',y')$, then we have the corresponding contaminated joint and marginal density as
 \begin{align*}
     & f^{(x')}_{X}(x):= 
     \begin{cases}
     (1-\epsilon)f_{X} (x), & if \ x \neq x',
     \\
     \infty , & if\ x = x';
     \end{cases},
     \quad 
      f^{(y')}_{Y}(y):= 
     \begin{cases}
     (1-\epsilon)f_{Y} (y), & if \ y \neq y',
     \\
     \infty , & if\ y = y';
     \end{cases}\\
      & \quad 
      f^{(x',y')}_{X,Y}(x,y):= 
     \begin{cases}
     (1-\epsilon)f_{X,Y} (x,y), & if \ (x,y) \neq (x',y'),
     \\
     \infty, & if\ (x,y) = (x',y').
     \end{cases}.
 \end{align*}
 
% Denote the set 
% $A'(t):=\{(x,y): f_{12}' - f_1' f_2' > t\}$, and $A(t):=\{(x,y): f_{12} - f_1 f_2 > t\}$, then we have 
% \begin{equation}
%     A(\frac{t}{1-\epsilon}) \subseteq A'(t) \subseteq  A(t-\epsilon f_{\text{max}}),
% \end{equation}
%  where $f_{\text{max}}:= ||f_1||_{\infty}\vee ||f_2||_{\infty}$. 
 
%  Then we have
%  \begin{align}
%      aLDG_{t}' & = \mathbb{P}_{f_{12}'}(A'(t)) = (1-\epsilon) \mathbb{P}_{f_{12}}(A'(t)) \\
%      & \in [(1-\epsilon)\mathbb{P}_{f_{12}}(A(\frac{t}{1-\epsilon})),\  (1-\epsilon)\mathbb{P}_{f_{12}}(A(t - \epsilon f_{\text{max}}))]\\
%      & \subseteq [(1-\epsilon)\text{aLDG}_{t}(\frac{t}{1-\epsilon}),\  (1-\epsilon)\text{aLDG}_{t}(t - \epsilon f_{\text{max}})]
%  \end{align}
\noindent 
Denote the density gap under original distribution $F$ as $\gap:= f_{X,Y} - f_{X} f_{Y}$, and the corresponding density gap under contaminated distribution as $\Delta_{\text{gap}}^{(x',y')}:= f_{X,Y}^{(x',y')} - f_{X}^{(x')} f_{Y}^{(y')}$, then
\begin{align*}
     \Delta_{\text{gap}}^{(x',y')}(x,y)=
     (1-\epsilon)\Big(\Delta_{\text{gap}}(x,y) + \epsilon f_{X}(x)f_{Y}(y)\Big)  \quad \text{if}\ x \neq x'\ \text{and}\ y \neq y',  
\end{align*}
 and the contaminated $\text{aLDG}_t$ statistics
 
 \begin{align}\label{aLDG1}
     & \text{aLDG}_t^{(x',y')}  = \text{Pr}_{F'}\left\{\Delta_{\text{gap}}^{(x',y')} > t \sqrt{f^{(x')}_{X}(x) f^{(y')}_{Y}(y)}\right\} \nonumber\\
     \leq &  \ \text{Pr}_{F'}\left\{(1-\epsilon)\Big(\Delta_{\text{gap}}(x,y) + \epsilon f_{X}(x)f_{Y}(y)\Big) > t(1-\epsilon)\sqrt{f_{X}(x) f_{Y}(y)},\ (x,y)\neq (x',y')\right\} \nonumber\\ 
     &+  \text{Pr}_{F'}\left\{(x,y)\neq (x',y')\right\} \nonumber\\
    = &\ (1-\epsilon) \text{Pr}_{F}\left\{\frac{\Delta_{\text{gap}}(x,y)}{\sqrt{f_{X}(x) f_{Y}(y)}}  + \epsilon \sqrt{f_{X}(x) f_{Y}(y)} > t\right\} +  \epsilon \nonumber\\
    \stackrel{(a)}{\leq}  & (1-\epsilon) \text{Pr}_{F}\left\{\frac{\Delta_{\text{gap}}(x,y)}{\sqrt{f_{X}(x) f_{Y}(y)}} + \epsilon f_{\text{max}} > t\right\} +  \epsilon
    =  (1-\epsilon) \text{aLDG}_{t-\epsilon f_{\text{max}}} + \epsilon
    \nonumber\\ \leq & (1-\epsilon)\big(\text{aLDG}_t + |\text{aLDG}_{t-\epsilon f_{\text{max}}} - \text{aLDG}_t|\big) + \epsilon  \nonumber\\  \stackrel{(b)}{\leq} & (1-\epsilon)\big(\text{aLDG}_t + L f_{\text{max}} \epsilon\big) + \epsilon \nonumber,
\end{align}
where (a) comes from the assumption that $f_{\max}:=||\sqrt{f_{X}f_{Y}}||_\infty < \infty$, and (b) comes from the assumption that $|\text{aLDG}_{t-\epsilon} - \text{aLDG}_{t}| \leq L \epsilon$ for all $\epsilon >0$. 
% On the other hand, we have
% \begin{align}\label{aLDG11}
%      & \text{aLDG}_t^{(x',y')}  = \text{Pr}_{F'}\left\{\Delta_{\text{gap}}^{(x',y')} > t \sqrt{f^{(x')}_{X}(x) f^{(y')}_{Y}(y)}\right\} \nonumber\\
%      \geq &  \ \text{Pr}_{F'}\left\{(1-\epsilon)\Big(\Delta_{\text{gap}}(x,y) + \epsilon f_{X}(x)f_{Y}(y)\Big) > t(1-\epsilon)\sqrt{f_{X}(x) f_{Y}(y)},\ (x,y)\neq (x',y')\right\} \nonumber\\
%     = &\ (1-\epsilon) \text{Pr}_{F}\left\{\frac{\Delta_{\text{gap}}(x,y)}{\sqrt{f_{X}(x) f_{Y}(y)}}  + \epsilon \sqrt{f_{X}(x) f_{Y}(y)} > t\right\}  \nonumber\\
%     \stackrel{(a)}{\geq}  & (1-\epsilon) \text{Pr}_{F}\left\{\frac{\Delta_{\text{gap}}(x,y)}{\sqrt{f_{X}(x) f_{Y}(y)}} > t\right\} 
%     =  (1-\epsilon) \text{aLDG}_{t} 
% \end{align}

% Combining \eqref{aLDG1} and \eqref{aLDG11}, we have
% \begin{align}
%     |\text{aLDG}_t^{(x',y')}-\text{aLDG}_t| & \leq \epsilon \cdot \max\left\{|\text{aLDG}_t+(1-\epsilon) L f_{max} -1|, \text{aLDG}_t\right\} \nonumber\\
%     &\leq \epsilon \cdot (L f_{max} + 1).
% \end{align}
\noindent
Therefore,
 \begin{align*}
     \text{IF}\left((x',y'), R_{\text{aLDG}_t}, F\right)&:=\lim_{\epsilon\to 0}\frac{\text{aLDG}_t^{(x',y')}-\text{aLDG}_t}{\epsilon} \\&\leq -\text{aLDG}_t+(1-\epsilon) L f_{max} +1\\&
     \leq L f_{max} +1
 \end{align*}
 Since the upper bound of IF does not depend on location of $(x',y')$, therefore,
 \begin{align*}
    \text{GES}(R_{\text{aLDG}_t}, F) \leq L f_{\text{max}} +1 < \infty.
\end{align*}
 \end{proof}

\section{Proof for \propref{indeprob}}\label{app:indeprob}
 \begin{proof}
 Denote the joint and marginal density of $F$ as $f_{X,Y}, f_{X}, f_{Y}$. Consider a fixed contamination point $(x',y')$ with mass $\epsilon$, then we have the corresponding contaminated joint and marginal density as
 \begin{align*}
     & f^{(x')}_{X}(x):= 
     \begin{cases}
     (1-\epsilon)f_{X} (x), & if \ x \neq x',
     \\
     \infty , & if\ x = x';
     \end{cases},
     \quad 
      f^{(y')}_{Y}(y):= 
     \begin{cases}
     (1-\epsilon)f_{Y} (y), & if \ y \neq y',
     \\
     \infty , & if\ y = y';
     \end{cases}\\
      & \quad 
      f^{(x',y')}_{X,Y}(x,y):= 
     \begin{cases}
     (1-\epsilon)f_{X,Y} (x,y), & if \ (x,y) \neq (x',y'),
     \\
     \infty, & if\ (x,y) = (x',y').
     \end{cases}.
 \end{align*}
Recall that the density gap $\gap:= f_{X,Y} - f_{X} f_{Y}$, and hence the  contaminated gap,
 \begin{align*}
     \Delta_{\text{gap}}^{(x',y')}(x,y) =
     (1-\epsilon)\Big(\Delta_{\text{gap}}(x,y) + \epsilon f_{X}(x)f_{Y}(y)\Big), & \quad \text{if}\ (x,y)\neq (x',y')
 \end{align*}
 and the contaminated aLDG statistics
 \begin{align*}
     & \text{aLDG}_0^{(x',y')}  = \text{Pr}_{F'}\{\Delta_{\text{gap}}^{(x',y')} > 0\} \nonumber\\
     \leq &  \ \text{Pr}_{F'}\left\{(1-\epsilon)\Big(\Delta_{\text{gap}}(x,y) + \epsilon f_{X}(x)f_{Y}(y)\Big) > 0, (x,y)\neq (x',y')\right\} +  \text{Pr}_{F'}\left\{(x,y)\neq (x',y')\right\} \nonumber\\
    = &\ (1-\epsilon) \text{Pr}_{F}\left\{\Delta_{\text{gap}}(x,y) + \epsilon f_{X}(x)f_{Y}(y) > 0\right\} +  \epsilon. \nonumber
    \end{align*}
Note that
\begin{align*}
   & \text{Pr}_{F}\left\{\Delta_{\text{gap}}(x,y) + \epsilon f_{X}(x)f_{Y}(y) > 0\right\}\nonumber\\
   = & \text{Pr}_{F}\left\{\Delta_{\text{gap}}(x,y)  > 0 \right\} + \text{Pr}\left\{- \epsilon f_{X}(x)f_{Y}(y) < \gap(x,y) \leq 0\right\} \nonumber\\
   = &  \text{aLDG}_0 + \text{Pr}_{F}\left\{ 1- \epsilon  < \frac{f_{X,Y}(x,y)}{f_{X}(x)f_{Y}(y)} \leq 1\right\} \nonumber\\
   = &\text{aLDG}_0 + \text{Pr}_{F}\{ 1- \epsilon  < c_F(u,v) \leq 1\}, \nonumber
 \end{align*}
 where $c_F(u,v)$ is the joint density of $u := F_{X}^{-1}(x), v:= F_{Y}^{-1}(x)$, i.e. the corresponding copula representation of distribution $F$. Then, denoting the volume of set $\Gamma_t: \{(u,v,t): c_F(u,v) \leq t\}$ as $\text{Vol}(t)$, and the area of sublevel set $\gamma_t: \{(u,v): c_F(u,v) \leq t\}$ as $\text{A}(t)$, and the contour line $\mathcal{C}(t) : = \{(u,v): c_F(u,v) = t\}$, we have
 \begin{align*}
     & \lim_{\epsilon \to 0} \frac{1}{\epsilon}\PP{ 1- \epsilon  < c_F(u,v) \leq 1} = \lim_{\epsilon \to 0} \frac{1}{\epsilon}\int_{1-\epsilon < c_F(u,v) \leq 1} c_F(u,v) du dv \nonumber\\
     = & \lim_{\epsilon \to 0} \frac{\textnormal{Vol}(1) -\textnormal{Vol}(1-\epsilon)}{\epsilon} =\frac{d\textnormal{Vol}}{d t}\mid_{t=1} \nonumber\\ & \stackrel{(a)}{=} \frac{A(1)}{||\nabla c_F(u_0,v_0)||_2} \stackrel{(b)}{\leq} \frac{1}{||\nabla c_F(u_0,v_0)||_2} \nonumber
\end{align*}
where $(u_0,v_0)$ is some point on $\mathcal{C}_t$ and $\nabla c_F(u_0,v_0)$ is the gradient of $c_F$ at $(u_0,v_0)$, and (a) comes from Theorem 1 in \cite{trinh2019volume} using the a.e. smoothness of the joint and marginal densities $f_{XY}$, $f_{X}$, $f_{Y}$; (b) uses the trivial bound $A(1) \leq 1$ since we are working on $[0,1]^2$ space.

Plug the above calculation back to IF function, we get
\begin{align*}
    \text{IF}\Big((x',y'), R_{\text{aLDG}_0}, F\Big) & = \frac{(1-\epsilon) \Big(\text{aLDG}_0 + \text{Vol}(1)-\text{Vol}(1-\epsilon)\Big) + \epsilon}{\epsilon} \\
    &= 1- \text{aLDG}_0 - \text{Vol}(1) + \lim_{\epsilon \to 0} \text{Vol}(1-\epsilon) + \lim_{\epsilon \to 0} \frac{1}{\epsilon} \left(\text{Vol}(1)-\text{Vol}(1-\epsilon)\right) \\
    & \leq 1- \text{aLDG}_0 + \frac{1}{||\nabla c_F(u_0,v_0)||_2},
\end{align*}
where $(u_0,v_0)$ is some point on the contour line $\mathcal{C}_t:=\{(u,v): c_F(u,v) = t\}$, and $\nabla c_F(u_0,v_0)$ is the gradient of $c_F$ at $(u_0,v_0)$.

Note that this upper bound is irrelevant with $(x',y')$, therefore we have
\begin{equation*}
    \text{GES}(R_{\text{aLDG}}, F) \leq 1- \text{aLDG}_0(F) + \frac{1}{||\nabla c_F(u_0,v_0)||_2} < \infty,
\end{equation*}
as long as $X,Y$ is not independent.
\\
\\
However, when $X,Y$ are independent, we have $c_F(u,v)\equiv1$ for all $(u,v)\in[0,1]^2$, and $\text{aLDG}_0=0$, then we have 
\begin{align*}
    \text{aLDG}_0^{(x',y')} &\geq \text{Pr}_{F'}\{\Delta_{\text{gap}}(x,y) + \epsilon f_{X}(x)f_{Y}(y) > 0, (x,y)\neq (x',y')\} \nonumber\\
    &= (1-\epsilon)\Big(\text{aLDG}_0 + \PP{1-\epsilon < c_F(u,v)\leq 1}\Big) \nonumber\\
    & =  (1-\epsilon)( 0 + 1) = 1-\epsilon,
\end{align*}
and hence
\begin{equation*}
    \text{IF}\Big((x',y'), R_{\text{aLDG}_0}, F\Big) \geq \lim_{\epsilon\to 0} \frac{1-\epsilon}{\epsilon} = \infty.
\end{equation*}
Again this lower bound is irrelevant with $(x',y')$, therefore we have $\text{GES}(R_{\text{aLDG}_0}, F)=\infty$.
\end{proof}

\section{Proof for \thmref{aLDGconsist}}\label{app:pfconsist}
\begin{proof}
Denote the set
\begin{equation*}
    S_t :=\left \{(x,y): \frac{f_{XY}(x,y)-f_X(x) f_Y(y)}{\sqrt{f_{X}(x)f_{Y}(y)}} > t\right\}, \quad \widehat{S}_t := \left\{(x,y): \frac{\widehat{f}_{XY}(x,y) -\widehat{f}_{X}(x)\widehat{f}_{Y}(y) }{\sqrt{\widehat{f}_{X}(x)\widehat{f}_{Y}(y)}} > t\right\}.
\end{equation*}
From the assumption that $||\widehat{f}_{XY}-f_{XY}||_{\infty}, ||\widehat{f}_{X}-f_{X}||_{\infty}, ||\widehat{f}_{Y}-f_{Y}||_{\infty} \leq \eta_n$ with probability at least $1-\frac{1}{n}$, we have the following holds for some constant $c > 0$ with probability at least $1-\frac{1}{n}$:
\begin{equation*}
    \sup_{x,y} \left| \frac{f_{XY}-f_X f_Y}{\sqrt{f_{X}f_{Y}}} - \frac{\widehat{f}_{XY} -\widehat{f}_{X}\widehat{f}_{Y} }{\sqrt{\widehat{f}_{X}\widehat{f}_{Y}}}\right| \leq  \frac{  (3c_{\max}+1)\eta_n}{c_{\min}^{\frac12}} + \frac{(3c_{\max}+1)\eta_n^2 + 2c_{\max}\eta_n}{c_{\min}^{\frac32}}< C \eta_n,
\end{equation*}
where $C := \frac{  (3c_{\max}+1)}{c_{\min}^{\frac12}} + \frac{(3c_{\max}+1) + 2c_{\max}}{c_{\min}^{\frac32}}$
and correspondingly
\begin{equation*}
    S_{t + C\eta_n} \subseteq \widehat{S}_{t} \subseteq S_{t - C\eta_n}.
\end{equation*}
As a result, applying the empirical measure $\widehat{P}(S):=\frac{1}{n}\sum_{i}\ones\{(x_i,y_i) \in S\}$ on these three sets, we get 
\begin{equation}\label{initbound}
    \widehat{P}(S_{t + C\eta_n}) \leq \widehat{P}(\widehat{S}_{t}) = \widehat{\text{aLDG}}(t) \leq  \widehat{P}(S_{t - C\eta_n}).
\end{equation}
Using the Hoeffding's inequality on binomials, we get 
\begin{equation*}
  |\widehat{P}(S) -  P(S)| < \sqrt{\frac{2\log{n}}{n}}
\end{equation*}
with probability at least $1-\frac{1}{2n}$ for any deterministic set $S$. Applying this inequality to $\widehat{P}(S_{t + C\eta_n})$ and $\widehat{P}(S_{t - c\eta_n})$ in \eqref{initbound}, we get
\begin{equation*}
    P(S_{t + C\eta_n}) - \sqrt{\frac{2\log{n}}{n}}\leq \widehat{P}(\widehat{S}_{t}) \leq  P(S_{t - C\eta_n}) + \sqrt{\frac{2\log{n}}{n}}
\end{equation*}
with probability at least $1-\frac{2}{n}$. This further implies that
\begin{equation*}
    \text{aLDG}_{t + C\eta_n} - \sqrt{\frac{2\log{n}}{n}} \leq \widehat{\text{aLDG}}(t) \leq \text{aLDG}_{t - C \eta_n} + \sqrt{\frac{2\log{n}}{n}}
\end{equation*}
with probability at least $1-\frac{2}{n}$.
With the condition that $|\text{aLDG}_{t-\epsilon}-\text{aLDG}_t| \leq L\epsilon$ for all $\epsilon>0$, we have 
\begin{equation*}
    \text{aLDG}_t -  LC\eta_n - \sqrt{\frac{2\log{n}}{n}} \leq \widehat{\text{aLDG}}_t \leq \text{aLDG}_t +  LC\eta_n + \sqrt{\frac{2\log{n}}{n}},
\end{equation*}
that is 
\begin{equation*}
    \left| \widehat{\text{aLDG}}_t -  \text{aLDG}_t \right| \leq LC\eta_n + \sqrt{\frac{2\log{n}}{n}}
\end{equation*}
with probability at least $1-\frac{2}{n}$.

\end{proof}

\section{A uniform variant of consistency}\label{app:uniform}
\begin{theorem}\label{thm:aLDGuniform}
Consider a bivariate distribution $F$ of variable $(X,Y)$ whose joint and marginal densities exist as $f_{XY}$, $f_{X}$, $f_{Y}$, and satisfy
\begin{align*}
&\inf_{x,y}f_{XY}(x,y),\  \inf_{x}f_X(x) \inf_{y}f_Y(y) \geq c_{\min},\\
& \sup_{x,y}f_{XY}(x,y),\  \sup_{x}f_X(x) \sup_{y}f_Y(y) \leq c_{\max},
\end{align*}
and for some $\eta_n$ with $\lim_{n\to\infty}\eta_n \to 0$, with probability at least $1-\frac{1}{n}$ 
\begin{equation*}
  ||\widehat{f}_{XY}-f_{XY}||_{\infty}, ||\widehat{f}_{X}-f_{X}||_{\infty}, ||\widehat{f}_{Y}-f_{Y}||_{\infty} \leq \eta_n; 
\end{equation*} 
and for some constant $0<L<\infty$, 
\begin{equation*}
    |\text{aLDG}_{t-\epsilon}-\text{aLDG}_t| \leq L\epsilon\quad  \text{for all} \ \epsilon>0, \quad \text{for all } t\geq0.
\end{equation*}
Then we have, with probability at least $1-\frac{2}{n}$, we have
\begin{equation*}
    \sup_{t\geq 0}\left|\widehat{\text{aLDG}}_t - \text{aLDG}_t\right| \leq  LC\eta_n + 10\sqrt{\frac{\log{n}}{n}},
\end{equation*}
where $C$ depends only on $c_{\min}, c_{\max}$.
\end{theorem}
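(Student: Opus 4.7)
The plan is to adapt the proof of \thmref{aLDGconsist} almost verbatim and upgrade the one pointwise concentration step to a uniform one. Recall the two deterministic ingredients from the fixed-$t$ proof: under the density-estimation event $\{||\widehat{f}_{XY}-f_{XY}||_\infty,||\widehat{f}_X-f_X||_\infty,||\widehat{f}_Y-f_Y||_\infty\leq\eta_n\}$ (probability at least $1-1/n$), we have the uniform sandwich
\begin{equation*}
    S_{t+C\eta_n}\ \subseteq\ \widehat{S}_t\ \subseteq\ S_{t-C\eta_n}\qquad\text{for every } t\geq 0,
\end{equation*}
with $C$ depending only on $c_{\min},c_{\max}$, because the uniform bound on $|T-\widehat T|$ derived there holds simultaneously over $(x,y)$ and hence simultaneously over $t$. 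The Lipschitz condition then yields $|P(S_{t\pm C\eta_n})-P(S_t)|=|\text{aLDG}_{t\pm C\eta_n}-\text{aLDG}_t|\leq LC\eta_n$ uniformly in $t$.

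The only genuinely new ingredient is concentration of $\widehat P(S_t)=\frac{1}{n}\sum_i\ones\{T(x_i,y_i)>t\}$ around $P(S_t)=\PP{T(X,Y)>t}$ uniformly over $t\geq 0$. I would observe that $\widehat P(S_t)$ and $P(S_t)$ are respectively one minus the empirical and true CDF of the scalar random variable $T(X,Y)$ evaluated at $t$, so the Dvoretzky--Kiefer--Wolfowitz inequality gives
\begin{equation*}
    \PP{\sup_{t\geq 0}\bigl|\widehat P(S_t)-P(S_t)\bigr|>u}\ \leq\ 2e^{-2nu^2}.
\end{equation*}
Choosing $u=\sqrt{\log(2n)/(2n)}$ makes this at most $1/n$. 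Intersecting with the density-estimation event and applying a union bound gives, with probability at least $1-2/n$,
\begin{equation*}
    \sup_{t\geq 0}\bigl|\widehat P(S_t)-P(S_t)\bigr|\ \leq\ \sqrt{\frac{\log(2n)}{2n}}.
\end{equation*}

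Combining all three ingredients, on the intersection event one has, simultaneously in $t$,
\begin{equation*}
    \widehat{\text{aLDG}}_t=\widehat P(\widehat S_t)\ \leq\ \widehat P(S_{t-C\eta_n})\ \leq\ P(S_{t-C\eta_n})+\sqrt{\tfrac{\log(2n)}{2n}}\ \leq\ \text{aLDG}_t+LC\eta_n+\sqrt{\tfrac{\log(2n)}{2n}},
\end{equation*}
and the symmetric lower bound via $S_{t+C\eta_n}\subseteq\widehat S_t$. Absorbing $\sqrt{\log(2n)/(2n)}\leq 10\sqrt{\log n/n}$ for all $n\geq 2$ finishes the claim.

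The only real obstacle over the fixed-$t$ proof is replacing a single Hoeffding bound by a uniform-in-$t$ bound. This is exactly what DKW is designed for, and the fact that the sandwich $S_{t+C\eta_n}\subseteq\widehat S_t\subseteq S_{t-C\eta_n}$ already holds uniformly in $t$ (a consequence of the uniform density-estimation hypothesis, not of any additional empirical-process work on $\widehat S_t$) means we avoid having to control the random family $\{\widehat S_t\}_{t\geq 0}$ via VC or bracketing arguments; the randomness in $t$ is confined to the scalar statistic $T(X,Y)$, where DKW suffices.
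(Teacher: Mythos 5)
Your proof is correct, and it reaches the same bound by a genuinely different (and arguably cleaner) route than the paper. The paper also confines all empirical-process work to the deterministic family $\mathcal{S}=\{S_t:t\geq 0\}$ via the same sandwich $S_{t+C\eta_n}\subseteq\widehat S_t\subseteq S_{t-C\eta_n}$, but it obtains the uniform concentration $\sup_{t}|\widehat P(S_t)-P(S_t)|\leq 10\sqrt{\log n/n}$ by arguing that $\mathcal{S}$ is the subgraph class of the one-dimensional vector space $\{(x,y)\mapsto t-T(x,y):t\geq 0\}$, hence has VC dimension at most $1$, and then invoking the VC uniform convergence theorem. You instead observe that $\widehat P(S_t)$ and $P(S_t)$ are one minus the empirical and population CDF of the scalar variable $T(X,Y)$ (legitimate, since $T$ is a fixed measurable function, so $T(x_1,y_1),\dots,T(x_n,y_n)$ are i.i.d.), and apply Dvoretzky--Kiefer--Wolfowitz with Massart's constant. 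Your choice $u=\sqrt{\log(2n)/(2n)}$ gives failure probability exactly $1/n$, and $\sqrt{\log(2n)/(2n)}\leq\sqrt{\log n/n}$ for $n\geq 2$, so your constant is in fact considerably better than the paper's $10$; the union bound with the density-estimation event and the Lipschitz step are identical to the paper's. The trade-off is that the VC route generalizes immediately to multi-parameter families of level sets, whereas DKW is tailored to a one-parameter monotone family of level sets of a single scalar statistic --- which is exactly the situation here, so DKW is the sharper tool and nothing is lost.
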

\begin{proof}
Recall the bivariate functional 
\begin{equation*}
    T: (x,y)\mapsto \frac{f_{XY}(x,y)-f_X(x) f_Y(y)}{\sqrt{f_{X}(x)f_{Y}(y)}}, \quad \widehat{T}: (x,y)\mapsto \frac{\widehat{f}_{XY}(x,y)-\widehat{f}_X(x) \widehat{f}_Y(y)}{\sqrt{\widehat{f}_{X}(x)\widehat{f}_{Y}(y)}}.
\end{equation*} 
Correspondingly, for a $t\geq 0$, denote the set
\begin{equation*}
    S_t :=\left \{(x,y): T(x,y) > t\right\}, \quad \widehat{S}_t := \left\{(x,y): \widehat{T}(x,y)> t\right\}.
\end{equation*}
We also denote the collection of such set over all $t\geq 0$ as $\mathcal{S} = \{S_t: t \geq 0\}$. 

From proposition 4.20 \citep{wainwright2019high}, it is easy to see that the class $\mathcal{S}$ has VC dimension at most $1$, since it can be written as the subgraph class of the function class $\{g_t: (x,y) \mapsto t-T(x,y); t\geq 0\}$ is a vector space of $dim(1)$ (as function $T$ is deterministic and only $t$ is changing). Using VC theorem, we get 
\begin{equation*}
  \sup_{S\in\mathcal{S}}|\widehat{P}_n(S) -  P(S)| \leq \sqrt{\frac{32}{n}\left(\log(n+1) + \log(16n)\right)} \leq 10\sqrt{\frac{\log{n}}{n}}
\end{equation*}
with probability at least $1-\frac{1}{2n}$, where $\widehat{P}(S):=\frac{1}{n}\sum_{i}\ones\{(x_i,y_i) \in S\}$ is the empirical measure.

From the assumption that $||\widehat{f}_{XY}-f_{XY}||_{\infty}, ||\widehat{f}_{X}-f_{X}||_{\infty}, ||\widehat{f}_{Y}-f_{Y}||_{\infty} \leq \eta_n$ with probability at least $1-\frac{1}{n}$, we have the following holds for some constant $c > 0$ with probability at least $1-\frac{1}{n}$:
\begin{equation}\label{vc}
    \sup_{x,y} \left| \frac{f_{XY}-f_X f_Y}{\sqrt{f_{X}f_{Y}}} - \frac{\widehat{f}_{XY} -\widehat{f}_{X}\widehat{f}_{Y} }{\sqrt{\widehat{f}_{X}\widehat{f}_{Y}}}\right| \leq  \frac{  (3c_{\max}+1)\eta_n}{c_{\min}^{\frac12}} + \frac{(3c_{\max}+1)\eta_n^2 + 2c_{\max}\eta_n}{c_{\min}^{\frac32}}< C \eta_n,
\end{equation}
where $C := \frac{  (3c_{\max}+1)}{c_{\min}^{\frac12}} + \frac{(3c_{\max}+1) + 2c_{\max}}{c_{\min}^{\frac32}}$
and correspondingly
\begin{equation*}
    S_{t + C\eta_n} \subseteq \widehat{S}_{t} \subseteq S_{t - C\eta_n} \quad \text{for all } t\geq 0.
\end{equation*}
As a result, applying the empirical measure $\widehat{P}(S)$ on these three sets, we get 
\begin{equation}\label{initbound2}
    \widehat{P}(S_{t + C\eta_n}) \leq \widehat{P}(\widehat{S}_{t}) = \widehat{\text{aLDG}}(t) \leq  \widehat{P}(S_{t - C\eta_n}) \quad \text{for all } t\geq 0.
\end{equation}
 Applying \eqref{vc} to $\widehat{P}(S_{t + C\eta_n})$ and $\widehat{P}(S_{t - c\eta_n})$ in \eqref{initbound2}, we get
\begin{equation*}
    P(S_{t + C\eta_n}) - 10\sqrt{\frac{\log{n}}{n}}\leq \widehat{P}(\widehat{S}_{t}) \leq  P(S_{t - C\eta_n}) + 10\sqrt{\frac{\log{n}}{n}} \quad \text{for all } t\geq 0
\end{equation*}
with probability at least $1-\frac{2}{n}$. This further implies that
\begin{equation*}
    \text{aLDG}_{t + C\eta_n} - 10\sqrt{\frac{\log{n}}{n}} \leq \widehat{\text{aLDG}}(t) \leq \text{aLDG}_{t - C \eta_n} + 10\sqrt{\frac{\log{n}}{n}} \quad \text{for all } t\geq 0
\end{equation*}
with probability at least $1-\frac{2}{n}$.
With the condition that $|\text{aLDG}_{t-\epsilon}-\text{aLDG}_t| \leq L\epsilon$ for all $\epsilon>0$ and $t\geq 0$, we have 
\begin{equation*}
    \text{aLDG}_t -  LC\eta_n - 10\sqrt{\frac{\log{n}}{n}} \leq \widehat{\text{aLDG}}_t \leq \text{aLDG}_t +  LC\eta_n + 10\sqrt{\frac{\log{n}}{n}}, \quad \text{for all } t\geq 0
\end{equation*}
that is 
\begin{equation*}
    \sup_{t\geq 0}\left| \widehat{\text{aLDG}}_t -  \text{aLDG}_t \right| \leq LC\eta_n + 10\sqrt{\frac{\log{n}}{n}}
\end{equation*}
with probability at least $1-\frac{2}{n}$.

\end{proof}

\section{Uniform estimation error of product kernel density estimator }\label{app:densest}
\begin{definition}
Let $\beta$ be a positive integer, we define $G(\beta)$ as the class of one-dimensional kernel function $K$, in which $K$ has support $[-1,1]$, and $\int K = 1$, $\int |K|^p < \infty$ for any $p\geq 1$, $\int |t|^\beta K(t) d  t < \infty$ and $\int t^s K(t) d t = 0$ for any $1 \leq s \leq \beta$. 
\end{definition}

\begin{definition}
Let $\beta$ be a positive integer, $L$ be a positive constant, we define $H(\beta,L)$ as the class of one-dimensional density $k$, such that
\begin{equation*}
    \left|\frac{d^{\beta-1} k(x)}{x^{\beta-1}} - \frac{d^{\beta-1} k(y)}{x^{\beta-1}}\right| \leq L|x-y|,\quad \text{for all } x,y
\end{equation*}
\end{definition}

In the following we analyse a special class of multivariate density function together with a special class of density estimator. Specifically, for positive integer $\beta$, consider density function $k \in H(\beta, L)$, and kernel function $K \in G(\beta)$. For dimension $d\geq 1$, we consider the following multivariate density function in $\mathbb{R}^d$:
\begin{equation}\label{mixdens}
\mathbf{k}_{\alpha,\mu,r}(\bm{x}) := \prod_{i=1}^d k_{\alpha,\mu,r}(\cdot)(x_i),  \quad \text{where } k_{\alpha, \mu,r}(\cdot) = (1-\alpha) k\left(\cdot \right)+ \alpha\frac{1}{r}k\left(\frac{\cdot-\mu}{r}\right), 
\end{equation}
with $\alpha\in [0,1]$ as the mixture proportion, $\mu\geq0$ the relative location, and $r >0$ as the relative scale; we also consider the following multivariate kernel function
\begin{equation*}
    \mathbf{K}_h(\bm{x}) := \prod_{i=1}^d K_h(x_i),\quad \text{where } K_h(\cdot) := \frac{1}{h}K(\frac{\cdot}{h}),
\end{equation*}
with $h>0 \in \mathbb{R}$; and the corresponding empirical kernel density estimator
\begin{equation}\label{mixkernel}
    \widehat{\bm{K}}_h(\cdot) := \frac{1}{n}\sum_{i=1}^n \bm{K}_h(\bm{X}_i-\cdot),
\end{equation}
given $n$ observations $\bm{X}_1,\dots ,\bm{X}_n$ in $\mathbb{R}^d$.

\begin{proposition}\label{prop:densest}
Consider $k_{\alpha,\mu,r}$ in \eqref{mixdens} and $\widehat{K}_h$ in \eqref{mixkernel}. Then for any $\delta > 0$, we have
\begin{align*}
    \PP{\sup_{\bm{x}\in \mathbb{R}^d}|\widehat{\bm{K}}_h(\bm{x}) - \bm{k}_{\alpha, \mu, r}(\bm{x})|> \sqrt{\frac{C \log{(1/\delta)} (1-\alpha + \frac{\alpha}{r})^d}{n h^d}} + c \left(1-\alpha + \frac{\alpha}{r^{\beta+1}}\right)^d h^{d\beta} } < \delta,
\end{align*}
where $C$ and $c$ are positive constants which do not depend on $h,\alpha,\mu,r$. Particularly,
choosing adaptively 
\begin{equation*}
    h = \left(\frac{C\log{\frac{1}{\delta}}(1-\alpha+\frac{\alpha}{r})^{d}}{c^2 n(1-\alpha+\frac{\alpha}{r^{\beta+1}})^{2d}}\right)^{\frac{1}{(2\beta +1) d}},
\end{equation*}
 we have
\begin{align*}
     \PP{\sup_{\bm{x}\in \mathbb{R}^d}|\widehat{\mathbf{K}}_h(\bm{x}) - \mathbf{k}_{\mu,r}(\bm{x})| > 2c\left(\frac{C\log{\frac{1}{\delta}}}{c^2 n} \right)^{\frac{\beta}{2\beta+1}} \left(1-\alpha + \frac{\alpha}{r^{\beta+1}}\right)^{\frac{\beta+1}{2\beta+1}d}}  < \delta.
\end{align*}
\end{proposition}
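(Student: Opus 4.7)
The plan is to use the classical bias--variance decomposition, pushed through the product structure of both $\bm{k}_{\alpha,\mu,r}$ and $\bm{K}_h$. First I would write
\begin{equation*}
\sup_{\bm{x}}\bigl|\widehat{\bm{K}}_h(\bm{x}) - \bm{k}_{\alpha,\mu,r}(\bm{x})\bigr| \leq \sup_{\bm{x}}\bigl|\widehat{\bm{K}}_h(\bm{x}) - \mathbb{E}\widehat{\bm{K}}_h(\bm{x})\bigr| + \sup_{\bm{x}}\bigl|\mathbb{E}\widehat{\bm{K}}_h(\bm{x}) - \bm{k}_{\alpha,\mu,r}(\bm{x})\bigr|,
\end{equation*}
bound the stochastic fluctuation and deterministic bias separately as functions of $h$, and then optimize the bandwidth to derive the stated adaptive choice.

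For the stochastic term I will apply a uniform concentration inequality (Talagrand's, or chaining over a VC-type class) to the class of translates $\{\bm{K}_h(\,\cdot - \bm{x}) : \bm{x}\in\mathbb{R}^d\}$. Because $\bm{K}_h$ is a product of one-dimensional kernels and the class is parametrized by the $d$-dimensional shift, its VC dimension is linear in $d$. The variance envelope comes from independence of the coordinates of $\bm{X}$ under the product density $\bm{k}_{\alpha,\mu,r}$:
\begin{equation*}
    \mathbb{E}\bigl[\bm{K}_h(\bm{X}-\bm{x})^2\bigr] = \prod_{i=1}^d \int K_h(u-x_i)^2 k_{\alpha,\mu,r}(u)\,du \leq \frac{C_1^d(1-\alpha+\alpha/r)^d}{h^d},
\end{equation*}
where I combine $\|k_{\alpha,\mu,r}\|_\infty \le \|k\|_\infty(1-\alpha+\alpha/r)$ with $\|K_h\|_2^2 = \|K\|_2^2/h$. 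Plugging this envelope into the concentration bound yields the advertised $\sqrt{C\log(1/\delta)(1-\alpha+\alpha/r)^d/(nh^d)}$ term.

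For the bias term I will first observe that the rescaling identity $\frac{d^{\beta-1}}{dx^{\beta-1}}\bigl[\tfrac{1}{r}k(\tfrac{x-\mu}{r})\bigr] = \tfrac{1}{r^\beta}k^{(\beta-1)}(\tfrac{x-\mu}{r})$ implies $k_{\alpha,\mu,r}\in H(\beta,\, L(1-\alpha+\alpha/r^{\beta+1}))$. Combining a $\beta$-order Taylor expansion of $k_{\alpha,\mu,r}$ with the vanishing moments $\int t^s K(t)\,dt=0$ for $1\le s\le\beta$ of $K\in G(\beta)$ then gives the one-dimensional estimate $|\mathbb{E}[K_h(X_i-x_i)] - k_{\alpha,\mu,r}(x_i)| \le c\,L(1-\alpha+\alpha/r^{\beta+1})h^\beta$, uniformly in $x_i$. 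By the product structure, the $d$-dimensional bias factorizes as $\prod_i g_i(x_i) - \prod_i k_{\alpha,\mu,r}(x_i)$ with $g_i(x_i) := \mathbb{E}[K_h(X_i-x_i)]$, which I expand via the multilinear identity $\prod(c_i+\delta_i)-\prod c_i = \sum_{\emptyset\neq S\subseteq[d]}\prod_{i\in S}\delta_i\prod_{i\notin S}c_i$ and bound using the one-dimensional estimate to produce the $c(1-\alpha+\alpha/r^{\beta+1})^d h^{d\beta}$ contribution in the statement.

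Summing and balancing the two bounds by setting them equal solves for $h$ as claimed, and substitution yields the final rate. The main obstacle I foresee is tracking the two different mixture-dependent constants in parallel: $\|k_{\alpha,\mu,r}\|_\infty$ scales as $1-\alpha+\alpha/r$ (driving the variance) while the $\beta$-Hölder modulus of $k_{\alpha,\mu,r}^{(\beta-1)}$ scales as $1-\alpha+\alpha/r^{\beta+1}$ (driving the bias), and both must be propagated through the $d$-fold product/telescoping identities without incurring stray factors that would ruin the balance. Obtaining the clean $d$-th power form in the proposition (rather than a coarse $d\cdot\|k\|_\infty^{d-1}$-type bound from naive telescoping) will require carefully grouping the cross terms in the multilinear expansion and exploiting that $r\le 1$ so that the $1-\alpha+\alpha/r^{\beta+1}$ factor dominates $1-\alpha+\alpha/r$ whenever both appear.
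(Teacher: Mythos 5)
Your overall route is the same as the paper's: the bias--variance decomposition, a Talagrand/VC-type uniform concentration bound over the class of translates $\{\bm{K}_h(\cdot-\bm{x})\}$ with variance envelope $\norm{k_{\alpha,\mu,r}}_\infty^d\norm{K}_2^{2d}/h^d$ (the paper invokes Corollary~2.2 of Gin\'e--Guillou for exactly this), the observation that $k_{\alpha,\mu,r}\in H(\beta,(1-\alpha+\alpha/r^{\beta+1})L)$ via the rescaling of derivatives, and a Taylor expansion against the vanishing moments of $K\in G(\beta)$ for the one-dimensional bias. Up to that point your proposal matches the paper step for step.

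The genuine gap is the one you yourself flag at the end, and it is not resolvable in the way you hope. The $d$-dimensional bias is $\prod_i g_i(x_i)-\prod_i k_{\alpha,\mu,r}(x_i)$ with $g_i(x_i)=\EE{K_h(X_i-x_i)}$, and your multilinear identity gives
\begin{equation*}
\textstyle\prod_i(c_i+\delta_i)-\prod_i c_i=\sum_{\emptyset\neq S\subseteq[d]}\prod_{i\in S}\delta_i\prod_{i\notin S}c_i ,
\end{equation*}
whose dominant contribution as $h\to 0$ comes from the singleton sets $S$: it is of order $d\,\norm{k_{\alpha,\mu,r}}_\infty^{d-1}\cdot L(1-\alpha+\alpha/r^{\beta+1})h^{\beta}$, i.e.\ $O(h^{\beta})$. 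No regrouping of cross terms turns this into the $h^{d\beta}$ appearing in the proposition, because the singleton terms are simply there and are the largest. The paper's proof obtains $h^{d\beta}$ only by writing the difference of products as the product of differences, $\sup_{\bm x}\bigl|\prod_i g_i-\prod_i k_{\alpha,\mu,r}(x_i)\bigr|=\sup_{\bm x}\prod_i\bigl|g_i-k_{\alpha,\mu,r}(x_i)\bigr|$, which is not a valid identity for $d\geq 2$. So your honest accounting of the bias cannot reproduce the stated bound; carried through correctly it yields a bias of order $h^{\beta}$ (with a mixture-dependent constant $(1-\alpha+\alpha/r)^{d-1}(1-\alpha+\alpha/r^{\beta+1})$), a different bias--variance balance, and hence a different optimal $h$ and final rate than the ones in the proposition. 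To complete a proof of the statement \emph{as written} you would have to justify the product-of-differences step, which you cannot; to complete a correct proof you should carry the singleton-term bias through and accept the modified exponents.
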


\begin{remark}
Back to the example in the main paper, the joint density for $X,Y$ we considered is in fact $f_{X,Y}(x,y) = k(x)k(y)$ with $k\in H(1,L)$. And the density estimator we considered is in fact $\widehat{K}_{h}$ in \eqref{mixkernel} with the one-dimensional kernel function $K$ as boxcar kernel smoothing function (which obviously belongs to $G(1)$). Then use \propref{densest} with $\beta=1,\alpha=0, d=2$, we have with probability at least $1-1/n$, 
\begin{equation*}
    ||f_{XY} - \widehat{f}_{XY}||_{\infty} \leq O(n^{-\frac13}\sqrt{\log{n}}).
\end{equation*}
Similarly, for the marginal densities, we have that, with bandwidth $h_n = O(n^{-1/6})$,
\begin{equation*}
    ||f_{X} - \widehat{f}_{X}||_{\infty} \leq O(n^{-\frac16}\sqrt{\log{n}}), \quad ||f_{Y} - \widehat{f}_{Y}||_{\infty} \leq O(n^{-\frac16}\sqrt{\log{n}}).
\end{equation*}
Finally, recall the definition of error rate $\eta_n$, we have
\begin{align*}
    \eta_n:= \sup\{||f_{XY} - \widehat{f}_{XY}||_{\infty}, ||f_{X} - \widehat{f}_{X}||_{\infty}, ||f_{Y} - \widehat{f}_{Y}||_{\infty}\} \leq O(n^{-\frac16}\sqrt{\log{n}})
\end{align*}
with probability at least $1-1/n$.
\end{remark}

% \begin{proposition}(\citet{rinaldo2010generalized} Proposition 9 )\label{prop:biasest}
% Assume that the kernel satisfies condition (VC) and that
% \begin{equation}\label{varcond}
%     \sup_{t \in \mathcal{R}^d} \sup_{h>0} \int_{\mathcal{R}^d} K^2_h(t-x)d P(x) < D < \infty.
% \end{equation}
% \begin{itemize}
%     \item[(a)] Let $h$ be fixed. Then, there exist constants $L > 0$ and $C >0$, which depend only on the VC characteristics of $K$, such that, for any $c_1 \geq C$ and   $0 < \epsilon \leq \frac{c_1 D}{\norm{K}_\infty}$, there exists an $n_0 > 0$, which depends on $\epsilon, D, \mathcal{K}_{\infty}$ and the $VC$ characteristics of $K$, such that, for all $n>n_0$, 
%     \begin{equation}\label{densbound}
%         \PP{\sup_{x\in\mathbb{R}^d} |\widehat{p}_h(x) - p_h(x)| > 2 \epsilon} \leq L \exp \left\{-\frac{1}{L}\frac{\log\{1 + \frac{c_1}{4L}\}}{c_1}\frac{n h^d \epsilon^2}{D} \right\}.
%     \end{equation}
    
%     \item[(b)] Let $h_n \to 0$ as $n \to \infty$ in such a way that $\frac{n h_n^d}{\log{h_n^d}} \to \infty$. if $\{\epsilon_n\}$ is a sequence such that 
%     \begin{equation}
%         \epsilon = \Omega\left( \frac{\log{r_n}}{n h_n^d} \right)
%     \end{equation}
%     where $r_n = \Omega(h_n^{\frac{d}{2}})$, then, for all $n$ large enough, \eqref{densbound} holds with $h$ and $\epsilon$ replaced by $h_n$ and $\epsilon_n$, respectively. In particularly, the term on the right hand side of \eqref{densbound} vanishes at the rate $O(r_n^{-1})$. 
% \end{itemize}
% \end{proposition}
\begin{proof}
We can decompose the deviation as the following:
\begin{align}\label{infdecomp}
    \norm{\widehat{\bm{K}}_h - \bm{k}_{\alpha,\mu,r}}_{\infty} \leq \norm{\widehat{\bm{K}}_h - \EE{\widehat{\bm{K}}_h}}_{\infty} + \norm{\EE{\widehat{\bm{K}}_h} - \bm{k}_{\alpha,\mu,r}}_{\infty},
\end{align}
where the expectation in $\EE{\widehat{\bm{K}}_h}$ is taken over given samples $X_1,\dots,X_n$. In the following, we bound each term separately, throughout which we denote expressions that do not depend on $h,\alpha,r,\mu$ as constants terms.

\begin{itemize}
    \item[\textbf{Step 1.}]
   
   To bound the first term in \eqref{infdecomp}, we use  Corollary 2.2 in  \citet{gine2002rates}. Firstly we introduce the required condition. 
% $\bm{K}_h$, that is $\bm{K}_h$ belongs to VC class \defref{vc}. This requirement is satisfied for a large class of kernel smoothing functions, including, for example, any compact supported polynomial kernel and the Gaussian kernel. Therefore the kernel we considered satisfy this condition. 
\begin{definition}\label{def:vc}(VC class)
Let $\mathcal{F}$ be a uniformly bounded collection of measurable functions on $\mathbb{R}^d$. We say that $\mathcal{F}$ is a bounded measurable VC class of functions if the class $\mathcal{F}$ is separable and if there exist positive numbers $A$ and $v$ such that, for every probability measure $P$ on $\mathbb{R}^d$ and every $0 < \epsilon < 1$, 
\begin{equation}\label{vccond}
    \sup_{P} N(\mathcal{F}, L_2(P), \epsilon \norm{F}_{L_2(P)}) \leq \left(\frac{A}{\epsilon}\right)^{v},
\end{equation}
where $N(T,d,\epsilon)$ denote the $\epsilon$-covering number of the metric space $(T,d)$, $F$ is the envelope function of $\mathcal{F}$ and the supremum is taken over the set of all probability measure on $\mathbb{R}^d$. The quantities $A$ and  $v$ are called the $VC$ characteristics of $\mathcal{F}$.
\end{definition}

\begin{lemma}\label{lem:2002}(\citet{gine2002rates} Corollary 2.2)
Consider $\mathcal{F}$ be a
measurable uniformly bounded VC class of functions on $\mathbb{R}^d$ whose VC characters are $A, v$, and 
\begin{equation}\label{varcond}
    \sup_{f \in \F} \text{Var}_P[f] \leq \sigma^2; \quad \sup_{f\in \F} ||f||_{\infty} \leq U,
\end{equation}
with $0 < \sigma^2 < \frac{U}{2}$, and $\sqrt{n}\sigma \geq U \sqrt{\log{(\frac{U}{\sigma})}}$.
Then there exist positive constants $C$ and $C_0$ depending only on $A$ and $v$ such that for all $\lambda \geq C_0$ and $t$ satisfying
\begin{equation*}
C_0\sqrt{n}\sigma \sqrt{\log{\frac{U}{\sigma}}} \leq t \leq \lambda \frac{n\sigma^2}{U},
\end{equation*}
we have
\begin{equation*}
    \PP{\sup_{f\in \F} |\sum_{i=1}^n f(X_i) - f(X_1)| \geq t} \leq C \exp \left\{ -\frac{\log{(1+\frac{\lambda}{4C})}}{\lambda C} \frac{t^2}{n\sigma^2}\right\},
\end{equation*}
where $X_1,\dots,X_n \stackrel{iid}{\sim} P$. 
\end{lemma}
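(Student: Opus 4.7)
The plan is to follow the standard bias-variance decomposition already initiated in \eqref{infdecomp} of the proof of the previous theorem, bounding the stochastic deviation $\|\widehat{\bm{K}}_h - \EE{\widehat{\bm{K}}_h}\|_\infty$ via the Talagrand/Giné--Guillou-type inequality stated as \lemref{2002}, and separately bounding the deterministic bias $\|\EE{\widehat{\bm{K}}_h} - \bm{k}_{\alpha,\mu,r}\|_\infty$ by a Taylor-expansion argument that exploits the vanishing-moment structure of $K\in G(\beta)$ together with the Hölder regularity of $k\in H(\beta,L)$. After obtaining the two ingredients, the first displayed inequality of the proposition follows from a union bound, and the second follows from balancing the two terms as functions of $h$ and reading off the resulting rate.

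For the bias, I would exploit the product form of both $\bm{K}_h$ and $\bm{k}_{\alpha,\mu,r}$, so that the convolution $\bm{K}_h * \bm{k}_{\alpha,\mu,r}$ factorises coordinate-wise into a product of one-dimensional convolutions $K_h * k_{\alpha,\mu,r}$. For each factor, expanding $k_{\alpha,\mu,r}(x+hu)$ around $x$ and integrating against $K(u)$, the hypothesis $\int t^s K(t)\,dt = 0$ for $1\leq s\leq \beta$ annihilates every polynomial term, leaving only a remainder controlled via the $L$-Lipschitz bound on $k^{(\beta-1)}_{\alpha,\mu,r}$ and the finiteness of $\int |u|^\beta|K(u)|\,du$. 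The crucial arithmetic is that
\[
\bigl\|(1-\alpha)k + \tfrac{\alpha}{r}k(\tfrac{\cdot-\mu}{r})\bigr\|_{C^\beta} \;\lesssim\; (1-\alpha) + \tfrac{\alpha}{r^{\beta+1}},
\]
because differentiating the rescaled component $\beta$ times produces the factor $\alpha/r^{\beta+1}$; propagating this constant across the $d$ coordinates of the product density yields precisely the claimed prefactor $(1-\alpha + \alpha/r^{\beta+1})^d$.

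For the variance, I would apply \lemref{2002} to the class $\mathcal{F} = \{\bm{z}\mapsto \bm{K}_h(\bm{z}-\bm{x}) : \bm{x}\in\mathbb{R}^d\}$. This is a translation family of a product of one-dimensional bounded kernels, so it is a bounded VC class whose VC characteristics $A, v$ do not depend on $h,\alpha,\mu,r$. The envelope is $U = \|K\|_\infty^d/h^d$, and the uniform variance satisfies
\[
\sigma^2 \;=\; \sup_{\bm{x}}\Tp{}{\bm{K}_h(\bm{X}-\bm{x})} \;\leq\; \|\bm{k}_{\alpha,\mu,r}\|_\infty\,\frac{\|K\|_2^{2d}}{h^d} \;\leq\; (1-\alpha + \tfrac{\alpha}{r})^d\,\frac{\text{const}}{h^d},
\]
where the second inequality uses the product form of $\bm{k}_{\alpha,\mu,r}$ together with $\|k_{\alpha,\mu,r}\|_\infty \leq (1-\alpha + \alpha/r)\|k\|_\infty$. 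Choosing $t$ in \lemref{2002} proportional to $\sqrt{n\sigma^2 \log(1/\delta)}$ and dividing by $n$ delivers the stated variance contribution $\sqrt{C\log(1/\delta)(1-\alpha + \alpha/r)^d / (nh^d)}$. Combining with the bias bound and then setting the two terms equal in order to solve for $h$ gives the second, ``adaptive'' assertion of the proposition.

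The main obstacle will be tracking the mixture parameters $\alpha, r$ cleanly through \lemref{2002}: its side conditions require $\sqrt n \sigma \geq U\sqrt{\log(U/\sigma)}$ and a compatible range of $t$, and one has to check that these remain valid uniformly in $(\alpha,\mu,r)$ for the bandwidth regime where balancing is legitimate. A secondary technicality is that the VC characteristics of $\mathcal{F}$ must be extracted from the product structure without picking up hidden $h$- or $r$-dependence, which I would handle by noting that $\mathcal{F}$ arises as a $d$-fold product of translates of a single one-dimensional class and that VC dimension is stable under such products up to a universal combinatorial factor. Everything else in the argument is bookkeeping on constants to ensure they collapse into the announced $C$ and $c$.
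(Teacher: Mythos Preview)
You are not proving the stated lemma. \lemref{2002} is quoted verbatim from \citet{gine2002rates} (Corollary~2.2) and the paper does not attempt to reprove it; it is invoked as a black-box Talagrand-type concentration inequality. Your proposal instead sketches a proof of \propref{densest} (the uniform error bound for the product kernel density estimator), for which \lemref{2002} is merely one of the two ingredients. So as a proof of the lemma itself, the proposal is off-target: nothing you wrote establishes the exponential tail bound for a generic VC class $\mathcal{F}$, and the bias/Taylor discussion is entirely irrelevant to the lemma's content.

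If your actual intent was \propref{densest}, then your outline matches the paper's argument closely: the same bias--variance split \eqref{infdecomp}, the same application of \lemref{2002} to the translation class $\mathcal{F}_h=\{\bm{K}_h(\cdot-\bm{x})\}$ with $\sigma^2\asymp (1-\alpha+\alpha/r)^d/h^d$ and $U\asymp h^{-d}$, and the same coordinate-wise Taylor expansion using $K\in G(\beta)$ and the scaling $k_{\alpha,\mu,r}\in H(\beta,(1-\alpha+\alpha/r^{\beta+1})L)$. The only place you should tighten is the VC step: the paper argues compact support plus polynomial structure of $K$ gives a finite-dimensional vector space bound $v=\binom{d+\beta}{d}$, whereas you invoke a product-of-translates argument; either works, but be sure your ``stable under products'' claim actually delivers $h$-independent characteristics.
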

Denote the class of functions 
\begin{equation*}
    \mathcal{F}_h := \left \{ \bm{K}_h\left( \cdot-\bm{x}\right),\ \bm{x}\in \mathbb{R}^d\right\}.
\end{equation*}
Then we can write
\begin{equation*}
   \norm{\widehat{\bm{K}}_h - \EE{\widehat{\bm{K}}_h}}_{\infty} = \sup_{\bm{x}\in \mathbb{R}^d}\left|\widehat{\bm{K}}_h(\bm{x}) - \mathbb{E}\left[\widehat{\bm{K}}_h(\bm{x})\right]\right| = \frac{1}{n}\sup_{f \in \mathcal{F}_h} \left|\sum_{i=1}^n \Big(f(\bm{X}_i) - f(\bm{X}_1)\Big)\right|,
\end{equation*}
where $\bm{X}_1, \dots, \bm{X}_n \stackrel{iid}{\sim} \bm{k}_{\alpha,\mu,r}$. 

First we examine that $\mathcal{F}_h$ is VC class for $K \in G(\beta)$. Since $K$ is compact supported and polynomial, therefore $\mathcal{F}_h$ is a VC class with $v = {d+\beta \choose d}$, and some constant $A$.

Then we examine the variance and infinity norm condition in \eqref{varcond}: note
\begin{align*}
    \sup_{f\in \F} \text{Var}_P[f] & = \sup_{\bm{x} \in \mathbb{R}^d} \text{Var}_{\bm{u}\sim P}[\bm{K}_h(\bm{u}-\bm{x})] \leq \sup_{\bm{x} \in \mathbb{R}^d} \int_{\bm{u}\in \mathbb{R}^d} \bm{K}^2_h (\bm{u}-\bm{x})\bm{k}_{\alpha,\mu,r}(\bm{u})d\bm{u} \\
    & =\sup_{\bm{x} \in \mathbb{R}^d} \frac{1}{h^{2d}}\prod_{i=1}^d \int_{\mathbb{R}} K^2 (\frac{u_i-x_i}{h})k_{\alpha,\mu,r}(u_i)d u_i \\
    &
    \stackrel{\bm{u}=\bm{\bm{x}+h\bm{v}}}{=} \sup_{\bm{x} \in \mathbb{R}^d} \frac{1}{h^{d}}\prod_{i=1}^d \int_{\mathbb{R}} K^2 (v_i)k_{\alpha,\mu,r}(x_i+hv_i)d v_i 
    \\
    & \leq  \sup_{\bm{x} \in \mathbb{R}^d} \frac{1}{h^{d}}\prod_{i=1}^d \left(||k_{\alpha,\mu,r}||_{\infty}\int_{\mathbb{R}} K^2 (v_i)d v_i\right)  \\
    &= \left(\frac{ (1-\alpha + \frac{\alpha}{r})}{h}\right)^d \left(||k||_{\infty}\int_{\mathbb{R}} K^2 (x) d x\right)^d := C_1 \sigma^2,
\end{align*}
where $C_1 = \left(||k||_{\infty}\int_{\mathbb{R}} K^2 (x) d x\right)^d$ is constant only depends on $k$ and $K$. Also note
\begin{align*}
    \sup_{f\in\F} ||f||_{\infty} = \sup_{\bm{x},\bm{u}\in\mathbb{R}^d} ||\bm{K}_h(\bm{u}-\bm{x})||_{\infty} =  ||\bm{K}_h||_{\infty} = ||K_h||_{\infty}^d = \frac{||K||_{\infty}^d}{h^d}.
\end{align*}
Let $U = 2 C_2 (1-\alpha + \frac{\alpha}{r})^d  \frac{1}{h^d}$, with $C_2 = \norm{k}_{\infty}\norm{K}_{\infty}$, then it is easy to verify that
\begin{equation*}
    \sup_{f \in \F} ||f||_{\infty} < U, \quad 0 < \sigma^2 < U/2,
\end{equation*}
since $\int K^2 \leq \norm{K}_{\infty} \int K = \norm{K}_{\infty}$, and $\frac12\norm{k}_{\infty} < 1< 1-\alpha + \frac{\alpha}{r}$.

Since both $\sigma^2$ and $U$ do not depend on $n$, therefore condition $\sqrt{n}\sigma \geq U \sqrt{\log{(\frac{U}{\sigma})}}$ is satisfied for all $n$ bigger than finite $n_0:=\frac{U^2}{\sigma^2}\log{\frac{U}{\sigma}}$. Consider $0< \epsilon < C_0\frac{\sigma^2}{U}$, $\lambda = C_0$, and $n> (C_0^2\vee 1) n_0$, we can finally apply \lemref{2002} and get
\begin{align*}
    \PP{\sup_{\bm{x}\in \mathbb{R}^d}|\widehat{\bm{K}}_h - \EE{\widehat{\bm{K}}_h}|> \epsilon } & = \PP{\sup_{f \in \F}|\sum_{i=1}^n \left( f(X_i)-f(X_1)\right)|> \epsilon n } \\
    & \leq C \exp \left\{ -\frac{C_1\log{(1+\frac{C_0}{4C})}}{C_0 C} \frac{\epsilon^2 n h^d}{ (1-\alpha + \frac{\alpha}{r})^d}\right\}.
\end{align*}
Let the right hand side equals $\delta$, in turn we have, for $\delta$ small enough (solve the upper bound on $\epsilon$ to get the lower bound on $\delta$),
\begin{align*}
    \PP{\sup_{\bm{x}\in \mathbb{R}^d}|\widehat{\bm{K}}_h - \EE{\widehat{\bm{K}}_h}|> \sqrt{\frac{C_3 \log{(C/\delta)} (1-\alpha + \frac{\alpha}{r})^d}{n h^d}} } \leq \delta,
\end{align*}
where $C_3:= \sqrt{\frac{C_0 C}{C_1 \log{(1+\frac{C_0}{4C})}}}$.

\item[\textbf{Step 2.}]
For the second term in \eqref{infdecomp},
first we prove that if $k \in H(\beta,L)$, then $k_{\alpha,\mu,r} \in H(\beta, (1-\alpha + \frac{\alpha}{r^{\beta + 1}}) L )$. Note that for this argument, we are only considering the one-dimensional case, therefore
\begin{equation}\label{hbeta}
 k \in H(\beta, L) \Longleftrightarrow \sup_{x}\left|\frac{d^{\beta} k(x)}{d x^{\beta}} \right| \leq L. 
\end{equation}
Using the chain rule, we have
\begin{equation*}
  \frac{d^{\beta} k_{\alpha,\mu,r}(x)}{d x^{\beta}} = (1-\alpha)\frac{d^{\beta}  k(x)}{d x^{\beta}} + \frac{\alpha}{r}\frac{d^{\beta} k\left(\frac{x-\mu}{r}\right)}{d x^{\beta}}  = (1-\alpha)\frac{d^{\beta}  k(u)}{d u^{\beta}}\mid_{u=x} + \frac{\alpha}{r^{1+\beta}}\frac{d^{\beta} k\left(u\right)}{d u^{\beta}}\mid_{u = \frac{x-\mu}{r}}.
\end{equation*}
Therefore using \eqref{hbeta}, we have
\begin{equation*}
\sup_{x}\left|\frac{d^{\beta} k_{\alpha,\mu,r}(x)}{d x^{\beta}} \right| \leq \Big((1-\alpha) +  \frac{\alpha}{r^{1+\beta}}\Big) L,
\end{equation*}
that is $k_{\alpha, \mu, r} \in H\left(\beta, (1-\alpha + \frac{\alpha}{r^{\beta+1}}) L\right)$.

 Then we have
\begin{align*}
  \norm{\EE{\widehat{\bm{K}}_h} - \bm{k}_{\alpha,\mu,r}}_{\infty} & = \sup_{\bm{x}} |\int  \bm{K}_h\left(\norm{\bm{u}-\bm{x}}\right)\bm{k}_{\alpha,\mu,r}(\bm{u}) d \bm{u}  - \bm{k}_{\alpha,\mu,r}(\bm{x})|\\
  & = \sup_{\bm{x}} \prod_{i=1}^{d} \int   K_h\left(\norm{u_i-x_i}\right)\left(k_{\alpha,\mu,r}(u_i)  - k_{\alpha,\mu,r}(x_i) \right) d u_i \\
  & = \sup_{\bm{x}} \left| \prod_{i=1}^{d} \int K\left(|v_i|\right)\Big(k_{\alpha,\mu,r}(x_i + h v_i) - k_{\alpha,\mu,r}(x_i) \Big)  \right|\\
  &
  \leq \sup_{\bm{x}} \prod_{i=1}^{d} \Big\{\left| \int K\left(|v_i|\right)\Big(k_{\alpha,\mu,r}(x_i + h v_i) - k^{x_i,\beta}_{\alpha,\mu,r}(x_i+ h v_i) \Big) \right| \\&\quad \quad \quad \quad \quad \quad \quad + \left| \int K\left(|v_i|\right)\Big(k^{x_i,\beta}_{\alpha,\mu,r}(x_i + h v_i) - k_{\alpha,\mu,r}(x_i) \Big) \right|\Big\}\\
  & \stackrel{(i)}{=} \sup_{\bm{x}} \prod_{i=1}^{d} \left| \int K\left(|v_i|\right)\Big(k_{\alpha,\mu,r}(x_i + h v_i) - k^{x_i,\beta}_{\alpha,\mu,r}(x_i+ h v_i) \Big) \right|
  \\
  & \stackrel{(ii)}{\leq}  \sup_{\bm{x}} \prod_{i=1}^{d} \left| \int K\left(|v_i|\right)\Big((1-\alpha + \frac{\alpha}{ r^\beta})L h^\beta |v_i|^\beta \Big) \right|\\
   & = \left((1-\alpha + \frac{\alpha}{ r^{\beta+1}})L h^\beta \left| \int K\left(|v|\right) |v|^\beta\right|\right)^d : = (1-\alpha + \frac{\alpha}{ r^{\beta+1}})^d h^{d\beta} C_4,
\end{align*}
where $\cdot^{x,\beta}$ is the taylor expansion of $\cdot$ at $x$ to order $\beta-1$, and $C_4 := L^d\left| \int K\left(|v|\right) |v|^\beta\right|^d$.  Specifically, (i) is true since $k_{\alpha,\mu,r} \in H(\beta, (1-\alpha + \frac{\alpha}{r^{\beta}})L)$, and therefore $\Big(k^{x_i,\beta}_{\alpha,\mu,r}(x_i + h v_i) - k_{\alpha,\mu,r}(x_i) \Big)$ is a polynomial of degree $\beta-1$, then use the fact that $K \in G(\beta)$, we have the second term is zero; and (ii) is true from the fact that $k_{\alpha,\mu,r} \in H(\beta, (1-\alpha + \frac{\alpha}{r^{\beta+1}})L)$. \\
\end{itemize}

\noindent
Combining the above analysis, we have
\begin{align*}
    \PP{\sup_{\bm{x}\in \mathbb{R}^d}|\widehat{\bm{K}}_h - \bm{k}_{\alpha, \mu, r}|> \sqrt{\frac{C_3 \log{(1/\delta)} (1-\alpha + \frac{\alpha}{r})^d}{n h^d}} + C_4 \left(1-\alpha + \frac{\alpha}{r^{\beta+1}}\right)^d h^{d\beta} } \leq \delta,
\end{align*}
where $C_3, C_4$ are constants that do not depend on $h,\alpha,\mu,r$, but depend on $k, K, d, n, \beta, L$.
\end{proof}

\section{Robustness on the empirical level}\label{app:emprob}
\begin{definition}\label{def:empmodel}
(Empirical contamination model) Given $n$ bivariate samples $(x_1,y_1),\dots,(x_n,y_n)$, we consider the corresponding contaminated samples $\{(x_i', y_i')\}_{i=1}^n$ that satisfying
\begin{equation*}
    (x_i', y_i') = (x_i, y_i)\ \text{for} \ 1 \leq i \leq d_n;\quad \quad (x_i', y_i') = (x', y')\ \text{for}\ d_n + 1 \leq i \leq n,
\end{equation*}
where $1\leq d_n \ll n$ is the number of outliers.  
\end{definition}

Denote the empirical $\text{aLDG}_t$ under the contamination model \defref{empmodel} as $\widehat{\text{aLDG}}_t'$. We consider characterizing the following modified influence function (defined to adapt empirical setting)
\begin{equation*}
    \text{MIF}((x',y'), \widehat{\text{aLDG}}, F_n):= |\widehat{\text{aLDG}}_t' - \widehat{\text{aLDG}}_t|.
\end{equation*}

In \thmref{aLDGrobemp} we give an upper bound on MIF, which depends on the number of outliers $d_n$ and sample size $n$. 

% Specifically, 
% we need the contamination mass to decay with the sample size, at rate no faster than the estimation error rate $O(\eta_n \vee \frac{\log_n}{n})$, where the term $\eta_n$ arise from the density estimation error, and the term $\frac{\log{n}}{n}$ arose from the probability estimation error.
 
%On the other hand, in \thmref{aLDGrobemp2} we show that, if the number of outliers $d_n\equiv d$ for all $n$, then the corresponding modified influence function is unbounded.

\begin{theorem}\label{thm:aLDGrobemp}
Consider the contamination model in \defref{empmodel} with $d_n$ outliers, and the empirical $\widehat{\text{aLDG}}_t$ in \eqref{eq:aLDGemp} using boxcar kernel density estimator \eqref{densest} with bandwidth $h_n$. Assume the point mass $(x', y')$ is far away from all the $n$ uncontaminated samples:
\begin{equation*}
    (x', y'): \quad \min_{j\in [n]}|x_j-x'| > h_n,\ \min_{j\in [n]}|y_j-y'| > h_n.
\end{equation*}
Under the same conditions on the true data distribution as in \thmref{aLDGconsist}, 
then with high probability, we have
\begin{equation*}
    \text{MIF}((x',y'), \widehat{\text{aLDG}}, F_n) := \Big|\widehat{\text{aLDG}}'_t - \widehat{\text{aLDG}}_t\Big| < 2\epsilon_n + \eta_{n-d_n}+ 2\sqrt{\epsilon_n+\eta_{n-d_n}} \sqrt{\frac{\log{n}}{n}},
\end{equation*}
where $\epsilon_n:=\frac{d_n}{n}$ is the contamination mass, and $F_n$ denote the empirical distribution of the uncontaminated data.
\end{theorem}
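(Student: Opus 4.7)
The plan is to decompose $\widehat{\text{aLDG}}'_t - \widehat{\text{aLDG}}_t$ into an outlier contribution from the $d_n$ copies of $(x',y')$ and a clean contribution from the first $n-d_n$ samples, and then exploit the isolation condition $\min_{j}|x_j - x'| > h_n,\ \min_{j}|y_j - y'| > h_n$ together with the compact support of the boxcar kernel in \eqref{densest} to get an exact algebraic relation between the contaminated estimator $\widehat{T}'$ and a clean-only estimator $\widetilde{T}$ built from only the $n-d_n$ uncontaminated samples. The outlier piece is trivially bounded in absolute value by $d_n/n = \epsilon_n$, so the real work is controlling the clean piece.

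First I would observe that because no outlier lies within a kernel window of any clean sample, $\widehat{f}'_X(x_i) = (1-\epsilon_n)\widetilde{f}_X(x_i)$, $\widehat{f}'_Y(y_i) = (1-\epsilon_n)\widetilde{f}_Y(y_i)$, and $\widehat{f}'_{XY}(x_i, y_i) = (1-\epsilon_n)\widetilde{f}_{XY}(x_i, y_i)$ for every $i \leq n-d_n$. Substituting into the definition of $\widehat{T}'$ and simplifying produces the key identity
\begin{equation*}
\widehat{T}'(x_i, y_i) \;=\; \widetilde{T}(x_i, y_i) \;+\; \epsilon_n \sqrt{\widetilde{f}_X(x_i)\,\widetilde{f}_Y(y_i)}, \qquad i \leq n-d_n.
\end{equation*}
Applying \propref{densest} to both $\widehat{f}$ (built from $n$ clean samples) and $\widetilde{f}$ (built from $n-d_n$ clean samples), together with the lower and upper density bounds $c_{\min}, c_{\max}$, yields $\|\widehat{T} - \widetilde{T}\|_\infty \lesssim \eta_{n-d_n}$ with high probability, while the second term above is at most $\epsilon_n \sqrt{c_{\max}}$. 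Hence $|\widehat{T}'(x_i, y_i) - \widehat{T}(x_i, y_i)| \leq \delta_n := C(\epsilon_n + \eta_{n-d_n})$ uniformly over clean indices.

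Because $\mathbb{1}\{\widehat{T}'(x_i, y_i) \geq t\} - \mathbb{1}\{\widehat{T}(x_i, y_i) \geq t\}$ is non-zero only when $\widehat{T}(x_i, y_i)$ lies in the thin strip $[t-\delta_n, t+\delta_n]$, the clean contribution is bounded by $n^{-1}\,\#\{i \leq n-d_n : \widehat{T}(x_i, y_i) \in [t-\delta_n, t+\delta_n]\}$. Using the set-inclusion step from the proof of \thmref{aLDGconsist} to pass from $\widehat{T}$ to $T$ (at the cost of an extra $O(\eta_n)\leq O(\eta_{n-d_n})$ in the strip width), and then invoking the Lipschitz assumption $|\text{aLDG}_{t-\epsilon} - \text{aLDG}_t| \leq L\epsilon$, shows that the population probability of the strip is $O(\delta_n)$. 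Converting this to the empirical count via Bernstein's inequality (whose variance proxy equals the mean, itself of order $\delta_n$) yields a deviation of order $\sqrt{\delta_n \log n / n}$ at confidence $1-O(1/n)$. Summing the clean bound $O(\delta_n) + O(\sqrt{\delta_n \log n / n})$ and the trivial outlier bound $\epsilon_n$ reproduces $2\epsilon_n + \eta_{n-d_n} + 2\sqrt{\epsilon_n + \eta_{n-d_n}}\,\sqrt{\log n / n}$ after absorbing constants.

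The main obstacle is the variance-dependent concentration for the thin-strip indicator: a naive Hoeffding bound would contribute an $O(\sqrt{\log n / n})$ term independent of $\delta_n$ and spoil the coupled scaling inside the square root. Obtaining the claimed $\sqrt{(\epsilon_n + \eta_{n-d_n})\log n / n}$ requires Bernstein with the variance bounded by the population probability itself, which is precisely where the Lipschitz control on $\text{aLDG}_{\cdot}$ is used. The remaining technicalities, notably applying \propref{densest} simultaneously at the two sample sizes $n$ and $n-d_n$, are routine via a union bound.
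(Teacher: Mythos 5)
Your proposal is correct and follows essentially the same route as the paper's proof: bound the outlier indices trivially by $\epsilon_n$, show the contaminated and uncontaminated statistics at clean points differ uniformly by $O(\epsilon_n+\eta_{n-d_n})$, and control the resulting thin strip around $t$ via the set inclusion from \thmref{aLDGconsist}, the Lipschitz condition on $\text{aLDG}_t$, and a Bernstein (variance-equals-mean) bound to obtain the coupled $\sqrt{(\epsilon_n+\eta_{n-d_n})\log n/n}$ term. The only (cosmetic) difference is that you establish the uniform closeness through the exact identity $\widehat{T}'=\widetilde{T}+\epsilon_n\sqrt{\widetilde{f}_X\widetilde{f}_Y}$ relative to the clean-only estimator, whereas the paper writes $\widehat{f}'$ as $\widehat{f}$ minus an $\epsilon_n$-weighted kernel average over the removed samples and bounds that average by $c_{\max}+\eta_{d_n}$; your variant is, if anything, slightly cleaner since it avoids invoking a density estimate at the small sample size $d_n$.
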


\begin{proof}
Given $n$ bivariate samples $(x_1,y_1),\dots,(x_n,y_n)$, denote
\begin{align*}
    T(x,y)&:= \frac{f_{X,Y}(x, y) - f_{X}(x) f_{Y}(y) }{\sqrt{ f_{X}(x)f_{Y}(y)}},\quad T_i := T(x_i,y_i); \\
    \widehat{T}(x,y)&:= \frac{\widehat{f}_{X,Y}(x, y) - \widehat{f}_{X}(x) \widehat{f}_{Y}(y) }{\sqrt{ \widehat{f}_{X}(x)\widehat{f}_{Y}(y)}},\quad \widehat{T}_i := \widehat{T}(x_i,y_i)
 \end{align*}
where $\widehat{f}_{X,Y}, \widehat{f}_{X}, \widehat{f}_{Y}$ are some density estimator for $f_{X,Y}, f_X, f_Y$. Then the empirical aLDG can be written as
\begin{align*}
    \widehat{\text{aLDG}}_t & =  \frac{1}{n}\sum_{i=1}^n\ones\left\{ \widehat{T}_i \geq t \right\}, 
\end{align*}

Denote the density estimator under the contaminated model as $\widehat{f}_{X}', \widehat{f}_{Y}', \widehat{f}_{XY}'$, and the corresponding statistics as $\widehat{T}_i'$, and $\widehat{\text{aLDG}}_{t}'$. First we have 
\begin{align*}
    & \widehat{f}_{X}'(\cdot) = \frac{1}{n} \sum_{j=d_n+1}^n K_{h_n}(\cdot, x_j)  +  \epsilon_n K_{h_n}(\cdot,x'),\quad \widehat{f}_{Y}'(\cdot) = \frac{1}{n} \sum_{j=d_n+1}^n K_{h_n}(\cdot, Y_j)+  \epsilon_n K_{h_n}(\cdot,y'),\nonumber\\
    & \widehat{f}_{XY}'(\cdot,\cdot) = \frac{1}{n} \sum_{j=d_n+1}^n K_{h_n}(\cdot, x_j)K_{h_n}(\cdot, y_j) + \epsilon_n K_{h_n}(\cdot,x')K_{h_n}(\cdot,y').
\end{align*}
And consequently, 
%for $1 \leq i \leq d_n$, 
% \begin{align}\label{fhatcon}
%     & \widehat{f}_{X}'(x_i') = 
%       \epsilon_n / h_n  , \quad \widehat{f}_{Y}'(y_i) = \epsilon_n / h_n , \quad \widehat{f}_{XY}'(x_i,y_i) =  \epsilon_n / h_n^2;
% \end{align}
% while 
for $d_n+1 \leq i \leq n$,
\begin{align*}
    & \widehat{f}_{X}'(x_i') = 
      \widehat{f}_{X}(x_i) - \epsilon_n\frac{1}{d_n} \sum_{j=1}^{d_n} K_h(x_i,x_j),\quad \widehat{f}_{Y}'(y_i) = \widehat{f}_{Y}(y_i) - \epsilon_n\frac{1}{d_n} \sum_{j=1}^{d_n} K_h(y_i,y_j) \\
    & \widehat{f}_{XY}'(x_i,y_i) = \widehat{f}_{X,Y}(x_i, y_i) - \epsilon_n\frac{1}{d_n} \sum_{j=1}^{d_n}K_h(x_i,x_j) K_h(y_i,y_j).
\end{align*}
We assume that the true marginal densities $f_X$ and $f_Y$ are bounded by some constant $c_{\max}$ and the corresponding density estimation error is uniformly bounded by  $\eta_n$ with high probability. Denote
\begin{equation*}
    \widehat{c}_{\max} := \max\Big\{\sup_x\frac{1}{d_n} \sum_{j=1}^{d_n} K_h(x,x_j),\quad \sup_y \frac{1}{d_n} \sum_{j=1}^{d_n} K_h(y,y_j),\quad \sup_{x,y}\frac{1}{d_n} \sum_{j=1}^{d_n}K_h(x,x_j) K_h(y,y_j)\Big\},
\end{equation*}
then we have 
\begin{equation*}
    c_{\max}- \eta_{d_n} \leq \widehat{c}_{\max}  \leq c_{\max}+ \eta_{d_n},
\end{equation*}
with high probability. Consequently we have
\begin{equation*}
\max_{d_n+1 \leq i\leq n} |\widehat{T}_i' - \widehat{T}_i| \leq \epsilon_n \widehat{c}_{\max} \leq \epsilon_n (c_{\max} + \eta_{d_n})
\end{equation*}
with high probability.

Therefore, for all $i$, with high probability, we can conclude 
\begin{align*}
     & \widehat{T}_i \geq t+\epsilon_n (c_{\max} + \eta_{d_n})\quad \text{or}\quad \widehat{T}_i < t - \epsilon_n (c_{\max} + \eta_{d_n}) \\  & \quad \quad \quad \Longrightarrow  \ones\{\widehat{T}_i > t\} = \ones\{\widehat{T}_i' > t\}.
\end{align*}
This implies, with high probability,
\begin{align*}
   |\widehat{\text{aLDG}}_{t}' - \widehat{\text{aLDG}}_t| \ & \leq \epsilon_n + 
    \frac{1}{n}\sum_{i=d_n+1}^n \ones\{ t  - \epsilon_n (c_{\max}+ \eta_{d_n}) < \widehat{T}_i \leq  t + \epsilon_n (c_{\max}+ \eta_{d_n}) \}\\
    & = \epsilon_n + (1-\epsilon_n)\Big(\widehat{P}_{n-d_n}(\widehat{S}_{t-\epsilon_n (c_{\max}+ \eta_{d_n})}) - \widehat{P}_{n-d_n}(\widehat{S}_{t+\epsilon_n (c_{\max}+ \eta_{d_n})})\Big)\\
    & \leq \epsilon_n + (1-\epsilon_n)\Big(\widehat{P}_{n-d_n}(S_{t-\epsilon_n (c_{\max}+ \eta_{d_n})-c\eta_{n-d_n}})  - \widehat{P}_{n-d_n}(S_{t + \epsilon_n (c_{\max}+ \eta_{d_n})+ c\eta_{n-d_n}})\Big) \\
    & \leq \epsilon_n + (1-\epsilon_n) \Big( P(D_t) + |\widehat{P}_{n-d_n}(D_t)-P(D_t)|\Big),
\end{align*} 
where 
\begin{align*}
    & \widehat{S}_t:=\{(x,y): \widehat{T}> t\},\quad S_t:=\{(x,y): T> t\}, \\
    & D_t:= S_{t-\epsilon_n (c_{\max}+ \eta_n)-c\eta_{n-d_n}} \setminus S_{t+\epsilon_n (c_{\max}+ \eta_n)+c\eta_{n-d_n}}.
\end{align*}
\noindent
Since we assume that $\text{aLDG}_t$ is L-Lipschitz smooth around $t$, therefore
\begin{align*}
    P(D_t) \leq 2L(\epsilon_n (c_{\max}+ \eta_{d_n})+c\eta_{n-d_n})  \asymp O(\epsilon_n + \eta_{n-d_n})\to 0.
\end{align*}
Then using the Bernstein inequality for Bernoulli variable with mean $P(D_t) \ll 1$, with high probability we have
\begin{align*}
|\widehat{P}_{n-d_n}(D_t)-P(D_t)| & \leq \sqrt{ \frac{P(D_t)\log{(n-d_n)}}{n-d_n}}\\
&\stackrel{<}{\sim} \sqrt{ \frac{(\epsilon_n + \eta_{n-d_n}) \log{n}}{n-d_n}} = \sqrt{\frac{\epsilon_n + \eta_{n-d_n}}{1-\epsilon_n}} \sqrt{\frac{\log{n}}{n}}.
\end{align*}
Combine the above results, with high probability we have,
\begin{align*}
   |\widehat{\text{aLDG}}'_t - \widehat{\text{aLDG}}_t| & \stackrel{<}{\sim} \epsilon_n + (1-\epsilon_n)\Big(\epsilon_n + \eta_{n-d_n}+ \sqrt{\frac{\epsilon_n+\eta_{n-d_n}}{1-\epsilon_n}} \sqrt{\frac{\log{n}}{ n}} \Big) \\
   & < 2\epsilon_n + \eta_{n-d_n}+ 2\sqrt{\epsilon_n+\eta_{n-d_n}} \sqrt{\frac{\log{n}}{ n}}.
\end{align*}
 
\noindent
Finally, we can conclude, if the contamination mass $\epsilon_n \to 0$ as $n\to\infty$, and  satisfy $\epsilon_n = O(\eta_n \vee \frac{\log{n}}{n})$,  then with high probability, we have
\begin{equation*}
    \text{MIF}((x',y'), \widehat{\text{aLDG}}, F_n) < 2\epsilon_n + \eta_{n-d_n}+ 2\sqrt{\epsilon_n+\eta_{n-d_n}} \sqrt{\frac{\log{n}}{n}} \ll 1,
\end{equation*}
which goes to zero as $n$ goes to infinity.
\end{proof}

\section{Discussion on thresholding methods}\label{app:chooset}
Another intuitive way we found for selecting $t$ is based on the 
%using an observed fact about 
curve of $aLDG_t$ versus $t$. This function
%Specifically, from extensive examples of $\text{aLDG}_t$ versus $t$ curves, we observe that all these curves appear a similar trend: that is it 
tends to decrease rapidly near zero and then reaches an inflection point, after which it declines very slowly
%while suddenly very slowly at some point away from zero; in other words, it has an inflection point around which increment of $t$ suddenly unable to reduce further $\text{aLDG}_t$ much. 
(e.g., \figref{chooset}). %for explicit evidence (one curve for one random shuffle). 
We propose selecting the threshold $t$ to be 
the inflection point $t^\star$. Since the increment of $t$ around $t^\star$ is suddenly unable to reduce further $\text{aLDG}_t$ much, therefore, we expect this choice to strike a balance between robustness and sensitivity. To stabilize the estimation of such inflection point, we use the median of estimated inflection point from $\max\{\lfloor 1000/n \rfloor, 5\}$ different random shuffles as the final estimation.
%visually this makes sure that $aLDG$ is small enough under independence (also robust enough) and is still big enough under dependence (i.e., sensitive enough).  
We call this $t$ selection method the \emph{inflection point} method. 

In \figref{chooset} and \figref{aldgt}, we compare the above three proposed methods of selecting $t$. We use 18 different bivariate distributions to make the comparison (see \figref{data} for the explicit display of each distribution). We believe this series of distributions are representative enough as it covers cases from linear to nonlinear, monotone to nonmonotone, and also probabilistic mixtures. We find that the \emph{asymptotic norm} method is often too conservative given the small sample size. In contrast, the \emph{uniform error} and \emph{inflection point} method are often similar to each other. On the other hand, \figref{aldgt} shows that \emph{uniform error} method gives more stable value than the \emph{inflection point} method, while \emph{asymptotic norm} is the most stabilised among the three. Therefore in practice, we recommend people use \emph{uniform error} over \emph{asymptotic norm} when the sample size is not too big (e.g., no bigger than 200); while using \emph{asymptotic norm} when the sample size is big enough (e.g., bigger than 200) and the computation budget is limited. 

%way of selecting the cutoff $t$ as the uniform estimation error of $T$ 

\begin{figure}[H]
    \centering
    \includegraphics[width=1.1\linewidth]{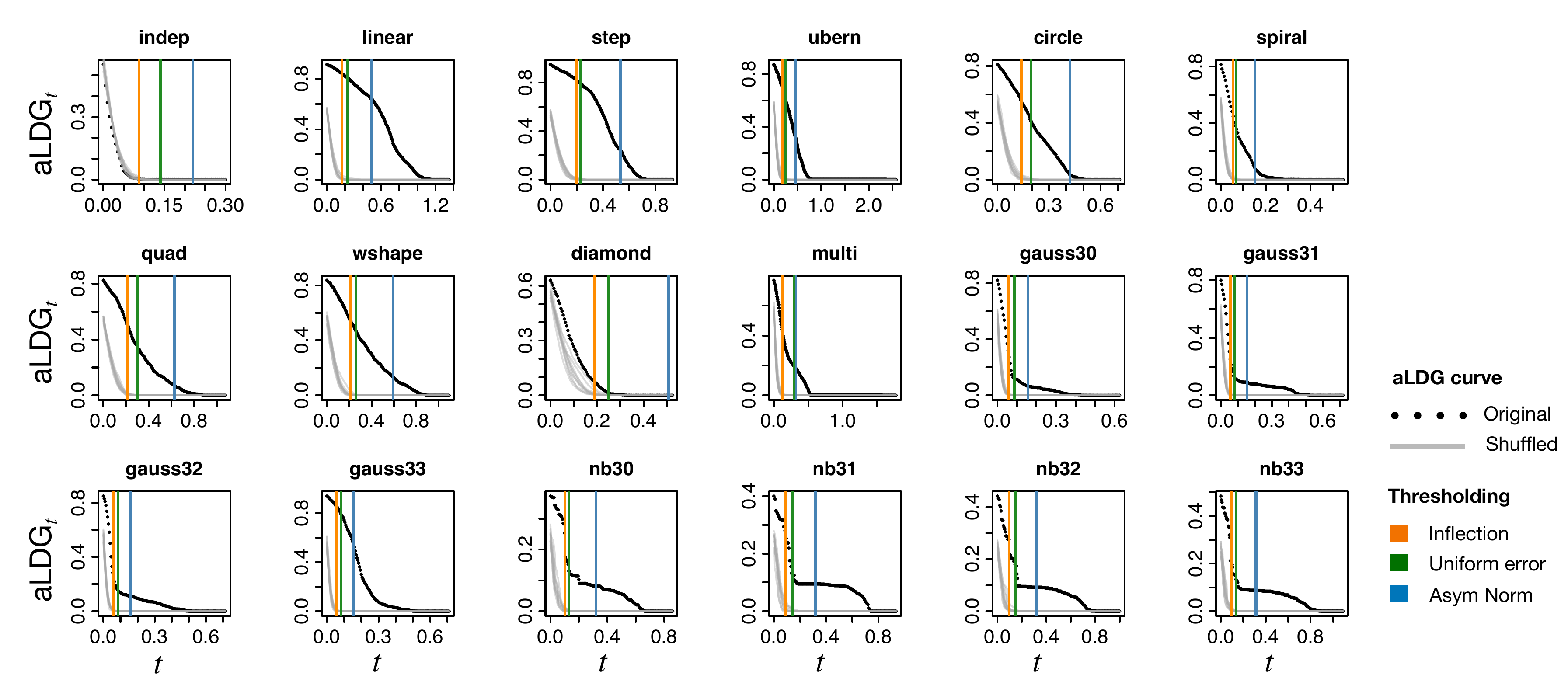}
    \caption{The curves of $\text{aLDG}_t$ versus $t$ estimated by 1000 samples. Each plot represents different bivariate distribution annotated by the subtitle (see \figref{data} for explicit display of each distribution). In each plot, the black dot curve represents the $\text{aLDG}_t$ estimated using original data samples, and the gray dot curves represent the $\text{aLDG}_t$ estimated using shuffled data samples (one curve each random shuffle, 20 curves in total); The vertical lines represent different choices of the thresholding: the orange one represents the \emph{inflect point} method; the green one represents the \emph{uniform error} method; and the blue one represent the \emph{asymptotic norm} one. }
    \label{fig:chooset}
\end{figure}

\begin{figure}[H]
    \centering
    \includegraphics[width=1.1\linewidth]{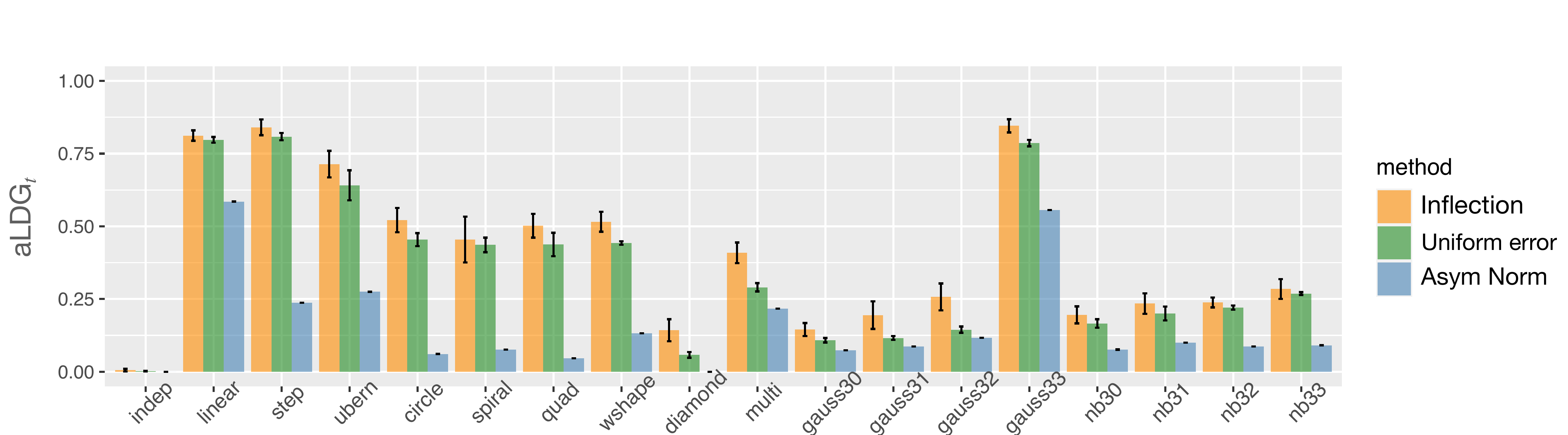}
    \caption{The value of $\text{aLDG}_t$ estimated by 1000 samples using different method of choosing $t$. The x-axis represents different bivariate distribution (see \figref{data} for explicit display of each distribution). For each distribution, we show the mean value of $\text{aLDG}_t$ over 20 trials with error bar, where different thresholding method is annotated by different color.}
    \label{fig:aldgt}
\end{figure}
\section{Detailed example for merits of thresholding}\label{app:thred}
Consider the following product kernel density mixture:
\begin{align*}
    & f_{X}(x) = \alpha k_{0,r}(x) + (1-\alpha)k_{0,1}(x), \quad  f_{Y}(y) = k_{0,r}(y) + (1-\alpha)k_{0,1}(y), \\
    & f_{XY}(x,y) = \alpha k_{0,r}(x)k_{0,r}(y) + (1-\alpha)k_{0,1}k_{0,1}, 
\end{align*}
where $\alpha \in (0,1)$, $0<r\leq 1$ and  $k_{\mu,r}(\cdot):=\frac{1}{r}k(\frac{\cdot-\mu}{r})$, with $k$ as the density of a one dimensional uniform distribution supported on $[-1,1]$. 

With $\alpha/r \to \infty$, $\alpha \to 0$ and $r \to 0$, we have
\begin{equation*}
   \EEst{\frac{f_{XY}(X,Y)-f_{X}(X)f_{Y}(Y)}{\sqrt{f_{X}(X)f_{Y}(Y)}}}{|X|<r\ \&\ |Y|<r} \approx \frac{\alpha(1-\alpha)/r^2}{\alpha/r} = \frac{1-\alpha}{r}
\end{equation*}
and
\begin{equation*}
    \EEst{\frac{f_{XY}(X,Y)-f_{X}(X)f_{Y}(Y)}{\sqrt{f_{X}(X)f_{Y}(Y)}}}{|X|>r \text{ or } |Y|>r} \approx -\frac{\alpha(1-\alpha)/r}{\alpha/r} = \alpha-1,
\end{equation*}
\begin{equation*}
    \EEst{\frac{f_{XY}(X,Y)-f_{X}(X)f_{Y}(Y)}{\sqrt{f_{X}(X)f_{Y}(Y)}}}{|X|>r\ \&\ |Y|>r} \approx -\frac{(1-\alpha)\alpha}{1-\alpha} = -\alpha,
\end{equation*}
therefore using the law of total expectation, we finally have 
\begin{equation}\label{nothred}
   \EE{\frac{f_{XY}(X,Y)-f_{X}(X)f_{Y}(Y)}{\sqrt{f_{X}(X)f_{Y}(Y)}}} \approx p_1 \frac{1-\alpha}{r} +p_2 (\alpha-1) + p_3\alpha, 
\end{equation}
where
\begin{align*}
   & p_1:=\PP{|X|\leq r\ \&\  |Y|\leq r}=\alpha+(1-\alpha)r^2, \\ &
    p_2:=\PP{(|X|>r\ \&\ |Y|\leq r)\ or\ (|X|\leq r\ \&\ |Y| > r)}= (1-\alpha)(2r-2r^2),\\ &
    p_3:= \PP{|X|>r \ \&\  |Y|>r} = (1-\alpha)(1-2r+r^2).
\end{align*}
Simplifying \eqref{nothred} we have,
\begin{equation*}
   \EE{\frac{f_{XY}(X,Y)-f_{X}(X)f_{Y}(Y)}{\sqrt{f_{X}(X)f_{Y}(Y)}}} \approx \frac{\alpha}{r} \to \infty.
\end{equation*}

\section{Additional plots}
\begin{figure}[H]
    \centering
    \includegraphics[width=\linewidth]{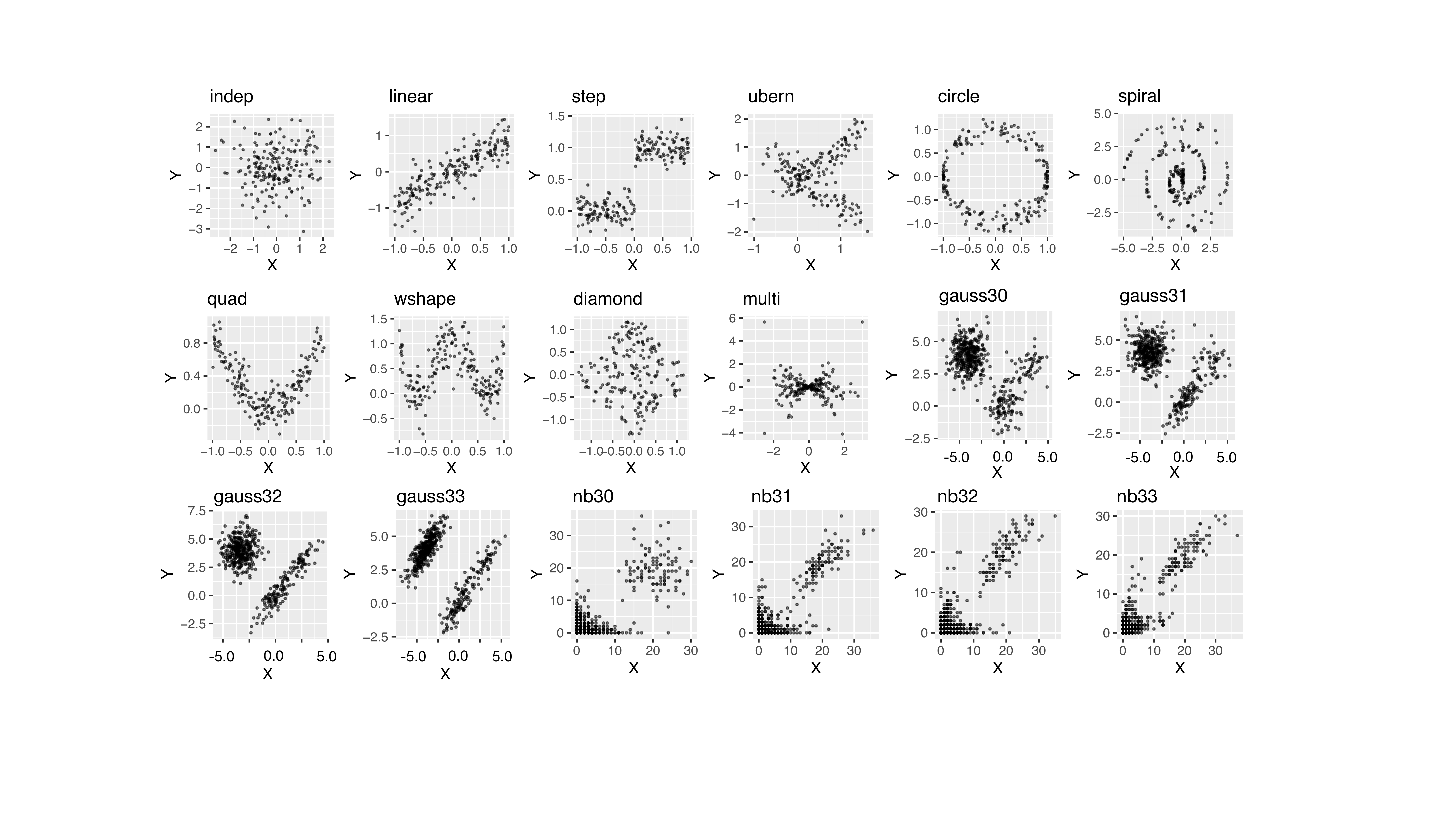}
    \caption{A summary of all the synthetic bivariate data distribution we considered in this paper. For each data distribution we plot the corresponding scatter plot using 1000 samples. We believe this series of distributions are representative enough as it covers cases from linear to nonlinear, monotone to nonmonotone, and also probabilistic mixture.}
    \label{fig:data}
\end{figure}

\begin{figure}[H]
    \centering
    \includegraphics[width=\linewidth]{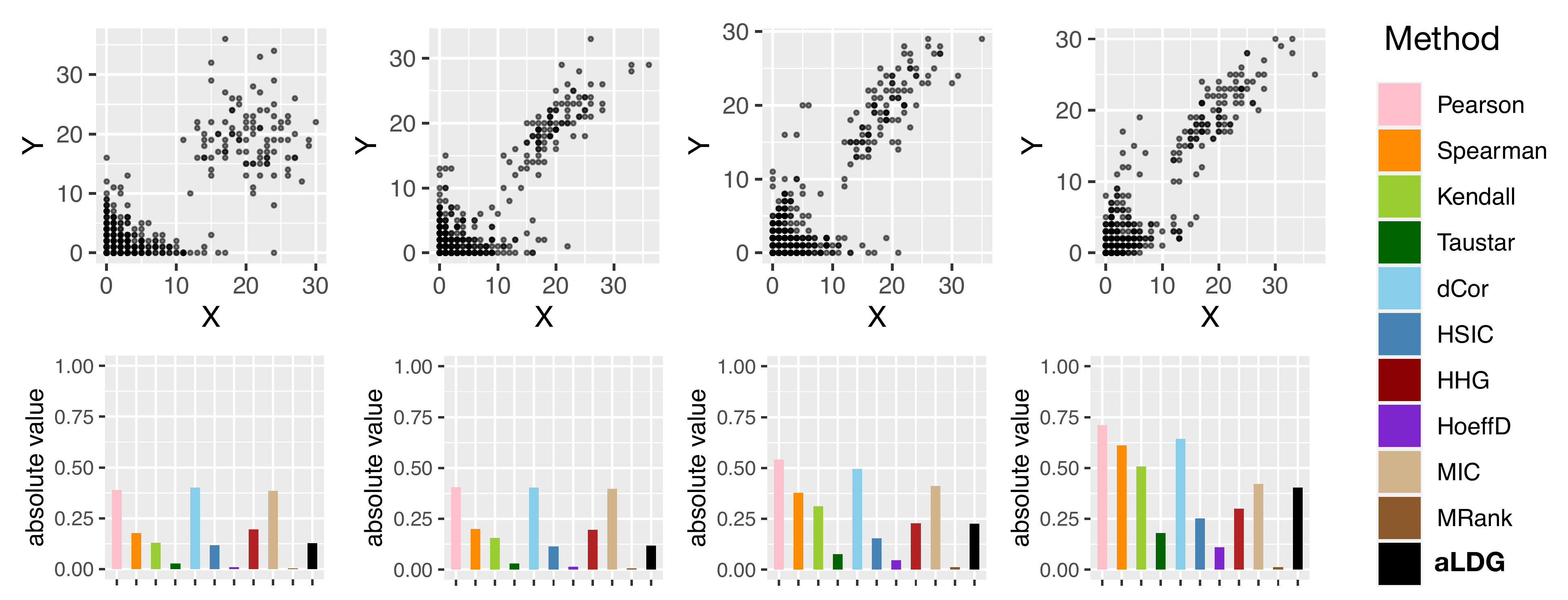}
    \caption{Empirical aLDG value for Negative Binomial mixture. The upper row shows the scatter plot, while the lower row shows the corresponding dependence level given by different measures. The data are generated as a three-component Negative Binomial mixture. From left to right there are 0,1,2,3 out of 3 components has correlation 0.8, while the rest has correlation 0, i.e. the dependence level increases from left to right.}
    \label{fig:nbmix}
\end{figure}

\begin{figure}[H]
    \centering
    \includegraphics[width=\linewidth]{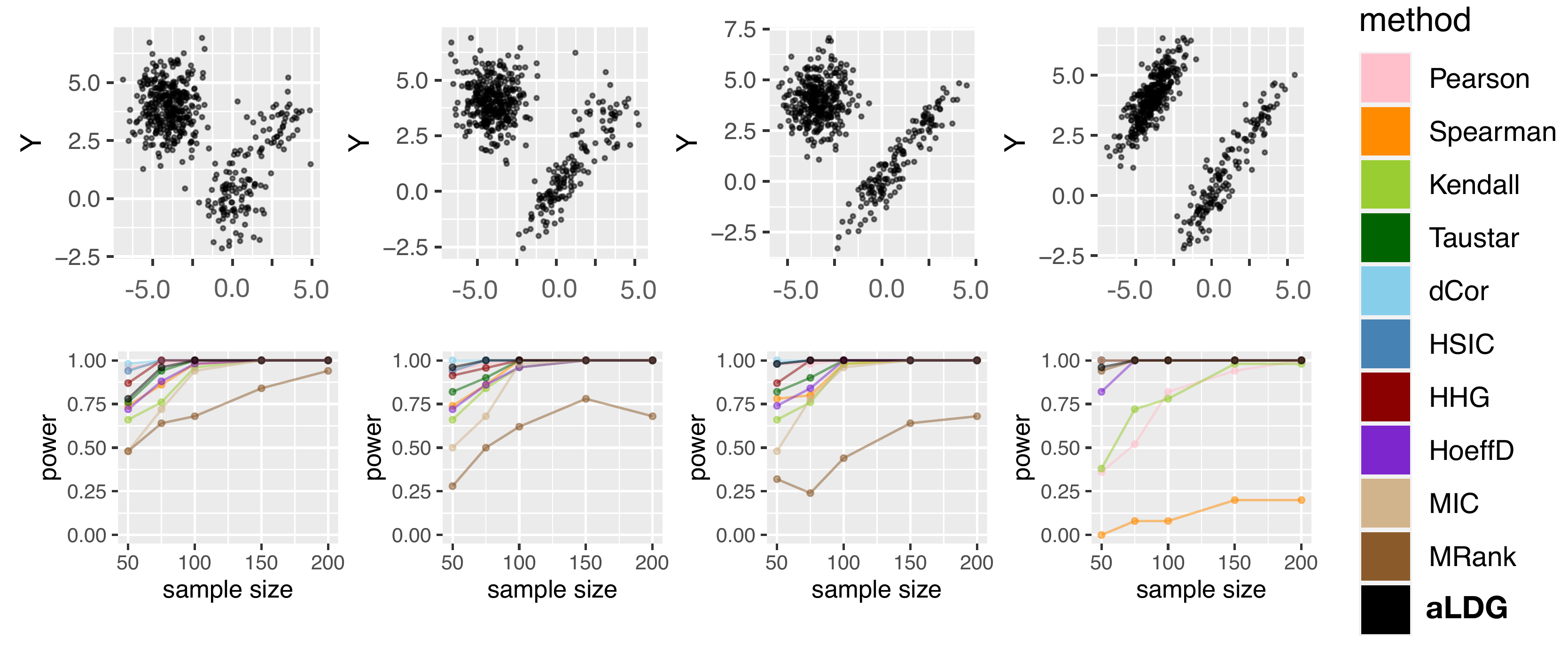}
    \caption{The empirical power of permutation test at level 0.05, based on different dependency measures under different Gaussian mixture distributions and sample sizes. The power is estimated using 50 independent trials. The data are generated as a three-component Gaussian mixture. From left to right the overall dependence level increases: specifically, 0,1,2 and 3 of the 3 components have correlation of 0.8, while the remaining components have no correlation. }
    \label{fig:gaussmixpower}
\end{figure}

\begin{figure}[H]
    \centering
    \includegraphics[width=\linewidth]{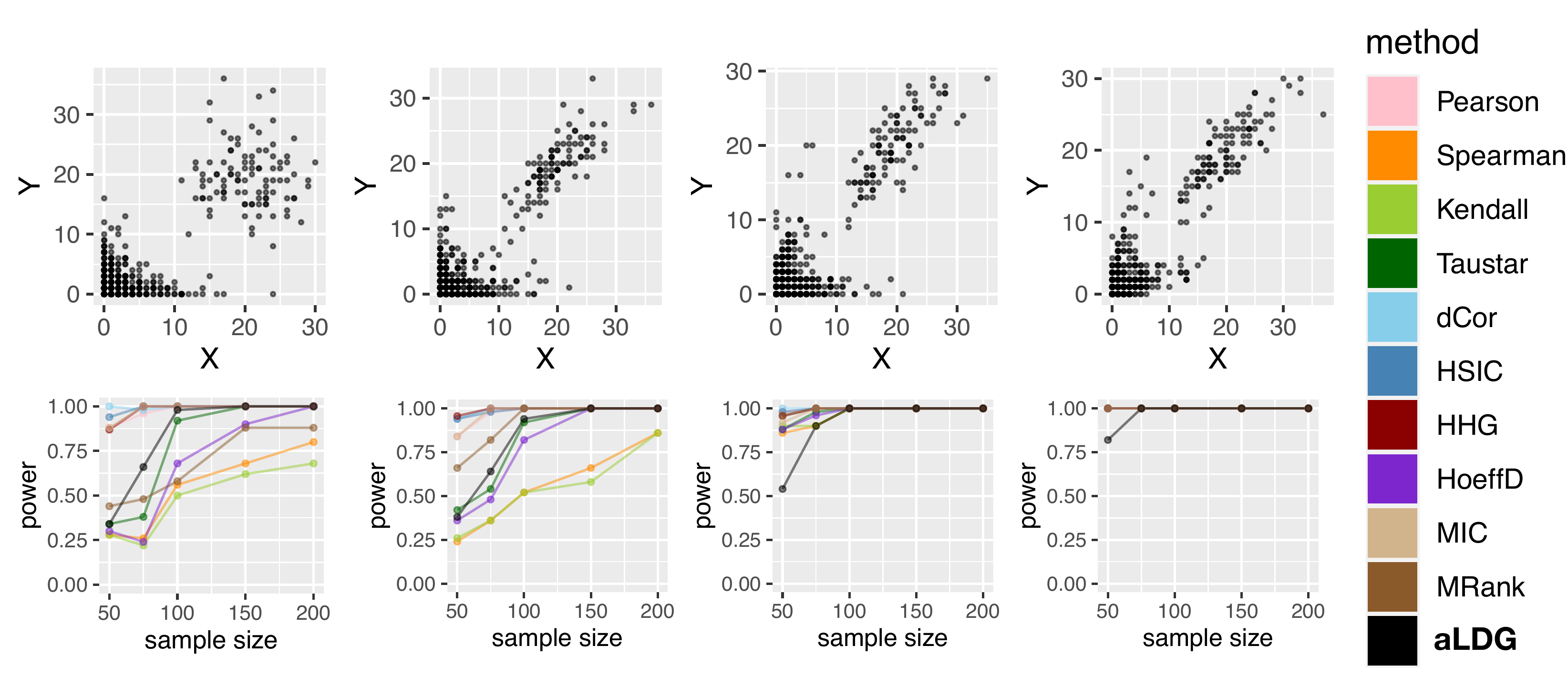}
    \caption{The empirical power of permutation test at level 0.05, based on different dependency measures under different negative binomial mixture distributions and sample sizes. The power is estimated using 50 independent trials. The data are generated as three-component Negative Binomial mixture. From left to right the overall dependence level increases: specifically, 0,1,2 and 3 of the 3 components have correlation of 0.8, while the remaining components have no correlation.}
    \label{fig:nbmixpower}
\end{figure}

\begin{figure}[H]
\centering
    \includegraphics[width=0.6\linewidth]{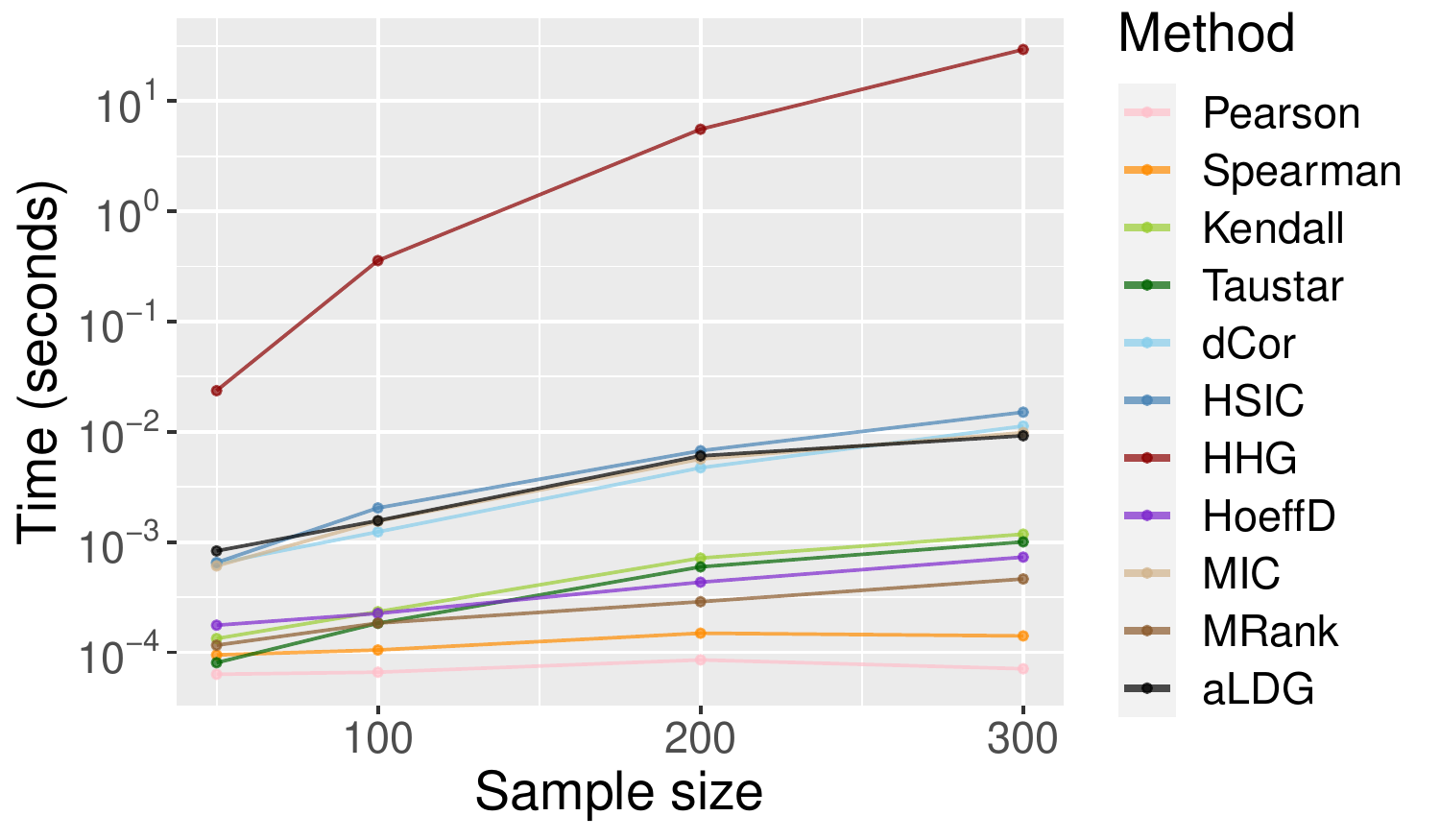}
    \caption{Computation time ($\log_{10}$ scaled) versus sample size for different methods, averaged over 10 independent trials. We can see that HHG is much slower than the others as sample size grows, while aLDG is roughly as fast as dCor, HSIC, MIC.}
    \label{fig:time}
\end{figure}

\end{appendices}

\end{document}